\global\long\def\R{\mathbb{R}}%
\global\long\def\R{\mathbb{R}}%
\global\long\def\Z{\mathbb{Z}}%
\global\long\def\eps{\varepsilon}%
\newcommand{\NP}{\mathrm{NP}} 
\newcommand{\DTIME}{\mathrm{DTIME}}
\DeclareMathOperator*{\argmax}{arg\;max}
\DeclareMathOperator{\poly}{poly}
\DeclareMathOperator{\Dyn}{Dyn}
\DeclareMathOperator{\ext}{ext}
\DeclareMathOperator{\interior}{int}
\DeclareMathOperator{\pr}{Pr}
\DeclareMathOperator{\OPT}{OPT} 
\DeclareMathOperator{\CHA}{ACC}
\newcommand{\cP}{\CHA}
\DeclareMathOperator{\ttop}{top}
\DeclareMathOperator{\bbot}{bot}
\newcommand{\calb}{\mathcal{B}} % Calligraphic B
\newcommand{\calc}{\mathcal{C}} % Calligraphic C
\newcommand{\cald}{\mathcal{D}} % Calligraphic D
\newcommand{\calg}{\mathcal{G}} % Calligraphic G
\newcommand{\cali}{\mathcal{I}} % Calligraphic I
\newcommand{\call}{\mathcal{L}} % Calligraphic L
\newcommand{\calo}{\mathcal{O}} % Calligraphic O
\newcommand{\calp}{\mathcal{P}} % Calligraphic P
\newcommand{\calr}{\mathcal{R}} % Calligraphic R
\newcommand{\calv}{\mathcal{V}} % Calligraphic V
\newcommand{\set}[1]{\left\{\, #1 \,\right\}} % Adaptive { }
\newcommand{\notename}[3]{{\textcolor{#1}{\footnotesize{\bf (#2:} {#3}{\bf ) }}}}
\newcommand{\anote}[1]{{\notename{orange}{Antoine}{#1}}}
\newcommand{\mnote}[1]{{\notename{blue}{Mathieu}{#1}}}
\newcommand{\enote}[1]{{\notename{magenta}{Edin}{#1}}}
  \newcommand{\ant}[1]{\textcolor{red}{#1}}
  \newcommand{\fab}[1]{\textcolor{orange}{#1}}
  \newcommand{\edi}[1]{\textcolor{blue}{#1}}
  \def\rem#1{{\marginpar{\raggedright\scriptsize #1}}}
  \newcommand{\anote}[1]{\rem{\textcolor{red}{$\bullet$ #1}}}
  \newcommand{\fabr}[1]{\rem{\textcolor{orange}{$\bullet$ #1}}}
  \newcommand{\enote}[1]{\rem{\textcolor{blue}{$\bullet$ #1}}}
  \newcommand{\mnote}[1]{\rem{\textcolor{magenta}{$\bullet$ #1}}}
  \newcommand{\ant}[1]{#1}
  \newcommand{\fab}[1]{#1}
  \newcommand{\edi}[1]{#1}
  \newcommand{\anote}[1]{}
  \newcommand{\fabr}[1]{}
  \newcommand{\enote}[1]{}
  \newcommand{\mnote}[1]{}
\title{Approximating the Maximum Independent Set of Convex Polygons with a Bounded Number of Directions}
\titlerunning{Approximating MISP with a bounded number of directions}
\author{Fabrizio Grandoni}{IDSIA, USI-SUPSI, Switzerland \and \url{https://people.idsia.ch/~grandoni}}{fabrizio@idsia.ch}{}{}
\author{Edin Husi\'{c}}{IDSIA, USI-SUPSI, Switzerland \and \url{https://zhero9.github.io}}{edin.husic@idsia.ch}{}{}
\author{Mathieu Mari}{LIRMM, University of Montpellier, CNRS, Montpellier, France \and \url{https://mimuw.edu.pl/~mmari}}{mathieu.mari@lirmm.fr}{}{}
\author{Antoine Tinguely}{IDSIA, USI-SUPSI, Switzerland}{antoine.tinguely@idsia.ch}{}{}
\authorrunning{F. Grandoni, E. Husi\'{c}, M. Mari and A. Tinguely}
\keywords{Approximation algorithms, independent set, polygons} %TODO mandatory; please add comma-separated list of keywords
\begin{document}
 
\maketitle

\begin{abstract}
    In the maximum independent set of convex polygons problem, we are given a set of $n$ convex polygons in the plane with the objective of selecting a maximum {cardinality} subset of non-overlapping polygons. {Here we study a special case of the problem where the edges of the polygons can take at most $d$ fixed directions.} 
    We present an $8d/3$-approximation algorithm for this 
    problem running in time $O((nd)^{O(d4^d)})$. The previous-best polynomial-time approximation (for constant $d$) was a classical $n^\eps$ approximation by Fox and Pach [SODA'11] \ant{that has recently been improved to a $\OPT^{\varepsilon}$-approximation algorithm by Cslovjecsek, Pilipczuk and W{\k{e}}grzycki [SODA '24], which also extends to an arbitrary set of convex polygons.}
    
    Our result builds on, and generalizes the recent constant factor approximation algorithms for the maximum independent set of axis-parallel rectangles problem {(which is a special case of our problem with $d=2$)} by Mitchell~{[FOCS'21]} and G\' alvez, Khan, Mari, M\" omke, Reddy, and Wiese~{[SODA'22]}.
\end{abstract}

\newpage
\setcounter{page}{1} 

\section{Introduction}
\label{sec:introduction}

The Maximum Independent Set of Convex Polygons problem (MISP) is a natural geometric packing 
problem with many applications in map labeling~\cite{de2000applications,verweij1999optimisation}, cellular networks~\cite{malesinska1997graph}, unsplittable flow~\cite{bonsma2014constant}, chip manufacturing~\cite{hochbaum1985approximation}, or data mining~\cite{fukuda2001data,lent1997clustering}. 
Given a set of $n$ convex polygons in the plane, the goal is to select a maximum number of them  such that the polygons are pairwise non-overlapping.

MISP is NP-hard~\cite{fowler1981optimal,imai1983finding}, hence it makes sense to design approximation algorithms for it. \ant{Disappointingly, the best (polynomial-time) approximation ratio for MISP (more precisely for $k$-intersecting curves) has been $n^{\eps}$~\cite{fox2011computing}, for any fixed constant $\eps>0$. This ratio has recently been improved to $\OPT^{\varepsilon}$~\cite{cslovjecsek2024polynomial}.} 
\enote{Mention that up to a constant, the most difficult case is the MIS of lines?}

\smallskip\noindent \textbf{Approximation Schemes.} Interestingly, there is a quasi-polynomial time approximation scheme (QPTAS) for MISP~\cite{AdamaszekHW19} \ant{(if the polygons have at most quasi-polynomially many vertices in total)}.\fabr{Maybe ~\cite{AdamaszekHW19} works only for a constant number of edges per polygon: this is not clear here \\
A: It works if there are at most quasi-polynomially many vertices in total. But somehow it's not super important, how do you encode quasi-polynomially many vertices?}
\enote{Copied from~\cite{AdamaszekHW19}: "In particular, our algorithm works for axis-parallel rectangles, line segments, and arbitrary polygons." So, they don't even need convex.}Thus, the problem is \emph{not} APX-Hard, assuming $\NP\nsubseteq \DTIME(2^{\textrm{polylog}(n)})$, suggesting 
that it should be possible to obtain a polynomial time approximation scheme (PTAS) for the problem.

If we assume that we are allowed to shrink the polygons by a factor $1-\delta$ for an arbitrarily small constant $\delta$, then there is a PTAS for the problem~\cite{wiese2018independent}. Note that here the output is compared to the optimal solution without shrinking.

When the input polygons are fat, e.g.,  regular polygons, 
then PTASes are known~\cite{chan2003polynomial,erlebach2005polynomial}. 

\smallskip\noindent \textbf{Axis-parallel rectangles.}
A prominent special case of MISP that has attracted a lot of attention over the years is the maximum independent set of axis-parallel rectangles (MISR), 
where all the polygons are rectangles with their edges parallel with the axes. 
An $O(\log n)$ approximation for MISR was given in \cite{khanna1998approximating,nielsen2000fast}. This was slightly improved to $O(\log n/\log\log n)$ in \cite{ChanHarPeled2012}, and substantially improved to $O(\log\log n)$ in \cite{CC2009}. In a recent breakthrough result,
%\anote{page 1, line 35: while the very original release of [34] had a factor of 10 and a lengthier case analysis, the updated [35] was a factor 4, and considerably simpler (and even leads to a 3+eps factor, with some additional details that are no longer relevant, given the 2+eps factor in [21])}
\ant{Mitchell~\cite{mitchell2021approximating} presented the first constant factor approximation algorithm with approximation ratio $10$, and later $3+\eps$ in an updated version \cite{mitchell2022approximating} with a considerably shorter case analysis.}
Subsequently, his approach was simplified and improved to a $(2+\eps)$-approximation algorithm by 
G\' alvez, Khan, Mari, M\" omke, Reddy,
and Wiese~\cite{galvezSODA, galvez20212+}.
These approaches rely on a dynamic program that considers all the partitions of a bounding box containing the instance into a number of containers with constant complexity (constant number of line segments).

\smallskip \noindent \textbf{Our contribution.}
\fab{With the goal of better understanding the approximability of MISP, in this paper, we consider the following natural special case of MISP: $d$-MISP is the special case of MISP where the edges of the input polygons are parallel to a given set $\cald$ of $d=|\cald|$ directions. \edi{Notice that MISR is equivalent to $2$-MISP.} 
%with the extra constraint that the 2 directions in $\cald$ are orthogonal. 
Our main result is a constant approximation for $d$-MISP when $d$ is a constant.}
\enote{Orthogonality does not matter for $d=2$, it should be wlog. \fab{I remember that we spoke about it, but we had some doubts}}

\begin{theorem}
\label{thm:main}
There exists an $8d/3$-approximation algorithm for \fab{$d$-MISP} running in time \ant{$O((nd)^{O(d4^d)})$}.
\end{theorem}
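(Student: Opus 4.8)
The strategy follows the template established for MISR (the case $d=2$) by Mitchell~\cite{mitchell2022approximating} and G\'alvez et al.~\cite{galvez20212+}: design a dynamic program that searches over a rich family of recursive partitions of a bounding region into \emph{containers} of bounded complexity, and complement it with a structural theorem asserting that some partition in this family isolates at least a $\tfrac{3}{8d}$ fraction of an optimal solution $\OPT$. The algorithm returns the best solution compatible with any partition it examines, which is at least as good as the one guaranteed by the structural theorem, yielding the ratio $8d/3$. The two new ingredients relative to the rectangle case are (i) the right notion of ``container'' when the available cutting directions form an arbitrary set $\cald$ of $d$ lines, and (ii) a generalization of the corner-cutting/charging argument from $4$-sided axis-parallel rectangles to convex polygons with up to $2d$ edges.

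\emph{The dynamic program.} Fix a bounding region $B$ large enough to contain all input polygons whose boundary consists of segments parallel to directions in $\cald$. A \emph{container} is a connected region obtained by repeatedly cutting $B$ along segments parallel to a direction of $\cald$ whose endpoints lie at intersection points of such lines through polygon vertices; we retain only containers bounded by at most $f(d)$ such segments, where one can take $f(d)=O(d4^{d})$. Since each segment is specified by $O(1)$ polygon vertices and one of $d$ directions, the number of containers is $(nd)^{O(d4^{d})}$. For a container $C$, let $\DP(C)$ be the maximum, over all valid moves, of the resulting number of selected polygons, where a move either (a) declares $C$ ``simple'' and returns the size of a maximum pairwise-disjoint subset among all sets of at most a constant number of input polygons lying inside $C$ (computed by brute force), or (b) guesses one input polygon $P$ lying in $C$, guesses a partition of $C\setminus\interior(P)$ into at most $4^{d}$ sub-containers $C_{1},\dots,C_{m}$ obtained from the extensions of the (at most $2d$) edges of $P$, and returns $1+\sum_{i}\DP(C_{i})$. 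Such a partition is described by $P$ together with $O(d)$ extra vertices recording how far each edge of $P$ is extended, so each container has $(nd)^{O(d)}$ moves and the whole table is filled in time $(nd)^{O(d4^{d})}$. Correctness is immediate: the set of polygons realized by any move sequence is pairwise non-overlapping, being drawn one per nested container (or from disjoint simple containers).

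\emph{The structural theorem (main obstacle).} The crux is to show that, starting from $B$ and from $\OPT$, there is a recursive partition in the DP's search space in which at least $\tfrac{3}{8d}|\OPT|$ polygons of $\OPT$ are never crossed by a cut; such a polygon is fully contained in a leaf container and is therefore counted by a ``simple'' move. The recursive construction mirrors case (b): in a container $C$ holding more than a constant number of polygons of $\OPT$, select $P\in\OPT\cap C$ by an extremal rule (generalizing the choice of the rectangle producing a maximal empty corner), extend each of $P$'s $\le 2d$ edges outward along its $\cald$-direction, stopping as soon as it would enter another polygon of $\OPT$ or reach $\partial C$, and let the $\le 4^{d}$ resulting regions be the sub-containers; every polygon of $\OPT\cap C$ not crossed by one of these $\le 2d$ extended edges descends into exactly one sub-container unharmed. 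The difficulty is to bound the total number of distinct polygons of $\OPT$ ever crossed by an extension: in the rectangle case the charging uses that a rectangle has only four sides and that the extremal choice supplies each crossing with a private geometric witness; generalizing this amortized/potential-function analysis to convex polygons with $2d$ edges is where the factor $8d/3$ enters, with a constant-factor ($\sim 4/3$) loss per edge-direction. One must also verify that the extremal rule keeps every created sub-container bounded by $f(d)=O(d4^{d})$ segments, so that it remains in the DP's search space. Combining the structural theorem with the DP gives $|\mathrm{ALG}|\ge\DP(B)\ge\tfrac{3}{8d}|\OPT|$, and the running-time estimate above completes the proof.
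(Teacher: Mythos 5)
Your dynamic program is plausible as stated (the counting of containers and moves is fine, and the easy direction -- that any move sequence yields an independent set -- is correct), but the heart of the theorem is the structural statement that some partition in the DP's family preserves a $\tfrac{3}{8d}$ fraction of $\OPT$, and that is exactly what your proposal does not prove. You explicitly flag it as ``the crux'' and then assert that the factor $8d/3$ emerges from ``a constant-factor ($\sim 4/3$) loss per edge-direction'' in some amortized analysis; no such analysis is given, and it is not clear one exists for your scheme. The difficulty is concrete: in your recursion, the cuts are the $\le 2d$ edge-extensions of a guessed polygon, so polygons of $\OPT$ are destroyed in all $d$ directions simultaneously at every level, and you supply no injective charging that maps each destroyed polygon to a surviving one, nor any invariant guaranteeing that a polygon you would like to charge is not itself destroyed by a later cut. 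Your analogy to the rectangle case is also off: Mitchell and G\'alvez et al.\ do not cut around an extremal rectangle with a potential-function argument; they cut along a \emph{single} direction and route the rest of the separating curve along boundaries of $\OPT$ rectangles, precisely so that all losses are concentrated on one axis and can be charged to ``protected'' rectangles that are provably never touched again.

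The paper's proof fills this gap with machinery that has no counterpart in your proposal: (i) a \emph{maximal extension} of $\OPT$ on an iterated grid, ensuring every non-degenerate edge touches another optimal polygon or the bounding box; (ii) a pigeonhole over $4d$ ``charging options'' showing that for some single direction/endpoint choice at least $\tfrac{3}{4d}|\OPT|$ polygons are \emph{accountable}, i.e.\ ``see'' another optimal polygon (this is where the $3$ and the $d$ in $8d/3$ actually come from, plus a final factor $2$ from a lost-versus-not-lost dichotomy); (iii) a partitioning lemma stating that any structured container can be bipartitioned by a curve consisting of \emph{fences} (which never meet $\OPT$) plus one vertical segment, so that only that one segment can destroy polygons and previously protected polygons stay protected; and (iv) an injective charging of each lost accountable polygon to the polygon it sees, which then survives to a leaf. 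Without an analogue of (ii)--(iv) for your ``cut around a guessed polygon'' recursion -- in particular, without any mechanism forcing losses into a single direction or any witness argument replacing it -- the claimed bound $\DP(B)\ge\tfrac{3}{8d}|\OPT|$ is unsupported, so the proposal does not constitute a proof of the theorem.
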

\fab{Our result builds on the approaches in \cite{galvezSODA, galvez20212+,mitchell2022approximating}, however we have to face several additional complications. In particular, already for $d=3$ the algorithm and its analysis deviates substantially from the known (polynomial-time) results in the literature about axis-aligned rectangles. An overview of our approach is given in Section \ref{sec:approach}.}

\smallskip \noindent \textbf{Related Work.}
One can consider a natural weighted version of MISP,\fabr{Notice that MISP contains "problem", so we should not say MISP problem or so} where each convex polygon has a positive weight, and the goal is to find an independent set of maximum total weight. The weighted version of MISR was studied in the literature, and the current-best polynomial time approximation factor is $O(\log\log n)$ \cite{ChalermsookW21}. 
We remark that our approach, likewise the approaches in \cite{galvezSODA, galvez20212+, mitchell2021approximating}, does not seem to extend to the weighted case. In particular, finding a constant approximation for weighted MISR remains a challenging open problem. We remark that the QPTAS in \cite{AdamaszekHW19} extends to the weighted case, hence suggesting that the weighted version of MISP might also admit a PTAS.
%Indeed, the QPTAS in \cite{AdamaszekHW19} suggests that this problem 

MISR was also studied in terms of parameterized algorithms. Marx \cite{marx2005efficient} proved that the problem is W[1]-hard, which rules out the existence of an EPTAS. A parameterized approximation scheme for the problem is given in \cite{GKW19}.

A rectangle packing problem related to MISR is the 2D Knapsack problem. Here we are given an axis-parallel square (the \emph{knapsack}) and a collection of axis-parallel rectangles. 
The goal is to pack a maximum cardinality (or weight) subset of rectangles in the knapsack (without rotations). 2D Knapsack admits a QPTAS \cite{AW15}
\anote{if data is quasi-polynomially bounded}
and a few constant approximation algorithms are known \cite{GGIHKW21,GalSocg21,jansen2004rectangle}. Here as well, finding a PTAS is a challenging open problem.

Bonsma et al. \cite{bonsma2014constant} established an intriguing connection between MISR and the Unsplittable Flow on a Path problem. A PTAS for the latter problem was recently obtained \cite{GMW22}, closing a very long line of research (see, e.g., \cite{AGLW14,BCES06,BFKS14,bonsma2014constant,GMW22soda,GMWZ18}).

% Preliminary overview of notation
%\input{notation_list.tex}

% Introduces notation, recursion, DP

\section{Preliminaries}
\label{sec:preliminaries}
In this paper, a \ant{(possibly closed)} \emph{curve} is always assumed to be a polygonal chain (or a singleton \ant{point}) and a \emph{polygon $S$} is a bounded set with non-empty interior $\interior(S)$
and whose  \emph{boundary $\partial S$} is a \ant{closed} curve. We denote the \emph{closure of $S$} as $\bar{S}$, so $\bar S = \partial S \cup \interior(S)$.
%\anote{page 2, lines 72-74: the polygon chain might be a closed cycle (line 72) and is definitely a closed cylce in lines 73-74, when speaking of $\partial S$.}
We say that two polygons $S, T$ (with non-empty interior) \emph{touch} if $\interior(S) \cap \interior(T) = \emptyset$ but $\partial S \cap \partial T \neq \emptyset$ and \emph{intersect} if $\interior(S) \cap \interior(T) \neq \emptyset$.
A curve $f$ \emph{touches} $S$ if $f \cap \interior(S) = \emptyset$ but $f \cap \partial S \neq \emptyset$.

A line segment or curve is called \emph{degenerate} if it is a singleton \ant{point}.\fabr{"singleton" means a point? \ant{yes} Is "edge" just a segment? \ant{yes, an edge is a line segment on the boundary of a polygon (at least that's how we use it)}}
%\anote{line 77: "it is a singleton." -->"it is a singleton points."}
A line segment or curve is assumed to be non-degenerate unless we explicitly state the opposite.
For an (oriented) line segment $e=\overline{ww'}$
(resp. curve $\gamma = w_1w_2 \cdots w_k$) we define the \emph{head of $e$ (of $\gamma$)} as $h(e) = w'$ ($h(\gamma) = w_k$) and the \emph{tail of $e$ (of $\gamma$)} as $t(e) = w$ ($t(\gamma) = w_1$) and the \emph{interior of $e$ (of $\gamma$)} as $\interior(e) = e \setminus\{h(e), t(e)\}$ ($\interior(\gamma) = \gamma \setminus\{h(\gamma), t(\gamma)\}$).
For a degenerate line segment (resp.\ curve), 
the head and the tail coincide with the line segment (resp.\ curve).

For a vector $v=(x, y)$, let $v^\perp \coloneqq (y, -x)$ (which is $v$ rotated clockwise orthogonally).
\ant{For a positive integer $k$, let $[k] \coloneqq \{1, \dots, k\}$.}

\begin{figure}%
    \centering
    \includegraphics[width=0.6\textwidth]{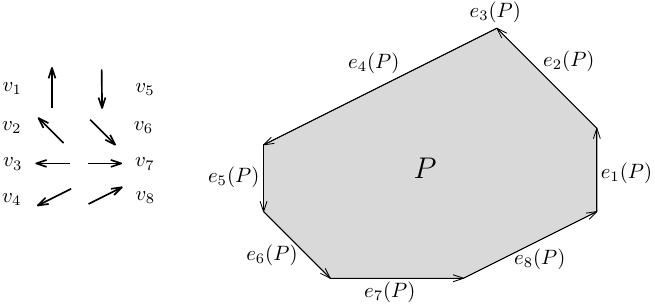}%
    \caption{A convex polygon in $4$ directions. The edge $e_3(P)$ is degenerate.}%
    \label{fig:convexPolygon}%
\end{figure}

\noindent\textbf{Input.} For a fixed positive integer $d$, the input of our problem is given by a set of (pairwise linearly independent) \emph{$d$ direction defining vectors} $\cald = \{v_1, \dots, v_{d}\} \subseteq \Z^2$ and a set $\cali$ of $n$ convex polygons with edges oriented along the directions given in $\cald$.
\ant{Polygons of this type are sometimes called \emph{$d$-discrete orientation polytopes ($d$-DOPs)} \cite{klosowski1998efficient}. In this paper, we will more casually refer to them as (input) polygons; the significance of the word ``polygon'' will be clear from context.}
Without loss of generality, assume $v_1 = (0,1)$ and that $v_2, \dots, v_d$ point to the left and are ordered by decreasing slope, see Figure~\ref{fig:convexPolygon}.
For $i \in \{d+1, \dots, 2d\}$, let $v_i \coloneqq -v_{i-d}$. \edi{The indices of the directions are counted modulo $2d$, i.e., $i = i + 2d = i-2d$.}
More explicitly, each polygon $P \in \cali$ is \ant{encoded by $2d$ integers
$p_1(P), \dots, p_{2d}(P)$} as $P = \{x\in \R^2:\ x^\intercal v^\perp_i < p_i(P),\ \forall i \in [2d]\}$; and thus $\bar P = \{x\in \R^2:\ x^\intercal v^\perp_i \le p_i(P),\ \forall i \in [2d]\}$.
We assume\fabr{$\bar{P}$ was defined for $P$ a curve, right?} that those linear inequalities are all tight, including redundant ones%
\footnote{\edi{An inequality is \emph{redundant} if we can remove it from the definition of $P$ without affecting $P$.}}%
, i.e.,\ $e_i(P) \coloneqq \bar{P} \cap \{x:\ x^\intercal v^\perp_i = p_i(P)\} \neq \emptyset$ for every $i \in [2d]$. \edi{$e_i(P)$ is called the \emph{edge of $P$ in direction $v_i$}}.
Then, for every $i \in [2d]$, $e_i(P)$ and $e_{i+1}(P)$ are incident and $h(e_i(P)) = t(e_{i+1}(P))$. 
Note that $e_1(P)e_2(P) \cdots e_{2d}(P)$ forms a positively oriented  closed curve.

\noindent\textbf{Grid.}
Let $\call_1$ be the set of all lines in directions $v_1, \dots, v_d$ passing through the vertices of the input polygons. In particular, all the edges (including the degenerate ones) of all the polygons in the input lie on the lines in $\call_1$. Notice that $|\call_1|\leq 2d^2n$.
We recursively define $\calv_k$, for $k \in [2d]$ and $\call_k$, for $k\in \{2, \dots, 2d\}$ as follows:
$\calv_k$ is the set of intersection points of any two (non-parallel) lines in $\call_k$,
and $\call_k$ is the set of all lines in directions $\cald$ passing through points in $\calv_{k-1}$.
We define the grid $\calg_k = (\call_{k}, \calv_{k})$.
Since $|\calv_k| \le |\call_{k}|^2$ and $|\call_k| \le |\calv_{k-1}|\cdot d$, it follows that $|\calv_k| \le (2d^3n)^{2^k}$.
The grid $\calg_{2d}$ form the coordinate system of our algorithm: every geometric object appearing in the algorithm and the analysis lies on $\calg_{2d}$.
A line segment \emph{$s$ lies on $\calg_k$} 
if $s$ lies on some line in $\call_{k}$ and the extreme points of $s$ lie on $\calv_k$. Similarly, a curve or polygon lies on $\calg_k$ if all of its line segments do so.

\noindent\textbf{Container.} Consider the grid $\calg_{1}$. Let $C^* \in \calg_{1}$ be a parallelogram that encloses all polygons in $\cali$; we call $C^{*}$ the \emph{bounding box}.%
\footnote{It can, for example, be chosen as a parallelogram delimited by the leftmost and rightmost vertical lines and the top and bottom $v_2$-oriented lines in $\calg_1$
(i.e., the extension of $e_2(P')$ where $P' = \argmax_{P \in \cali} p_2(P)$ and the extension of $e_{d+2}(P'')$ where $P'' = \argmax_{P \in \cali} p_{d+2}(P)$).}
A \emph{container} (see Figure \ref{fig:weaklySimple}(a)) is a polygon on $\calg_{2d}$ with positively oriented boundary $s_1f_1s_2f_2 \dots s_\kappa f_\kappa$ where $2 \leq \kappa \leq 5$, such that:
\begin{itemize}
    \item $s_1, s_2, \dots, s_\kappa$ are disjoint and possibly degenerate \emph{parallel} line segments on $\calg_{2d}$ (these will later be called \emph{cutting lines}).
    \item For all $j\in [\kappa]$, $f_j$ is a simple curve on $\calg_{2d}$ consisting of at most $2d+1$ line segments and $t(f_j) = h(s_j)$ and $h(f_j) = t(s_{j+1})$ for every $j\in [\kappa]$ (where $s_{\kappa + 1} = s_1$).
    \item For all $j\in [\kappa]$, $\interior(s_j)$ does not intersect with any other part of the boundary of the container.
    \item For all $i,j \in [\kappa], i \neq j$, the curves $f_i$ and $f_j$ might touch but do not cross (defined below).
\end{itemize} 
In particular, a container has at most $10d+10$ \edi{line segments}.
Let $\calc$ be the set of all containers $C$ with $\interior(C) \subseteq \interior(C^*)$.
In particular, $C^{*}$ is a container and $C^{*}\in \calc$.
A \emph{bipartition of $C \in \calc$} is a pair $\{C_1, C_{2}\} \subseteq \calc$ such that $C_1, C_2$ split up $C$, i.e., $\interior(C)\setminus(\partial C_1 \cup \partial C_2) = \interior(C_1) \cup \interior(C_2)$ and $C_1$ and $C_2$ may touch but not intersect.

\noindent\textbf{Crossing curves.} Two curves \emph{cross} (see also Figure \ref{fig:weaklySimple}(b)) if each one of them contains  a connected subcurve $w_0 w_1 \cdots w_{k}$ and $q_0 q_1 \cdots q_{k}$, respectively, which form a \emph{crossing}, i.e., if $w_0 \neq q_0$, $w_{k} \neq q_{k}$, $w_i = q_i$ for $1 \leq i \leq k-1$ 
and the (non-collinear) triangles\fabr{This def. is very hard to parse} $w_0q_0w_2$ and $w_tq_tw_{t-2}$ have the same orientation (i.e., are either both positively or both negatively oriented).%
\footnote{Any container is thus \emph{weakly simple} according to the definitions in \cite[Box 5.1]{demaine2007geometric} and \cite{kusakari1999shortest}.
The concept of weakly simple polygons is extensively discussed in \cite{chang2014detecting}.}
For two curves formed by at most $k$ \edi{line segments} in total, it can be decided in time $O(k^3)$ whether there exists a crossing among them or not~\cite{chang2014detecting}.
With this definition, it is guaranteed that every container has a well-defined interior \cite{chang2014detecting}.

\begin{figure}%
    \centering
    \subfloat[A container with $\kappa = 5$. The line segment $s_4$ on the boundary of the container is degenerate. The curves $f_1$ and $f_5$, as well as $f_1$ and $f_2$, respectively, touch on the green segments but do not cross.
    ]
    {\includegraphics[width=0.47\textwidth]
    {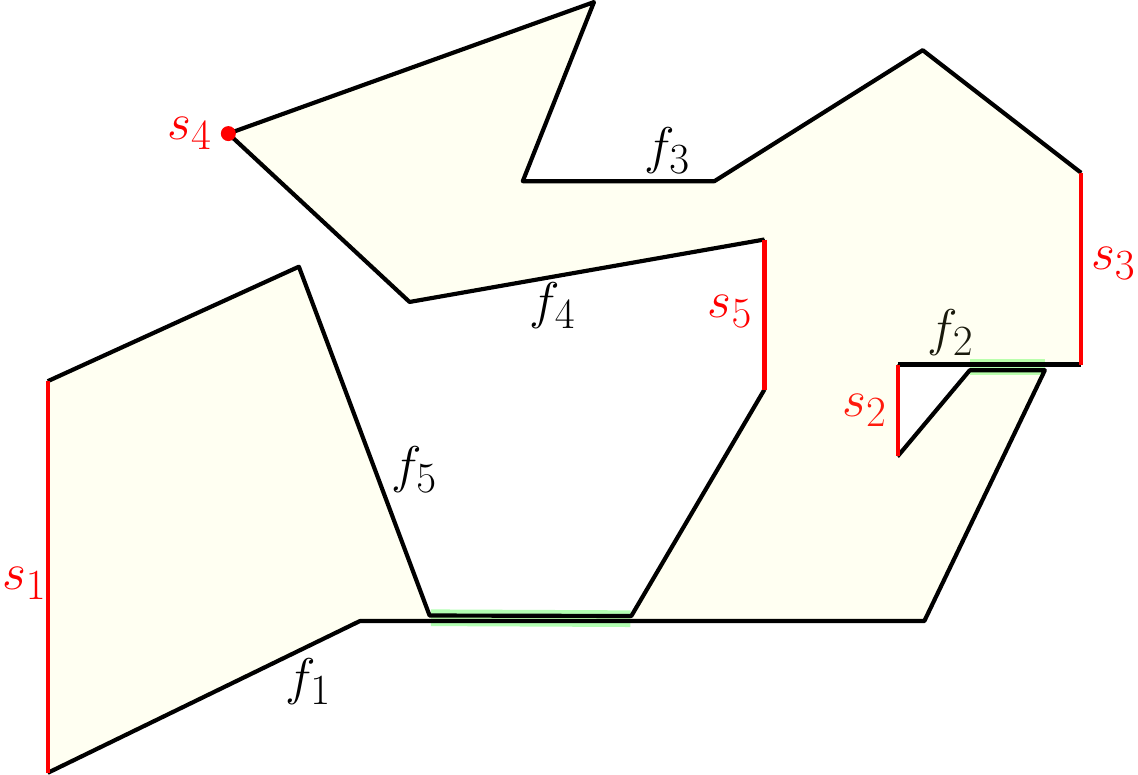} }%
    \qquad
    \subfloat[The curves on the left touch without crossing: the triangles $w_0q_0w_2$ and $w_4q_4w_2$ have negative and positive orientation, respectively.
    The curves on the right cross: the triangles $w'_0q'_0w'_2$ and $w'_4q'_4w'_2$ are both negatively orientated.
    ]{{\includegraphics[width=0.40\textwidth]{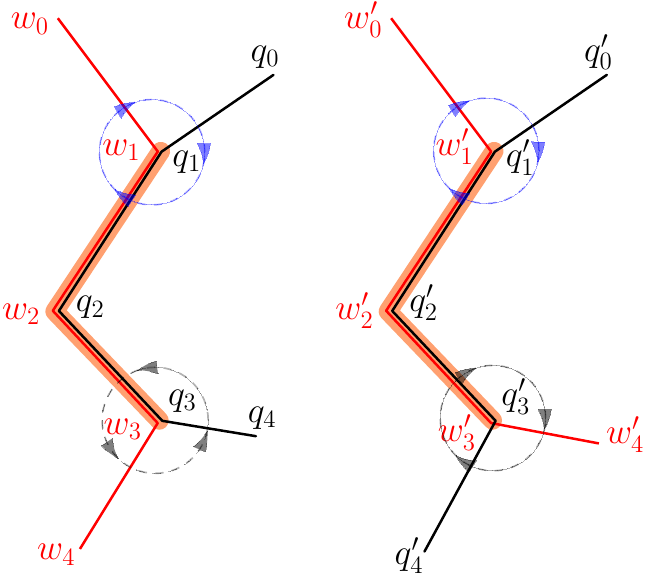} }}%
    \caption{A container with $\kappa = 5$. 
    An illustration of crossing and non-crossing.
    }
    \label{fig:weaklySimple}
\end{figure}

\section{Our Approach}
\label{sec:approach}

First, we present the algorithm in Section~\ref{sec:dynamicProgram}, and give an overview of the analysis in Sections~\ref{sec:overviewAnalysis} and \ref{sec:overviewProposition}. The detailed analysis and proofs are given in the later sections.

\subsection{The algorithm}
\label{sec:dynamicProgram}
Our algorithm is a dynamic program that generalizes the algorithm in~\cite{galvezSODA}. 
Each cell of the dynamic program corresponds to a container $C \in \calc$.\enote{From here on only $\calg_{2d}$ should be used; not $\calg$.} 
For each container, the dynamic program computes a set of disjoint polygons $\Dyn(C) \subseteq \cali$ as follows.
If $C$ encloses no polygon in $\cali$, set $\Dyn(C) = \emptyset$. If $C$ encloses exactly one polygon $P \in \cali$, set $\Dyn(C) = \{P\}$.
Otherwise, the dynamic program goes through all bipartitions of $C$ and chooses the bipartition $\{C_1, C_2\}$ that maximizes $|\Dyn(C_1)| + |\Dyn(C_2)|$ and sets $\Dyn(C) = \Dyn(C_1) \cup \Dyn(C_2)$.
%\anote{line 152: "which maximizes" --> "that maximizes" (there may be other places where there is "which" vs "that" confusion)}
The final output of the algorithm is $\Dyn(C^*)$.

\begin{lemma}[Running time]
\label{lem:runningTime}
    Let $N = |\calv_{2d}|$ be the number of points in the grid $\calg_{2d}$. 
    $\Dyn(C^*)$ can be computed in time $O\big(N^{20d+20}\big) = O((nd)^{O(d4^d)})$.
    \anote{can't we remove the big O in front of $(nd)^{O(d4^d)}$?}
\end{lemma}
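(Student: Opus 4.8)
The plan is to bound, first, the number of dynamic-programming cells, i.e.\ $|\calc|$, and then the work done per cell, which is dominated by enumerating all bipartitions of a container. For the first part I would observe that every container lies on the grid $\calg_{2d}$ and has at most $10d+10$ line segments, each of which is determined by its two endpoints in $\calv_{2d}$. Hence a container is specified by a sequence of at most $2(10d+10)=20d+20$ points of $\calv_{2d}$ (with some bounded combinatorial data — the value of $\kappa\le 5$, which endpoints coincide, orientations — contributing only a factor $2^{O(d)}$), so $|\calc| = O\big(N^{20d+20}\big)$ where $N=|\calv_{2d}|$. Checking the base cases (whether $C$ encloses zero, one, or more input polygons) costs $\poly(n,d)$ per cell by testing, for each $P\in\cali$, whether $\bar P\subseteq \bar C$, which is absorbed into the stated bound.

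For the second part, the key point is that a bipartition $\{C_1,C_2\}$ of $C$ is itself a pair of containers in $\calc$, so the number of bipartitions to examine from a given cell $C$ is at most $|\calc|^2 = O\big(N^{40d+40}\big)$; but in fact, since each of $C_1,C_2$ is again described by $O(d)$ grid points, enumerating all candidate pairs and testing the bipartition condition ($\interior(C)\setminus(\partial C_1\cup\partial C_2)=\interior(C_1)\cup\interior(C_2)$, and $C_1,C_2$ do not intersect) takes time $O\big(N^{40d+40}\cdot\poly(d)\big)$, where the $\poly(d)$ accounts for the crossing tests, which by the cited $O(k^3)$ bound on curves with $k$ line segments cost $\poly(d)$ each. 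Multiplying the number of cells by the per-cell work gives a total running time of $O\big(N^{20d+20}\cdot N^{40d+40}\big)$; I would then note that by reorganizing the computation (for each cell store only its optimal value, and when processing $C$ iterate over the $O(N^{20d+20})$ choices of $C_1$, determining $C_2$ essentially from $C\setminus C_1$ up to $2^{O(d)}$ choices, or simply folding the polynomial overhead into the exponent) the bound can be stated as $O\big(N^{20d+20}\big)$ with the understanding that the $O(\cdot)$ hides the remaining polynomial factors; this matches the form in the statement.

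Finally I would translate the bound in terms of $N$ into a bound in terms of $n$ and $d$. From the grid construction in Section~\ref{sec:preliminaries} we have $|\calv_{2d}|\le (2d^3n)^{2^{2d}}$, hence $N^{20d+20} \le (2d^3n)^{(20d+20)\,4^{d}} = (nd)^{O(d\,4^{d})}$, giving the claimed $O\big((nd)^{O(d4^d)}\big)$.

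\textbf{Main obstacle.} The routine part is counting grid-described containers; the delicate point is making the recursion's running time honest — a naive ``for each $C$, for each pair $(C_1,C_2)$'' loop would give exponent $\approx 60d$, not $20d+20$, so the argument must exploit that $C_2$ is (nearly) determined by $C$ and $C_1$, and that one should iterate over the refinement structure rather than over all pairs; equivalently, one iterates over all cutting lines / single cuts inside $C$ and argues each container is reached. I expect the authors' actual algorithm restricts bipartitions to a more structured family (a single cutting line plus bounded-complexity boundary curves), which is what keeps the exponent at $20d+20$, and I would follow that structure to pin the constant down precisely.
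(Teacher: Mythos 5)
There is a genuine gap: your argument, as written, does not establish the bound $O\big(N^{20d+20}\big)$ stated in the lemma. Two counts are off. First, you charge two grid points per line segment and so bound the number of containers by $O\big(N^{20d+20}\big)$; but the boundary of a container is a \emph{closed} chain of at most $10d+10$ line segments, so consecutive segments share endpoints and the whole boundary is determined by at most $10d+10$ points of $\calv_{2d}$, giving $|\calc| = O\big(N^{10d+10}\big)$. Second, and more importantly, your per-cell work enumerates all pairs $(C_1,C_2)$, costing $O\big(N^{40d+40}\big)$ per cell; the subsequent claim that this overhead can be ``folded'' into the stated bound is not valid, since $N^{40d+40}$ is not a constant or $\poly(n,d)$ factor relative to $N^{20d+20}$. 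The step the paper actually uses (citing~\cite{galvezSODA}) is that a bipartition $\{C_1,C_2\}$ of $C$ is determined by the common boundary $\partial C_1 \cap \partial C_2$, which is itself a curve consisting of at most $10d+10$ line segments on $\calg_{2d}$; hence only $O\big(N^{10d+10}\big)$ bipartitions need to be considered per cell, and the total is $O\big(N^{10d+10}\big)\cdot O\big(N^{10d+10}\big) = O\big(N^{20d+20}\big)$.

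Your closing remarks do gesture at the right idea ($C_2$ being essentially determined by $C$ and $C_1$, or bipartitions forming a structured family), but you do not carry it out, and even the ``iterate over $C_1$ only'' variant would give $N^{40d+40}$ with your container count, so both sharper counts above are needed. Your final conversion is fine: $N \le (2d^3 n)^{4^d}$ gives $N^{O(d)} = (nd)^{O(d4^d)}$, so the coarse running-time form in Theorem~\ref{thm:main} would survive your weaker exponent — but the lemma's specific claim $O\big(N^{20d+20}\big)$ is not proved by your argument.
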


\begin{proof}
    The boundary of each container can be identified by a sequence of $10d+10$ line segments in $\calg_{2d}$.
    There are therefore at most $O\big(N^{10d+10}\big)$ containers in $\calc$.
    As argued in \cite{galvezSODA}, any bipartition $\{C_1, C_2\}$ of $\calc$ is determined by the boundary between $C_1$ and $C_2$, i.e., $\partial C_1 \cap \partial C_2$,
    which is composed of at most $10d+10$ line segments.\fabr{Use edge only for sides of polygons and check if edge is defined this way} 
    Thus, to compute $\Dyn(C)$, the dynamic program does not consider more that $O\big(N^{10d+10}\big)$ bipartitions. 
    This gives a total running time $O\big(N^{20d+20}\big)$.
    The lemma follows since $N = O((2d^3n)^{4^d})$, see Section~\ref{sec:preliminaries}.
\end{proof}
\anote{we are a bit sloppy with running time, we only count the number of DP cells (containers) and of branches (bipartitions), but not the ``preprocessing'' of finding/verifying all containers and for each container, find all input polygons inside it. We for sure miss a term $O(N^{\poly d}$.}

It is not hard to see that the output $\Dyn(C^*)$ is indeed an independent set, so we will focus on showing that the algorithm has the claimed approximation guarantee.

\subsection{Analysis}
\label{sec:overviewAnalysis}

By construction, the output solution $\Dyn(C^*)$ is  the union of the solutions of two smaller containers, and so on.
We represent this structure by a binary tree called \emph{recursive partition} defined below. 
We argue that $\Dyn(C^*)$ is the best solution among all the solutions representable by a recursive partition. 
Then, we show the existence of a recursive partition that respects the approximation factor claimed in Theorem~\ref{thm:main}.

\begin{definition}%[{\cite[Definition~3]{galvezSODA}}]
\label{def:recursivePartition}
    For a set $\calr \subseteq \cali$, a \emph{recursive partition} of $\calr$ is a rooted tree $T$ with vertex set $V$ such that
    \begin{itemize}
        \item every node $u\in V$ corresponds to a pair $(C_u, \pr(C_u))$ where  $C_u  \in \calc$ is a container, and $\pr(C_u)$ is the set of \emph{protected} polygons of $\calr$ contained in $C_u$,
        \item the root $r$ of $T$ corresponds to $(C^*, \emptyset)$, i.e., $C_r = C^*$ and $\pr(C_r) = \emptyset$;
        \item every internal node has two children $u_1, u_2$ such that: $C_{u_1}$ and $C_{u_2}$ form a bipartition of $C_u$, and $\pr(C_{u}) \subseteq \pr(C_{u_1}) \cup \pr(C_{u_2})$;
        \item for every leaf $u$ of $T$, $C_u$ contains exactly one polygon $P_u \in \calr$ or no polygon in $\calr $ at all;
        \item for every $P \in \calr $, there exists a leaf $u$ of $T$ such that $P$ lies in $C_u$.\enote{We need also non-protected polygons.} % and \edi{$\pr(C_u) = \{P\}$}.
    \end{itemize}
\end{definition}

Clearly, if $\calr \subseteq \cali$ admits a recursive partition, it must be an independent set.
It is  easy to show by induction on the height of the tree that the output $\Dyn(C^*)$ admits a recursive partition, which leads to the following lemma.

\begin{lemma}[{\cite[Lemma~2.2]{galvezSODA}}]
    If $\calr  \subseteq \cali$ admits a recursive partition, then $|\Dyn(C^*)| \geq |\calr |$.
\end{lemma}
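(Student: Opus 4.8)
The plan is to prove the slightly stronger statement by structural induction on the dynamic program: for every container $C \in \calc$, if there is a subtree $T'$ that is a valid ``recursive partition rooted at $C$'' (i.e.\ satisfying Definition~\ref{def:recursivePartition} but with $C^*$ replaced by $C$ at the root and with an arbitrary protected set at the root), encoding an independent set $\calr' \subseteq \cali$ of polygons all lying inside $C$, then $|\Dyn(C)| \geq |\calr'|$. Applying this with $C = C^*$ and the given recursive partition of $\calr$ yields the lemma. First I would set up the induction on the height of the subtree of $T$ rooted at the node $u$ with $C_u = C$.

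For the base case, suppose $u$ is a leaf. By the fourth bullet of Definition~\ref{def:recursivePartition}, $C_u$ contains exactly one polygon of $\calr$ or none, so the portion of $\calr$ that must be accounted for at this leaf — by the fifth bullet, every polygon of $\calr$ lies in \emph{some} leaf — contributes at most $1$. Meanwhile the dynamic program sets $\Dyn(C_u) = \{P\}$ if $C_u$ encloses exactly one input polygon $P$, and $\Dyn(C_u) = \emptyset$ if it encloses none; in either case $|\Dyn(C_u)|$ is at least the number of polygons of $\calr$ lying in $C_u$, since $\calr \subseteq \cali$. (One subtlety: a leaf may enclose no polygon of $\calr$ but still enclose an input polygon, which only helps, and $C_u$ cannot enclose two input polygons while being a leaf — if it did, the recursive partition would not be valid since... actually a leaf is only required to contain at most one polygon \emph{of $\calr$}; here we just need $|\Dyn(C_u)| \geq |\calr \cap \{\text{polygons in } C_u\}| \leq 1$, which holds regardless.)

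For the inductive step, let $u$ be an internal node with children $u_1, u_2$, so that $\{C_{u_1}, C_{u_2}\}$ is a bipartition of $C_u$. Restricting $T$ to the subtree rooted at $u_i$ gives a recursive partition rooted at $C_{u_i}$ of the set $\calr_i$ of polygons of $\calr$ lying in $C_{u_i}$; by the induction hypothesis $|\Dyn(C_{u_i})| \geq |\calr_i|$ for $i=1,2$. Since $C_{u_1}$ and $C_{u_2}$ may touch but not intersect, every polygon of $\calr$ lying in $C_u$ lies entirely in exactly one of $C_{u_1}, C_{u_2}$ — this is where I expect the only real subtlety: a polygon of $\calr$ could in principle straddle the shared boundary $\partial C_{u_1} \cap \partial C_{u_2}$. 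One argues this cannot happen because $\calr$ is independent and the polygons have nonempty interior, so a polygon with part of its interior on each side of the cutting segments would have interior intersecting the interior of both $C_{u_1}$ and $C_{u_2}$, contradicting that the bipartition splits $\interior(C_u)$ into $\interior(C_{u_1}) \cup \interior(C_{u_2})$ with the boundaries removed; hence $|\calr \cap C_u| = |\calr_1| + |\calr_2|$. Finally, since the dynamic program, at $C_u$, maximizes $|\Dyn(C_1)| + |\Dyn(C_2)|$ over \emph{all} bipartitions and $\{C_{u_1}, C_{u_2}\}$ is one such bipartition, we get
\[
|\Dyn(C_u)| \;=\; \max_{\{C_1,C_2\}} \big(|\Dyn(C_1)| + |\Dyn(C_2)|\big) \;\geq\; |\Dyn(C_{u_1})| + |\Dyn(C_{u_2})| \;\geq\; |\calr_1| + |\calr_2| \;=\; |\calr \cap C_u|.
\]
Taking $C_u = C^*$, every polygon of $\calr$ lies in $C^*$, so $|\Dyn(C^*)| \geq |\calr|$, completing the proof. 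The main obstacle is the bookkeeping of which polygons are ``charged'' to which leaf/subtree and verifying disjointness of the charges across the bipartition; the dynamic-programming optimality inequality itself is immediate from the algorithm's definition.
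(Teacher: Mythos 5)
Your overall strategy---induction over the recursive-partition tree, using that the DP maximizes over all bipartitions---is exactly the argument behind the cited result \cite{galvezSODA}; note the paper itself does not reprove this lemma but imports it. Still, two steps of your write-up do not hold up as written. The substantive one is the leaf case: if a leaf container $C_u$ encloses the single polygon of $\calr$ assigned to it \emph{and} further input polygons of $\cali\setminus\calr$, then $\Dyn(C_u)$ is not given by the DP's trivial base cases but by the recursion over bipartitions of $C_u$, and your claim that $|\Dyn(C_u)|\ge 1$ ``holds regardless'' is precisely the point that needs an argument. One must show that $|\Dyn(C)|\ge 1$ for \emph{every} container $C\in\calc$ enclosing at least one input polygon, i.e., that the DP can always reach, via a chain of valid bipartitions, a container enclosing exactly one polygon. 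This is not automatic here: the natural cut $\{P,\,C\setminus P\}$ for an enclosed polygon $P$ need not be a bipartition in $\calc$, since $C\setminus P$ may be disconnected or violate the complexity bounds (at most five cutting lines, fences of at most $2d+1$ segments). This is the part of the argument from \cite{galvezSODA} that genuinely has to be adapted to the present container class, and your proof leaves it open.

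The second issue is minor but the justification is wrong as stated: to rule out a polygon of $\calr$ straddling the common boundary of $C_{u_1}$ and $C_{u_2}$ you argue that its interior would meet both $\interior(C_{u_1})$ and $\interior(C_{u_2})$, ``contradicting'' the bipartition property---but nothing in the definition of a bipartition forbids this (indeed the lost polygons of $\OPT$ do exactly that). The correct source of the no-straddling fact is the last bullet of Definition~\ref{def:recursivePartition}: every $P\in\calr$ lies in some leaf container; since containers in distinct branches of $T$ have pairwise disjoint interiors and children's interiors are contained in their parent's, that leaf must be a descendant of $u$, hence $P$ lies entirely in $C_{u_1}$ or in $C_{u_2}$. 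With that substitution (and a resolution of the leaf-case gap above), your induction goes through and coincides with the intended proof.
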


Therefore, Theorem~\ref{thm:main} is a consequence of Lemma~\ref{lem:runningTime} and the following proposition.

\begin{restatable}{ourProposition}{mainProp}\label{prop:existenceRecursivePartition}
Let $\OPT$ be an optimal solution of an instance of MISP. There exists a recursive partition for some set $\calr  \subseteq \OPT$ such that $|\calr | \geq \frac{3}{8d}|\OPT|$.
\end{restatable}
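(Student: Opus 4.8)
\textbf{Proof proposal for Proposition~\ref{prop:existenceRecursivePartition}.}

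The plan is to build the recursive partition top-down by repeatedly finding a good \emph{cutting line} that bipartitions the current container, following the Mitchell / Gálvez--Khan--Mari--Mömke--Reddy--Wiese strategy but adapted to $d$ directions. At each node $u$ of the tree we maintain a container $C_u$, a subset $\OPT_u\subseteq\OPT$ of polygons lying in $C_u$, and a set $\pr(C_u)\subseteq\OPT_u$ of protected polygons that are guaranteed to survive to a leaf. A leaf is reached when $C_u$ contains at most one polygon of $\calr$; the polygons we lose are exactly the non-protected polygons that get destroyed (crossed) by cutting lines along the way. The key invariant to carry down the recursion is a bound of the form ``the number of protected polygons in $C_u$ plus the number of polygons of $\OPT_u$ we will eventually keep is at least $\frac{3}{8d}|\OPT_u|$'' — more precisely, one wants to charge each destroyed polygon to a bounded number of surviving (protected) polygons so that overall at most a $1-\frac{3}{8d}$ fraction is lost. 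So the first step I would take is to set up this potential/charging scheme precisely and reduce the proposition to a single combinatorial claim: from any container $C$ with $|\OPT_C|\ge 2$ one can find a bipartition $\{C_1,C_2\}$ and a way to ``protect'' some polygons of $\OPT_C$ so that the charging invariant is preserved.

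Second, I would establish the geometric core: in any container $C$ holding the (non-overlapping, $d$-oriented) polygons $\OPT_C$, there exists a line in one of the $d$ directions — or more generally a cutting segment lying on $\calg_{2d}$ and consistent with the container structure ($\kappa\le 5$ boundary pieces) — whose supporting line crosses only ``few'' polygons of $\OPT_C$ relative to how it splits the rest. This is the analogue of the ``fence'' or ``balanced separator'' lemma. Concretely I would argue that, because each polygon is convex and has only $d$ edge directions, sweeping a line in a fixed direction $v_i$ across $C$ changes the set of polygons it stabs in a controlled way, and an averaging argument over all $d$ directions yields a cut that crosses at most a $O(1/d)$-fraction (after accounting for the $8/3$ constant coming from the rectangle case) of the polygons it would otherwise separate. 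Here I would lean on the $d=2$ analysis of \cite{galvezSODA}: for rectangles they lose a $1-\frac{3}{8}$ fraction using only horizontal and vertical cuts; with $d$ directions each polygon can be crossed in up to $\sim d$ ``ways'', so the natural bound degrades by a factor $d$, giving $\frac{3}{8d}$. The subtlety is that a single cut must respect the container definition — it has to be a segment (or a few collinear segments), parallel to the existing $s_j$'s or compatible with introducing a new family of parallel cutting lines, and it must not create crossings with the curves $f_j$. I would handle this by first cutting along lines in the ``current'' direction until the container becomes ``thin'' (a generalized strip/corridor), then switching direction, exactly as the rectangle argument alternates between vertical strips and horizontal cuts inside them.

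Third, I would handle the base/leaf case and the bookkeeping: when a container is reduced to a corridor bounded by pieces in $\le 5$ directions and contains several polygons, one needs a last batch of cuts (the ``staircase''/``L-shaped'' partitions from \cite{galvezSODA, mitchell2022approximating}) that isolates each surviving polygon in its own sub-container without destroying protected ones; this is where the bound $\kappa\le 5$ and the ``$f_j$ consists of $\le 2d+1$ segments'' budget get used. I would verify that every cut performed is realizable on $\calg_{2d}$ (the grid was defined with enough refinement levels, $2d$, precisely so that iterated cuts stay on the grid), and that protected polygons are passed down to exactly one child so the tree has the required properties.

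The main obstacle I anticipate is the geometric separator step with the correct constant: showing that for $d$ directions one can always find a grid-compatible container-respecting cut losing only a $1-\Theta(1/d)$ fraction, and in particular matching the $\frac{3}{8d}$ bound rather than something weaker like $\frac{1}{cd}$ for a large $c$ or $\frac{1}{c^d}$. For $d=2$ the $3/8$ comes from a delicate case analysis on how rectangles can be ``pierced'' by the boundary of a container with $\kappa\le 5$; generalizing this to convex $d$-oriented polygons — where a single polygon boundary now has up to $2d$ edges and can interact with the up to $\kappa(2d+1)$ segments of a container in many more configurations — is where the bulk of the work (and the deviation from the literature already for $d=3$, as the authors warn) will lie. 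I would isolate that as a separate lemma (the ``$8d/3$ loss per level'' or more precisely ``total loss'' lemma) and prove it by an amortized/weighting argument over the $d$ directions, using convexity to bound, for each polygon, the number of direction-$i$ cuts that can cross it before it is isolated.
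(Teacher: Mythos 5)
There is a genuine gap: your quantitative mechanism is not the one that makes the bound work, and the separator-style lemma you rely on is both unproven and not available. You propose to find, by averaging over the $d$ directions, a container-respecting cut that crosses only an $O(1/d)$-fraction of the polygons it separates, and to alternate cut directions until containers become thin. No such ``few crossings per cut'' statement holds or is needed: a single cut may cross arbitrarily many polygons of $\OPT(C)$, and the paper never bounds the per-cut loss fraction. Instead, the loss is controlled by a \emph{charging} argument: after replacing $\OPT$ by a maximal extension (so that the interior of every non-degenerate edge touches another polygon or $\partial C^*$, property~\ref{lem:extensionCASE4}), one defines $4d$ charging options and shows (Lemma~\ref{lem:exists_good_option}) that some single option makes at least $\frac{3}{4d}|\OPT\setminus Z|$ polygons \emph{accountable}, i.e., each such $P$ ``sees'' another polygon $P'$ through its edge in that direction; the counting is $3$ accountable pairs per non-corner polygon divided by $4d$ options. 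That one direction is then fixed globally (taken to be vertical), all cutting lines in the recursion are vertical, and the partitioning lemma guarantees that whenever an accountable $P$ is cut, the polygon $P'$ it sees is not cut, becomes protected by fences, and — by \ref{strongD} and \ref{stronE} — survives every subsequent cut down to a leaf. Injectivity of ``seeing'' (Lemma~\ref{lem:chargingIsInjective}) makes each charge unique, and the final factor $\frac12$ comes from the dichotomy that at least half of the accountable polygons are lost (each contributing a distinct survivor) or at least half are not lost; multiplying $\frac{3}{4d}\cdot\frac12$ gives $\frac{3}{8d}$, with corner polygons $Z$ peeled off separately at the start.

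Concretely, three ingredients are missing from your plan and cannot be replaced by your averaging/alternating-direction scheme: (i) the maximal extension step, without which the ``seeing'' relation need not exist and no accountable-fraction bound can be proved; (ii) the accountable-polygon count $\frac{3}{4d}$ over the $4d$ charging options, which is where the constant actually comes from (your attribution of $3/8$ to a per-level loss in the rectangle papers is not how those arguments work either — they are also charging-based); and (iii) the fence/protection machinery that certifies a charged polygon is never destroyed by later cuts, which is what lets one lose many polygons per cut with impunity. Your plan to ``cut in the current direction until the container is thin, then switch direction'' would also break the structured-container invariant ($\kappa\le 5$ vertical cutting lines alternating with fences), on which the partitioning lemma's case analysis depends; in the paper the cut direction is never switched.
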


\subsection{Informal overview of the proof of Proposition~\ref{prop:existenceRecursivePartition}}
\label{sec:overviewProposition}
Intuitively, we construct the set $\calr $ by starting from an optimal solution $\OPT$ contained in the initial container (the bounding box) $C_{r} = C^*$ and $\pr(C_r) = \emptyset$. 
Then, we will recursively partition the current container $C_u$ into two containers $C_{u_1}$ and $C_{u_2}$.  
$\calr$ is then defined as the set of polygons of $\OPT$ that are fully contained in the leaf containers.
\edi{For a polygon $P$ in $C_u$, we say that $P$ is \emph{lost (at $C_u$)} if it is neither contained in $C_{u_1}$ nor in $C_{u_2}$.}\fabr{Define ``lost'' here. Maybe you would need $\OPT(C_u)$ here to do that} 

Below, one of the $d$ directions in $\cald$ plays a special role: without loss of generality, we assume that this direction is vertical/vertical-up ($v_1$). The exact choice will be made later.

\smallskip \noindent \textbf{Accountable polygons.}
We prove that there exists a subset $\cP\subseteq \OPT$ (the \emph{\edi{accountable}} polygons) with at least $\frac{3}{4d}|\OPT|$ polygons, such that for each polygon $P\in \cP$ lost during partitioning of some $C_u$ into $C_{u_1}$ and $C_{u_2}$
we can \emph{charge} a \edi{unique} polygon $P'\in \OPT$ and $P'$ lies in a leaf container of the recursive partition.

We next describe in more details the set of \edi{accountable polygons} $\cP$ and how \edi{protected} polygons are defined. For technical reasons, we replace each original polygon $P\in \OPT$ with a new polygon $\ext(P)$ lying on $\calg_{2d}$ that contains $P$ (see Figures~\ref{fig:extending1} and~\ref{fig:seeing}). The new set of polygons remains independent, and we will simply denote it by $\OPT$ in the following.

Let $P\in \OPT$ and consider its edge $e_1(P)$ in direction vertical-up.
Let $P'\in \OPT$ and consider its edge $e_{d+1}(P')$ in direction vertical-down. 
We say that $P$ \emph{sees} $P'$ if $e_1(P)$ is non-degenerate and $h(e_{d+1}(P'))\in \interior(e_1(P))\cup \{t(e_1(P))\}$, see Figure~\ref{fig:seeing}.
We let the set $\cP$ of \edi{accountable} polygons be the polygons $P\in \OPT$ such that $P$ sees some $P'\in \OPT$.

\anote{I removed the lemma saying $|\cP|\ge \frac{3}{4d} |\OPT|$, it's redundant and we never refer to it}
\begin{comment}
\anote{line 206: There is no mention of where or when Lemma 6 will be proved.
It is essentially a consequence of Lemma 10, right?}
\begin{lemma}\label{cor:chargingOption}
   $|\cP|\ge \frac{3}{4d} |\OPT|$.
\end{lemma}
\end{comment}
It is easy to show that each polygon is seen by at most one other polygon in $\OPT$.

\smallskip
\noindent \textbf{Partitioning.}
For $C \in \calc$, let $\OPT(C)$ be the set of polygons in $\OPT$ that lie on $\interior(C)$.
Our construction is guided  by a partitioning lemma which is stated later. 
Roughly speaking,  
let $C$ be a container with $|\OPT(C)| \ge 2$, and let $\pr(C)$ be the set of \edi{protected} polygons in $C$.
The partitioning lemma states that $C$  
can be bipartitioned by a curve $\Gamma$ into two smaller containers $C_1$ and $C_2$ such that
    \begin{enumerate}[label=(P\arabic*), leftmargin=*] 
        \item\label{verticalLineOVERVIEW} $\Gamma$ contains a vertical line segment $\ell$ that intersects all the polygons in $\OPT(C)$ that are intersected by $\Gamma$.
        \item\label{protectedSafeOVERVIEW} $\Gamma$ does not intersect any polygon in $\pr(C)$,
        \item\label{protectedGrows} $\pr(C) \subseteq \pr(C_1) \cup \pr(C_2)$.
    \end{enumerate}
We stress that the lemma does \emph{not} hold for an arbitrary set $\pr(C)$ (e.g., if we take $\pr(C) = \OPT(C)$).
The set of \emph{protected} polygons in a container is \edi{defined below}.

\smallskip
\noindent \textbf{Charging and protecting.} The recursive partition which determines $\calr$ is defined by repeatedly applying the partitioning lemma. 
During the construction of the recursive partition, we need to guarantee that the vertical line segments given by~\ref{verticalLineOVERVIEW} do not intersect too many polygons from $\OPT$; this is the only possibility of ``losing'' some polygons.
For this, we use the set of \edi{accountable} polygons $\cP \subseteq \OPT$. 
Whenever we apply the partitioning lemma, the line $\ell$ intersects some polygons in $\cP$.
For each $P\in \cP$ that is intersected by $\ell$, \edi{i.e., for each lost polygon $P\in \cP$, we charge exactly one polygon $P'$ seen by $P$. 
By~\ref{verticalLineOVERVIEW}, if $\ell$ intersects $P$, then $\Gamma$ does not intersect $P'$.
If $P'$ is not already an element of $\pr(C)$ and thus an element of $\pr(C_{1}) \cup \pr(C_{2})$, }
then we add the polygon $P'$ 
to either $\pr(C_1)$ if $P'\in C_1$ or to $\pr(C_{2})$ if $P'\in C_2$. 
%In particular, the number of newly \edi{charged} polygons at each iteration is the same as the number of lost of polygons in $\cP\cap C_u$.
Moreover, if there is a polygon $P'' \in \OPT(C)$ that sees $P$, then $P''$ is also added to either $\pr(C_1)$ or $\pr(C_2)$.

\edi{By~\ref{protectedGrows}, adding $P'$ to one of $\pr(C_1)$ and $\pr(C_2)$ means that \edi{the charged polygon} $P'$ will remain \edi{protected}. By~\ref{protectedSafeOVERVIEW}, $P'$ will not be intersected by the curves in the following applications of the partitioning lemma.}
Therefore $P'$ will be an element in $\calr$ (our intended recursive partition).
Adding $P''$ to one of $\pr(C_1)$ and $\pr(C_2)$ is also necessary, because the polygon $P$ is already lost and if we were to \edi{lose} $P''$ in one of the following steps, there might not be a polygon which we could \edi{charge} the loss of $P''$ to. 

We conclude that for every polygon $P\in \cP$ lost in the partitioning of a container, we can guarantee that a unique polygon $P'$ seen by $P$ is charged, and it will become the protected polygon in a leaf.
\edi{At least half of the polygons in $\cP$ are either lost or not, so there are at least $\frac{1}{2}|\cP|$ polygons in the leaves.}
\cref{prop:existenceRecursivePartition} follows since $|\cP| \ge \frac{3}{4d} |\OPT|$.

\subsection{Comparison with previous work on MISR}
\label{sec:comparison}
Overall, we follow the same high level approach as the papers on MISR~\cite{galvezSODA,galvez20212+,mitchell2022approximating}.
Yet, to generalize the results on MISR to MISP,  we encounter several technical difficulties. 
We discuss a few of the more prominent ones below.

\edi{To define the set $\cP$, we need the following property (later referred as \ref{lem:extensionCASE4}): for every $P \in \OPT$ and every non-degenerate edge $e$ of $P$, $\interior(e)$ touches either another polygon $P' \in \OPT$ or the boundary of the bounding box.
This property can be obtained by ``maximally extending'' $\OPT$ as in~\cite{galvezSODA,mitchell2022approximating}. 
The difficulty here, unlike in the case of rectangles, is that naively extending the polygons can result in a grid of exponential size in $n$.}

For MISR~\cite{galvezSODA,mitchell2022approximating}, the accountable polygons correspond to the non-nested polygons (both vertical and horizontal). 
It is essentially trivial to show that the number of non-nested rectangles is at least half of the optimal number of rectangles.
In case of convex polygons, we require a more careful argument to show that there are at least $\frac{3}{4d}|\OPT|$ accountable polygons.

To obtain the partitioning lemma, we follow the same idea as in the case of axis-parallel rectangles but we need to work with significantly more complex objects. 
Firstly, the containers we work with have $O(d)$-times more line segments. 
Secondly, the containers that appear in our construction might not be simple (since some parts of the boundary may touch other parts of the boundary).
These difficulties require more elaborate and more technical arguments.

%\edi{Finally, we point out that working with more general problem improves the understanding of important parts of the analysis.}

% Maximal extension, charging option
\section{Charging options \edi{and accountable polygons}}
\label{sec:good}
\label{sec:goodCharging}

Like the papers~\cite{galvezSODA,mitchell2022approximating} on MISR, first, we extend an optimum solution $\OPT$. 
\begin{restatable}{ourDefinition}{extension}
\label{def:extension}
Let $\OPT$ be an optimal solution of a MISP instance. 
We say that $\OPT'$ is a \emph{maximal extension} of $\OPT$ if:
\begin{enumerate}[label={(E\arabic*)}, leftmargin=*]
    \item\label{lem:extensionCASE3} $\OPT'$ is an independent set of (convex) polygons on $\calg_{2d}$ and enclosed in $C^*$.
    \item\label{lem:extensionCASE1} There exists a bijection $\ext : \OPT \to \OPT'$ such that $P\subseteq \ext(P)$ for every $P \in \OPT$.
    \item\label{lem:extensionCASE4} For every $P \in \OPT'$ and every non-degenerate edge $e$ of $P$, $\interior(e)$ touches either another polygon $P' \in \OPT'$ or $\partial C^*$.
\end{enumerate}
\end{restatable}

In Appendix~\ref{appendix:extension}, we show that a maximal extension of $\OPT$ exists. 
The idea is to start with $\OPT$ and whenever some non-degenerate edge $e$ of a polygon $P$ does not satisfy~\ref{lem:extensionCASE4}, then we \edi{extend} the polygon $P$ by moving the edge $e$ ``outside'', see Figure~\ref{fig:extending1}.
We show that if we extend all the edges of the same direction not satisfying~\ref{lem:extensionCASE4} together, then the maximal extension lies on the grid $\calg_{2d}$.

By~\ref{lem:extensionCASE1} and~\ref{lem:extensionCASE3}, it suffices to prove \cref{prop:existenceRecursivePartition} for a maximal extension of $\OPT$.  (In particular, \ref{lem:extensionCASE3} implies that the polygons in $\OPT'$ have edges in the given $d$ directions.)
\ant{The purpose of a maximal extension is~\ref{lem:extensionCASE4}, which is helpful to bound the number of accountable polygons.}
For the rest of the paper, we assume that $\OPT$ is already ``maximally extended'' and thus satisfies~\ref{lem:extensionCASE4}, and we work with the grid $\calg_{2d}$.

%\begin{comment}
\begin{figure}
\centering
\begin{minipage}[t]{.35\textwidth}
  \centering
  \includegraphics[width=.89\linewidth]{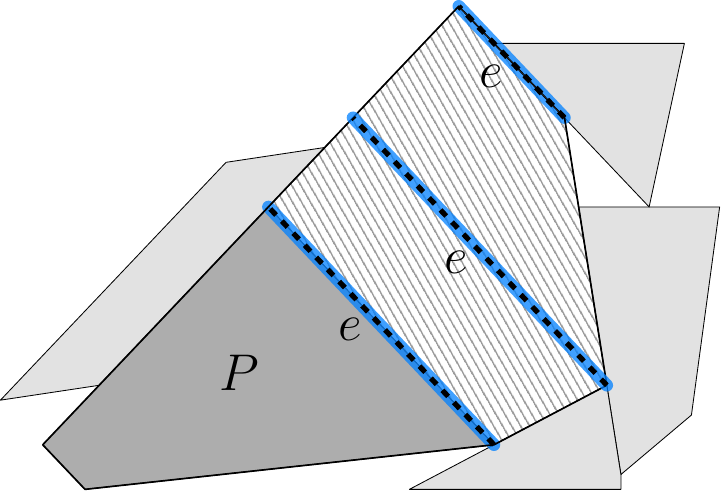}
  \captionof{figure}{\edi{Illustration of the process of extending a polygon $P$. We extend $P$ by moving the edge $e$ of $P$ until $\interior(e)$ touches another polygon in $\OPT$.}}
  \label{fig:extending1}
\end{minipage}%
\hspace{.04\textwidth}
\begin{minipage}[t]{.6\textwidth}
  \centering
  \includegraphics[width=0.86\textwidth]{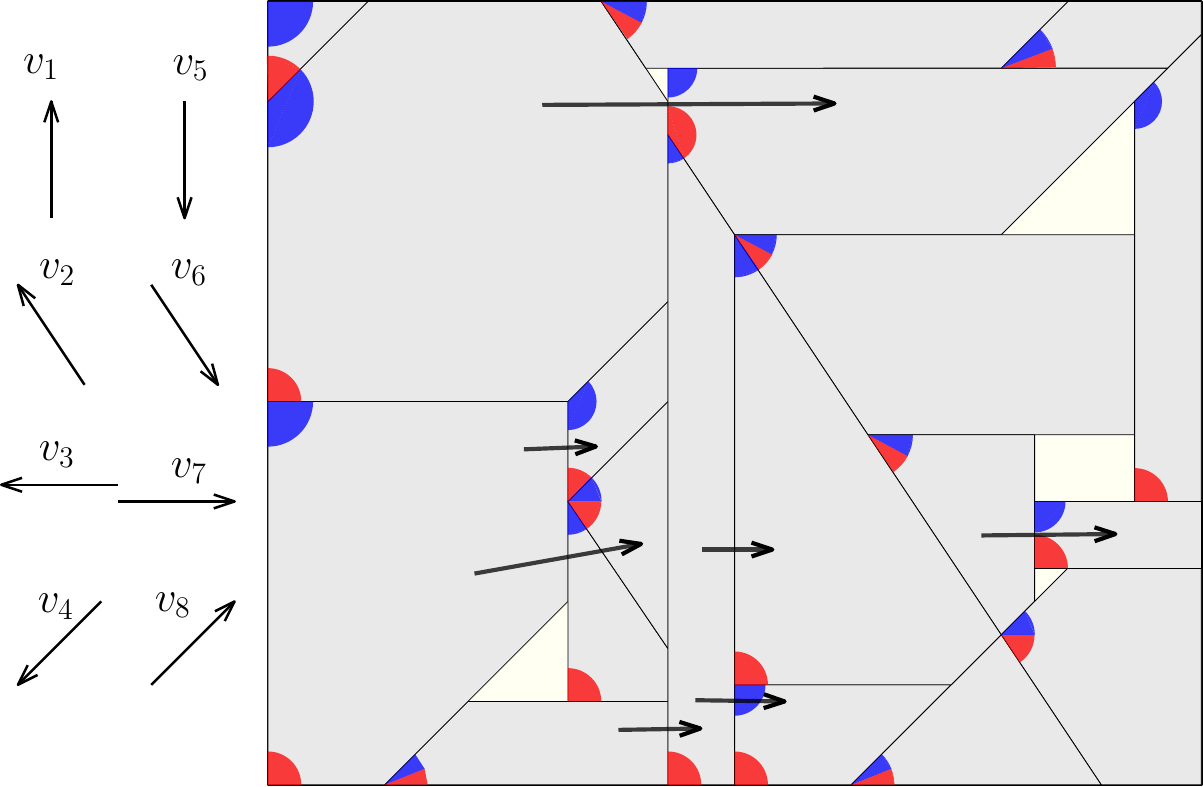}%
  \captionof{figure}{
  A black arrow from $P$ to $P'$ indicates that $P$ sees $P'$ with respect to the option $(v_1, t)$, i.e.,  direction vertical-up and tail.
  The blue (resp. red) corners represent the tails (resp. head) of all edges with direction vertical-down ($v_5$). Thus, a polygon $P$ sees a polygon $P'$ if the vertical-up edge of $P$ is touching the red corner of $P'$.
  %\anote{page 7: caption to Figure 4: Something is wrong with the first sentence, "Seeing illustrated for the"? Also, "Thus, a polygons P" --> "Thus, a polygon P"}
  }%
    \label{fig:seeing}%
\end{minipage}%
\end{figure}
%\end{comment}

\medskip
%\noindent\textbf{Charging options.} 
In the rest of this section, by the term \emph{direction} we mean a direction $v_i$ where $i\in [2d]$, and say that edge $e$ is of direction $v_i$
if the points of the edge $e$ correspond to $t(e)+\lambda\cdot v_i$, with $\lambda \ge 0$.  
A \emph{charging option} is specified by a direction $v_i$, $i\in[2d]$ and a choice between $t$ and $h$.
Let $\calo=\{v_i\}_{i \in [2d]} \times \{t,h\}$ be the set of the $2d\cdot 2=4d$ charging options.
We show the existence of a charging option and a subset $\cP \subseteq \OPT$ of accountable polygons with respect to this option such that (essentially) $|\cP|\ge \frac{3}{4d} |\OPT|$.

\begin{definition}
\label{def:chargable}
    Let $P\in \OPT$ and let $e$ be the edge of $P$ in direction $v=v_i$, $i\in [2d]$.
    \begin{itemize}
        \item Let $P'\in \OPT$ and $e'$ be the (possibly degenerate) edge of $P'$ of direction $-v$. 
        For $a \in \{t,h\}$,  we say that $P$ \emph{sees $P'$ with respect to  $(v, a)$} if $e$ is non-degenerate and if $\neg a(e') \in \interior(e)\cup \{a(e)\}$,
        where $\neg t = h$ and $\neg h = t$. (See Figure~\ref{fig:seeing}.)
        \item Whenever there exists $P'\in \OPT$ and a charging option $(v,a)$, such that $P$ sees $P'$ for $(v,a)$ then we say that $P$ is \emph{accountable for $(v, a)$}.
    \end{itemize}
\end{definition}

\begin{lemma}
\label{lem:chargingIsInjective}
    Let $(v, a)\in \calo$ be a charging option. 
    Any polygon $P'\in \OPT$ is seen by at most one other polygon $P\in \OPT$ with respect to $(v, a)$. 
\end{lemma}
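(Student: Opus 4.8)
The plan is to fix a charging option $(v,a) \in \calo$ with $v = v_i$, and suppose for contradiction that two distinct polygons $P_1, P_2 \in \OPT$ both see the same polygon $P' \in \OPT$ with respect to $(v,a)$. Let $e_1, e_2$ be the edges of $P_1, P_2$ in direction $v$, and let $e'$ be the (possibly degenerate) edge of $P'$ in direction $-v$. By \cref{def:chargable}, $e_1$ and $e_2$ are both non-degenerate, and $\neg a(e') \in (\interior(e_1) \cup \{a(e_1)\}) \cap (\interior(e_2) \cup \{a(e_2)\})$. So the single point $x \coloneqq \neg a(e')$ lies on both $e_1$ and $e_2$, i.e.\ on the boundary of both $P_1$ and $P_2$.

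The key step is to argue that a common boundary point of $e_1$ and $e_2$ forces $P_1$ and $P_2$ to intersect (contradicting independence of $\OPT$), \emph{unless} $P_1 = P_2$. First I would observe that $e_1$ and $e_2$ are parallel segments (both in direction $v$), lying on lines $\ell_1, \ell_2$ with $v^\perp$-coordinate $p_i(P_1)$ and $p_i(P_2)$ respectively; since $x$ lies on both, $p_i(P_1) = p_i(P_2)$, so $e_1, e_2$ are collinear on a common line $\ell$. Now I would use that $v$ is the \emph{outer} normal-direction edge: locally near $x$, the polygon $\bar P_1$ lies on the side $\{y : y^\intercal v^\perp \le p_i(P_1)\}$ of $\ell$, and likewise $\bar P_2$ lies on the same closed half-plane. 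To derive a contradiction I would look just inside $x$ along $\ell$: because $e_1$ is non-degenerate and $x \in \interior(e_1) \cup \{a(e_1)\}$, there is a subsegment of $e_1$ approaching $x$ from the $a$-side; the same holds for $e_2$. Combined with convexity of $P_1$ and $P_2$, a neighborhood of an interior-of-edge point of $P_1$ that also lies on $\partial P_2$ meeting along a shared segment means $\interior(P_1) \cap \interior(P_2) \neq \emptyset$ — here I would need to handle the case $x = a(e_1)$ (a vertex of $P_1$) carefully, using that $x$ is still in the \emph{relative interior or the $a$-endpoint} of $e_2$ so the two edges overlap on a nondegenerate subsegment on one side; overlapping boundary edges of two convex polygons lying on the same side of the supporting line force overlapping interiors. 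Hence $P_1$ and $P_2$ are not independent, the desired contradiction, so $P_1 = P_2$.

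The main obstacle I anticipate is the boundary bookkeeping in the degenerate/vertex cases: when $x = a(e_1)$ and/or $x = \neg a(e')$ is also an endpoint situation, the "overlap on a nondegenerate subsegment" claim is not immediate and one must chase the orientation conventions ($\neg t = h$, $\neg h = t$, and the decreasing-slope ordering of the $v_i$) to be sure the two edges of $P_1$ and $P_2$ extend to the same side of $x$ along $\ell$. I would isolate this as a small geometric sub-claim: \emph{if two convex polygons with an edge in direction $v$ share a point that is in $\interior(e) \cup \{a(e)\}$ for each of their $v$-edges $e$, then either the polygons coincide or they intersect}, and prove it by examining the one-sided behavior along $\ell$ at $x$. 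Everything else — translating \cref{def:chargable} into the statement "$\neg a(e')$ lies on both $e_1$ and $e_2$", and concluding injectivity — is routine.
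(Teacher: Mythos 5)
Your proposal is correct and follows essentially the same route as the paper: the point $\neg a(e')$ lies in $\interior(e_1)\cup\{a(e_1)\}$ and $\interior(e_2)\cup\{a(e_2)\}$ for two non-degenerate collinear edges, forcing $\interior(e_1)\cap\interior(e_2)\neq\emptyset$ and hence an intersection of $P_1$ and $P_2$, contradicting independence. The paper simply asserts this overlap-implies-intersection step without the one-sided analysis along the supporting line that you spell out, so your extra sub-claim is just a more explicit version of the same argument.
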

\begin{proof}
    Assume that $P'$ is seen by $P_1, P_2 \in \OPT$ with respect to $(v,a)$.
    Let $e_1$ and $e_2$ be the edge in direction $v$ of $P_1$ and $P_2$, respectively.
    Then we have $\neg a(e') \in (\interior(e_1) \cup \{a(e_1)\}) \cap (\interior(e_2) \cup \{a(e_2)\})$. Since $\interior(e_1) \neq \emptyset$ and $\interior(e_2) \neq \emptyset$, it follows that $\interior(e_1) \cap \interior(e_2) \neq \emptyset$.
    This implies that $P_1$ and $P_2$ intersect, thus $P_1 = P_2$.
\end{proof}

We say that a polygon $P\in \OPT$ is a \emph{corner polygon} in the bounding box $C^*$, if all but one of the edges of $P$ are contained in the boundary of $C^*$.
In particular, $P$ is a corner polygon if $P = C^*$. Similarly, if $C^*$ is partitioned into two convex polygons, then both are corner polygons.
Let $Z \subseteq \OPT$ be the set of corner polygons in $C^*$. Since $C^*$ is a parallelogram, we have $|Z|\le 4$, and the polygon $C' = C^* \setminus (\bigcup Z)$ is convex.\fabr{I can't see how $|Z|>3$ is possible \edi{Take a square of side $s$. Place a \emph{corner triangle} (with sides less than $s/2$) in each corner?}}

\begin{lemma}[Good charging option]
\label{lem:exists_good_option}
    Assume that $\OPT$ satisfies~\ref{lem:extensionCASE4}.
    Then, there exists a charging option $(v, a)\in \calo$ such that at least $\frac{3}{4d}|\OPT\setminus Z|$ polygons in $\OPT\setminus Z$ are accountable with respect to $(v,a)$. 
\end{lemma}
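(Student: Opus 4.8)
The plan is to use a double-counting / averaging argument over the $4d$ charging options. First I would set up the bipartite-style counting: for each polygon $P \in \OPT \setminus Z$ and each direction $v_i$, $i \in [2d]$, consider the edge $e = e_i(P)$. If $e$ is degenerate, $P$ cannot be accountable for either $(v_i,t)$ or $(v_i,h)$ — but I claim a non-degenerate edge of $P$ makes $P$ accountable for \emph{at least one} of the two options $(v_i,t),(v_i,h)$. Indeed, by property~\ref{lem:extensionCASE4}, since $P \notin Z$ it has a non-degenerate edge $e$ whose interior touches another polygon $P' \in \OPT$ (the case $\interior(e)$ touching $\partial C^*$ would force, together with convexity and $P$ not being a corner polygon, that still some other non-degenerate edge touches a polygon — this needs a small separate check, but the key point is that \emph{not all} non-degenerate edges can touch only $\partial C^*$ unless $P \in Z$). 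The edge $e'$ of $P'$ that $e$ touches is an edge of $P'$ of direction $-v_i$ lying on the same line as $e$; since $\interior(e) \cap e' \ne \emptyset$, at least one endpoint of $e'$ lies in $\interior(e) \cup \{t(e)\} $ or in $\interior(e)\cup\{h(e)\}$, and picking $a \in \{t,h\}$ accordingly gives that $P$ sees $P'$ with respect to $(v_i, a)$. So $P$ is accountable for some option of the form $(v_i,\cdot)$.

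Next I would convert this into a counting statement. By the previous paragraph, each $P \in \OPT \setminus Z$ is accountable for at least one of the $4d$ options. Hence
\[
\sum_{(v,a) \in \calo} \bigl|\{\, P \in \OPT\setminus Z : P \text{ accountable for } (v,a)\,\}\bigr| \;\ge\; |\OPT \setminus Z|.
\]
Wait — this only yields an average of $\frac{1}{4d}|\OPT\setminus Z|$, not $\frac{3}{4d}$. To get the factor $3$ I need to argue that each $P$ (with a non-degenerate edge) is in fact accountable for at least \emph{three} options, or more precisely that the total count on the left is at least $3|\OPT\setminus Z|$. The natural source of this extra factor: a convex polygon that is not a corner polygon has at least three non-degenerate edges — a convex polygon with edges in $d$ directions and at most two non-degenerate edges is degenerate (a segment) or, if two parallel non-degenerate edges and the rest degenerate, is itself a segment; with two non-parallel non-degenerate edges it is unbounded; so a genuine polygon has $\ge 3$ non-degenerate edges. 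Each of these $\ge 3$ non-degenerate edges $e_i(P)$ — provided its interior touches \emph{another polygon} and not just $\partial C^*$ — contributes accountability for some $(v_i, a)$, and these are distinct options since the directions $v_i$ are distinct. The subtlety is edges touching only $\partial C^*$: if $P \notin Z$ then at most (number of non-degenerate edges) $- \,?$ of them lie on $\partial C^*$; in fact since $P$ is convex, the edges on $\partial C^*$ form a contiguous arc of the boundary, and $P \notin Z$ means this arc omits at least... I would argue that at least $3$ non-degenerate edges of $P$ have interiors touching some polygon in $\OPT$ rather than $\partial C^*$, possibly after first removing the corner polygons $Z$ so that $C' = C^* \setminus \bigcup Z$ is convex and playing the role of an enlarged boundary. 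Summing the $\ge 3$ contributions per polygon gives total count $\ge 3|\OPT\setminus Z|$, and averaging over the $4d$ options produces an option $(v,a)$ with at least $\frac{3}{4d}|\OPT\setminus Z|$ accountable polygons.

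The main obstacle I anticipate is exactly the bookkeeping around $\partial C^*$ and the corner polygons $Z$: I need that after excising $Z$, every remaining $P$ has at least three non-degenerate edges whose interiors touch \emph{another polygon of $\OPT$} (equivalently, at most ``a few'' of its non-degenerate edges lie on $\partial C'$), and I need to make sure that when $e$ touches $P'$, choosing $a\in\{t,h\}$ so that the relevant endpoint $\neg a(e')$ of $e'$ lands in $\interior(e)\cup\{a(e)\}$ is always possible — the boundary case is when $e'$ touches $e$ only at a shared endpoint, where a careful choice of $t$ vs.\ $h$ (and of which of $P,P'$ plays the role of the seer) still works by the definition of ``touch'' ($\interior(e)\cap\interior(e')$ need not be nonempty, only $e\cap e' \ne\emptyset$ with disjoint polygon-interiors). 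I would handle this by orienting edges consistently (the positively-oriented boundary $e_1(P)\cdots e_{2d}(P)$) so that for a touching pair exactly one of the two natural options is realized, and by treating the at-a-point degenerate contact separately, assigning the charge to whichever of the two incident polygons has the larger edge. Once these geometric case checks are in place, the averaging step is immediate.
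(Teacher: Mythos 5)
Your averaging framework (count polygon--option incidences and divide by $|\calo|=4d$) is the same as the paper's, but the way you try to generate three incidences per polygon has two genuine gaps. First, the per-edge claim is false: if $\interior(e)\cap e'\neq\emptyset$ for the edge $e'$ of $P'$ in direction $-v$, it can happen that $e$ is strictly contained in $\interior(e')$, in which case \emph{neither} $t(e')\in\interior(e)\cup\{h(e)\}$ nor $h(e')\in\interior(e)\cup\{t(e)\}$ holds, so the touching edge $e$ gives $P$ no accountability in direction $v$ at all (note also that the definition requires the \emph{specific} endpoint $\neg a(e')$ to land in the set matching $a$, not just ``some endpoint in some set''). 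Your proposed repair --- charging whichever of $P,P'$ has the longer edge --- does not restore the count: the incidence you would credit to $P'$ in direction $-v$ is one that $P'$'s own edge $e'$ may already generate, so you double count and can fall below $3|\OPT\setminus Z|$. The paper avoids this by arguing at a \emph{vertex} rather than at an edge (Claim~\ref{claim:twoConsecutiveSides}): for a vertex $c=h(e)=t(e')$ with both incident non-degenerate edges off $\partial C^*$, if the polygon touching $\interior(e)$ ``overshoots'' $c$ (i.e.\ $c$ lies in the interior of its edge), then the polygon touching $\interior(e')$ cannot also overshoot $c$, since two edges of different directions whose interiors share $c$ would force the two touching polygons to intersect; hence each such vertex certifies one of the two options $(v,h),(v',t)$, and distinct vertices certify disjoint option pairs.

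Second, your source of the factor $3$ --- ``at least three non-degenerate edges of $P\notin Z$ touch another polygon of $\OPT$ rather than $\partial C^*$'' --- is simply not true: already for $d=2$, a rectangle with its left and bottom edges on $\partial C^*$ is not a corner polygon but has only two edges touching other polygons. The paper needs much less: $P\notin Z$ guarantees two \emph{consecutive} non-degenerate edges not lying on $\partial C^*$, and these two edges have three incident vertices, each of which yields one incidence by the vertex argument above; this is where $3|\OPT\setminus Z|$ comes from. So the averaging step of your plan is fine, but both ingredients feeding it (accountability per touching edge, and three qualifying edges per polygon) fail and need to be replaced by the consecutive-edges/vertex argument.
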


\begin{proof}
    Let $P \in \OPT$ and $c$ be a vertex of $P$. 
    Let $e, e'$ be the two non-degenerate edges incident to $c$ where $c=h(e)=t(e')$. Denote with $v$ (resp. $v'$) the direction of $e$ (resp. $e'$).
    \begin{claim}\label{claim:twoConsecutiveSides}
        Suppose that $e$ or $e'$ (or both) does not lie on the boundary of $C^*$.
        Then, $P$ is accountable with respect to $(v,h)$ or $(v',t)$.
    \end{claim}
    \begin{claimproof}
    By~\ref{lem:extensionCASE4}, each non-degenerate edge of $P$ not contained in the boundary of the bounding box, must touch some other polygon of $\OPT$ in its interior. 
    By assumption either $e$ or $e'$ does not lie on the boundary of $C^*$, without loss of generality, say $e$. 
    Then $P$ touches some $P_1\in \OPT$ on $\interior(e)$, i.e., $\interior(e) \cap e_1 \neq \emptyset$, where $e_1$ is the edge of $P_1$ in direction $-v$ ($e_1$ could be degenerate). See Figure~\ref{fig:claimTwoConsecutiveSide}.
    If $P$ sees $P_1$ with respect to $(v,h)$, i.e.,  $t(e_1) \in \interior(e) \cup \{h(e)\}$ then the claim is true, so assume that $t(e_1) \notin \interior(e) \cup \{h(e)\}$.
    This however implies $c \in \interior(e_1)$.

    Since $c \in \interior(e_1)$ and $C^*$ is convex, it follows that $e'$ is not on the boundary of $C^*$.
    Then, by \ref{lem:extensionCASE4}, there exists $P_2 \in \OPT$ that touches $P$ on $\interior(e')$, i.e., $\interior(e') \cap e_2 \neq \emptyset$, where $e_2$ is the edge of $P_2$ in direction $-v'$. 
    If $P$ does not see $P_2$ with respect to $(v',t)$, then $c \in \interior(e_2)$ by the same argument as before. 
    So $\interior(e_1)$ and $\interior(e_2)$ intersect in $c$ and thus $P_1$ and $P_2$ intersect (as $e_1$ and $e_2$ have different direction) which is a contradiction. 
    Therefore, $P$ must see $P_2$ with respect to $(v',t)$.
    \end{claimproof}
    \begin{figure}%
    \centering
    \includegraphics[width=0.7\textwidth]{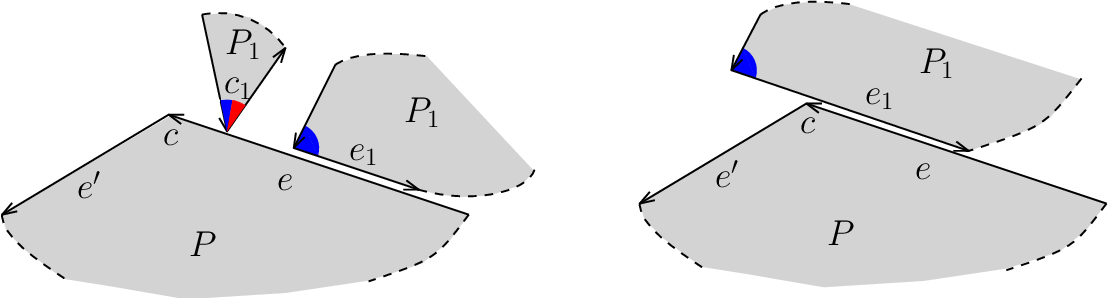}%
    \caption{Claim~\ref{claim:twoConsecutiveSides}: the blue (red) corners represents the tail (head) of the edges in direction $-v$.}%
    \label{fig:claimTwoConsecutiveSide}%
    \end{figure}

    Consider $P\in \OPT\setminus Z$.
    Since $P$ is not a corner polygon in $C^*$, it has at least two consecutive non-degenerate edges such that neither of them lies on $\partial C^*$.
    By Claim~\ref{claim:twoConsecutiveSides}, every vertex of $P$ incident to one or both of these edges, provides a charging option for which $P$ is accountable.
    Thus, the total number of pairs $(P, (v, a))$ with $P\in\OPT \setminus Z$ and $(v, a)\in\calo$ such that $P$ is accountable with respect to $(v, a)$ is at least $3|\OPT \setminus Z|$.
    Since $|\calo|=4d$, there exists an option $(v, a)$ for which the number of accountable polygons in $\OPT\setminus Z$ is at least $\frac{3}{4d}|\OPT\setminus Z|$.%
    \footnote{If we could guarantee a maximal extension in which all the polygons have at least $4$ sides, then we would improve $\frac{3}{4d}$ to $\frac{1}{d}$. In particular, when $d=2$ we are in the case of axis-parallel rectangles and we obtain a $2d = 4$-approximation algorithm. This is the same approximation factor achieved in \cite{galvezSODA,galvez20212+,mitchell2022approximating}
    by charging each lost rectangle to one protected rectangle (the improved $2+\epsilon$ factor requires a more complex charging).}
\end{proof}

% Definition of partitioning lemma, construction of recursive partition

\section{Recursive partitioning}
\label{sec:recursivePartitioning}

\fabr{moved this paragraph here from the end of the previous section} 
Without loss of generality (by rotating and mirroring the initial instance if necessary), we assume that the option $(v, a)$ satisfying Lemma \ref{lem:exists_good_option} is  vertical-up and tail, i.e., $(v_1, t)$.
In other words, for any $P \in \OPT$, if $e_1(P)$ is non-degenerate and if there is a $P' \in \OPT$ such that $h(e_{d+1}(P')) \in \interior(e_1(P)) \cup t(e_1(P))$, then we say that \emph{$P$ sees $P'$} (and \emph{$P'$ is seen by $P$}) and that $P$ is \emph{accountable}.
Lemma~\ref{lem:exists_good_option} states that there exists a subset $\cP \subseteq \OPT \setminus Z$ of accountable polygons such that $|\cP| \ge \frac{3}{4d}|\OPT\setminus Z|$, consequently $|Z| + |\cP| \ge \frac{3}{4d} |\OPT|$.

We will construct a recursive partition for a specific subset $\calr \subseteq \OPT$,
such that $|\calr| \ge|Z| + \frac{1}{2} |\cP|$, which proves \cref{prop:existenceRecursivePartition}.
Recall that $\OPT(C)$ denotes the set of polygons in $\OPT$ that lie on $\interior(C)$.
Moreover, all of the polygons in $\OPT$ and the bounding box $C^*$ lie on the grid $\calg_{2d}$.

\medskip
\noindent\textbf{Handling corner polygons.}
If $Z\neq \emptyset$, then we construct the first few nodes of the recursive partition as follows. 
Take any corner polygon $P\in Z$.
Recall that the root $r$ of the recursive partition corresponds to $(C^*, \emptyset)$.
We add two children $u_1, u_2$ to $r$ and partition $C^*$ into the containers $C_{u_1} = P$ and $C_{u_2} = C^*\setminus P$. Set $\pr(C_{u_1}), \pr(C_{u_2}) = \emptyset$.
By construction, $\OPT(C_{u_1}) = \{P\}$ 
(so $u_1$ is a leaf in the final tree and $\OPT(C_{u_2}) = \OPT \setminus \{P\}$. 
Notice that $C^*\setminus P$ is convex with at most five line segments since $C^*$ is convex. $C^*\setminus P$ has five line segments if $P$ is a triangle, and less if $P$ has more than three sides.)
We recurse by treating $C_{u_2}$ as the new bounding box.

We end up with a tree on $|Z| + 1$ leaves, where for one leaf $u$, $C_u$ is a convex polygon such that $\OPT(C_u) = \OPT \setminus Z$ and with at most eight line segments (since $|Z| \leq 4$) and $\pr(C_u) = \emptyset$.
Each of the remaining $|Z|$ leaves coincides with a unique element in $Z$.
Thus, it suffices to construct the recursive partition of $\OPT \setminus Z$ by treating $C_u$ as the bounding box with at most $8$ line segments.
Equivalently, we assume $Z = \emptyset$ and allow $C^*$ to have up to eight line segments for the rest of this paper.

\subsection{The partitioning lemma -- formal statement}
\label{sec:partitioning1}

For any $P \in \OPT$, let the \emph{top of} $P$ be defined as the curve $\ttop(P) = e_{2}(P) e_3(P) \cdots e_{d}(P)$ and the \emph{bottom of $P$} as the curve $\bbot(P) = e_{d+2}(P) e_{d+3}(P) \cdots e_{2d}(P)$. 
We define the bottom and top of the bounding box $C^*$ in the same way.
The following definitions are illustrated in Figure~\ref{fig:fenceProtected}.

\begin{definition}[Top and bottom fences]
\label{def:fence}
\begin{sloppypar}
    Let $P, P' \in \OPT$ be two polygons such that $P$ sees $P'$. 
    A \emph{top-fence} is (a segment of)
    the curve $\ttop(P) \overline{h(e_1(P)) t(e_{d+1}(P'))} \ttop(P')$ 
    such that the first and last line segment is not vertical.
    Symmetrically, a \emph{bottom-fence} is (a segment of)
    the curve $\bbot(P) \overline{t(e_1(P)) h(e_{d+1}(P'))} \bbot(P')$
    such that the first and last line segment is not vertical.

    If $P \in \OPT$ does not see any polygon, then a segment of its bottom (or top) is also called a bottom-fence (resp. top-fence).
    
    For a vertical line segment (cutting line) $s$, we say that a fence \emph{emerges from $s$} if one extreme point of the fence lies on $s$.
\end{sloppypar}
\end{definition}

To prove the partitioning lemma, we further specialize the definition of a container (see Section~\ref{sec:preliminaries})

\begin{definition}[Structured container]
   \label{def:structured_container}
    A container $C$ with $\partial C = s_1 f_1 s_2 f_2 \cdots s_\kappa f_\kappa$, $\kappa \leq 5$, is \emph{structured} if the \emph{cutting lines} $s_1, \dots, s_\kappa$ are vertical and the curves $f_1, \dots, f_\kappa$ are fences.

    %Orient the boundary of $C$ positively.
    We say that a cutting line is a \emph{left cutting line} if it is oriented \ant{downwards (or degenerate)}, and \emph{right cutting line} \ant{if it is oriented upwards (or degenerate)}. 
    In a structured container, the left cutting lines (and thus right cutting lines) are consecutive (e.g., $s_1, \dots, s_{\kappa'}$ are left and $s_{\kappa'+1}, \dots, s_\kappa$ are right cutting lines for some $\kappa' \in [\kappa-1]$).
\end{definition}

\begin{definition}[Protected \edi{by fences}]
\label{def:protected_polygon}
Let $C$ be a structured container and $s$ be a (possibly degenerate) cutting line on $C$.
We say that a polygon $P \in \OPT(C)$ is \emph{protected from the left in $C$ via $s$} if $s$ is a left cutting line on $\partial C$ and
\begin{itemize}
    \item there exists a top-fence $\gamma_{h}$ in $C$ emerging from $s$, ending in $h(e_1(P))$, and with $\ttop(P) \subseteq \gamma_h$, and
    \item there exists a bottom-fence $\gamma_{t}$ in $C$ emerging from $s$, ending in $t(e_1(P))$, and with $\bbot(P) \subseteq \gamma_t$.
\end{itemize}
\edi{We say that $P$ is protected by fences \emph{$\gamma_h$ and $\gamma_t$}.}
Symmetrically, we say that a polygon $P \in \OPT(C)$ is \emph{protected from the right in $C$ via $s$} if $s$ is a right cutting line on $\partial C$ and
\begin{itemize}
    \item there exists a top-fence $\sigma_{h}$ in $C$ emerging from $s$, ending in $t(e_{d+1}(P))$, and with $\ttop(P) \subseteq \sigma_h$, and 
    \item there exists a bottom-fence $\sigma_{t}$ in $C$ emerging from $s$, ending in $h(e_{d+1}(P))$, and with $\bbot(P) \subseteq \sigma_t$.
\end{itemize}
\edi{We say that \emph{$P$ is protected by fences $\sigma_h$ and $\sigma_t$.}}
\edi{A polygon $P\in \OPT(C)$ is \emph{protected by fences in $C$} if it is either protected from the left in $C$ or protected from the right in $C$.
}
\end{definition}\enote{Any ideas how to improve formatting of this definition?}

\begin{figure}%
    \centering
    \includegraphics[width=0.5\textwidth]{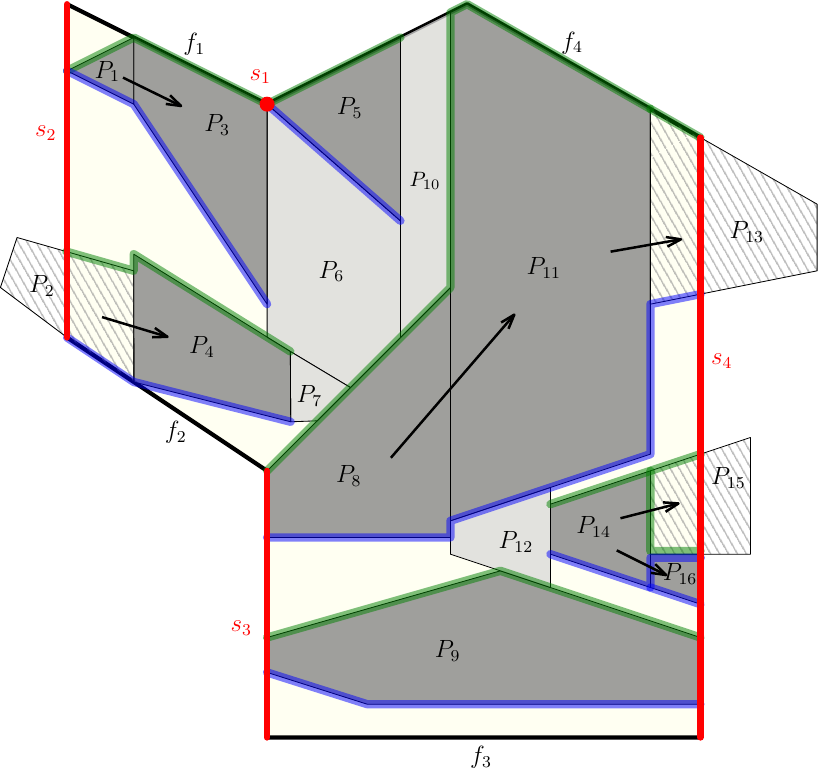}%
    \caption{Example of a structured container with $\kappa = 4$.
    The black arrows represent ``seeing'', top-fences are green, bottom-fences are blue. The polygons $P_1, P_3, P_4, P_5, P_8$ are protected (only) from the left, $P_{14}, P_{16}$ are protected (only) from the right, $P_9, P_{11}$ are protected both from the left and from the right.
    Notice that the fences that protect $P_{14}$ (from the right) are not unique since $P_{14}$ sees $P_{15}$ and $P_{16}$ which are cut and touch $s_4$, respectively.
    Note also that the bottom-fences touching $P_8, P_{11}$ and $P_{11}, P_{13}$ overlap.
    \anote{page 11: In figure 6, it is hard to distinguish the orange segments}
    }%
    \label{fig:fenceProtected}%
\end{figure}

\edi{We will show that each polygon in $\pr(C)$ appearing in the construction of the recursive partition can be protected \emph{by fences} in $C$.}

Next, we state the partitioning lemma, the proof is presented in Appendix~\ref{sec:paritioning2}.
The lemma holds only for \emph{structured} containers.
%In particular, all the containers (including the bounding box) appearing in our recursive partition are structured. 
This matters for the construction of the recursive partition but it does not affect the algorithm, as it  considers all possible containers. 

\begin{restatable}[Partitioning lemma]{ourLemma}{partitioningTrue}
\label{lem:partitioningStronger}
    Let $C$ be a structured container such that $|\OPT(C)|\ge 2$, \edi{and let $\cal P$ be a set of polygons in $C$ protected by fences.}
    Then, there exists a curve $\Gamma$ such that
    \begin{enumerate}[label=(P\arabic*), leftmargin=*] 
        %\item\label{strongA} $\Gamma$ lies on the grid $\calg$ and is composed of at most one vertical cutting line and two fences contained in $C$.
        %\anote{we don't really need \ref{strongA}}
        \item\label{strongB} $\Gamma$ partitions $C$ into two structured containers $C_1, C_2 \in \calc$ with non-empty interiors.
        \item\label{strongC} All the polygons in $\OPT(C)$ that are intersected by $\Gamma$ are intersected by one vertical cutting line $\ell\subseteq \Gamma$.
        \item\label{strongD} $\Gamma$ does not intersect any polygon protected by fences.
        \item\label{stronE} \edi{Any polygon protected by fences in $C$ is protected by fences in either $C_1$ or $C_2$.}
    \end{enumerate}    
\end{restatable}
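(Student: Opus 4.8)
The plan is to prove the partitioning lemma by a case analysis on the structure of $\OPT(C)$, following the strategy of the MISR papers but adapting it to the richer combinatorics of $d$-DOPs. First I would set up the main dichotomy. Let $C$ be a structured container with $|\OPT(C)|\ge 2$, and consider the vertical cutting lines $s_1,\dots,s_\kappa$ of $\partial C$, grouped into left cutting lines $s_1,\dots,s_{\kappa'}$ and right cutting lines $s_{\kappa'+1},\dots,s_\kappa$. The key quantity is whether there exists a vertical line $\ell$ through the grid $\calg_{2d}$, lying entirely inside $\bar C$, that separates $\OPT(C)$ into two nonempty parts \emph{without crossing any polygon protected by fences}.

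\textbf{Case 1 (a good separating vertical line exists).} If some vertical line $\ell\subseteq \bar C$ splits $\OPT(C)$ into two nonempty sets and avoids all fence-protected polygons, then I would take $\Gamma = \ell$ itself (extended to a full chord of $C$ if needed; since $C$ is a structured container whose non-cutting boundary curves are fences and whose cutting lines are vertical, a vertical chord meets $\partial C$ in a controlled way). This chord partitions $C$ into $C_1$ and $C_2$; both inherit vertical cutting lines (the two pieces of the old cutting lines plus the two copies of $\ell$) and fences (the split fences $f_j$) so both are structured — this needs checking that $\kappa\le 5$ is preserved, which holds because a vertical chord through the interior cuts at most two of the $f_j$'s and the cutting lines stay grouped. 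Properties \ref{strongC} (only $\ell$ is crossed), \ref{strongD} (by hypothesis $\ell$ avoids protected polygons), and \ref{stronE} (a polygon protected from the left via some $s_i\subseteq C_1$ keeps its fences $\gamma_h,\gamma_t$ truncated at $\ell$ inside $C_1$; if it lands on the far side of $\ell$ one shows the fences passing over $\ell$ witness protection from the new cutting line $\ell$) follow.

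\textbf{Case 2 (no such vertical line exists).} This is the substantive case and the main obstacle. Here every vertical line either fails to separate or crosses a protected polygon, which — arguing as in \cite{galvezSODA,mitchell2022approximating} — forces the existence of a polygon $P\in\OPT(C)$ that is ``exposed'', i.e., reaches from near the top boundary to near the bottom boundary of $C$ in a way that blocks all vertical cuts. Using the accountability/seeing structure and property \ref{lem:extensionCASE4}, one locates a top-fence $\gamma_h$ and bottom-fence $\gamma_t$ emanating from a cutting line and hugging such a blocking polygon (or a maximal chain of polygons linked by ``sees''), and takes $\Gamma$ to be the concatenation $\gamma_h \cdot \ell' \cdot \gamma_t^{-1}$ where $\ell'$ is a short vertical cutting line through the single ``bottleneck'' polygon $P$. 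The crux is to argue: (i) $\Gamma$ is a legal curve for a structured container — it consists of fences and one vertical segment, stays on $\calg_{2d}$, and does not cross $\partial C$ (this uses the weak-simplicity/crossing machinery and the ordering-by-slope of the directions to rule out $\gamma_h,\gamma_t$ doubling back); (ii) the only polygon of $\OPT(C)$ that $\ell'$ crosses is $P$ and $\gamma_h,\gamma_t$ cross nothing in $\OPT(C)$ because they run along edges of polygons (here \ref{lem:extensionCASE4} and the definition of fence are essential), giving \ref{strongC}; (iii) no fence-protected polygon is hit, because its protecting fences are ``interior'' to the region bounded away from $\Gamma$ — this requires showing two fences protecting different polygons do not cross, a nontrivial planarity argument about curves built from polygon tops/bottoms; (iv) after cutting, each fence-protected polygon is still protected by (restrictions of) its fences in whichever of $C_1,C_2$ contains it, and the newly separated pieces are again structured containers with $\kappa\le 5$ — one must verify the cutting-lines-stay-grouped condition of Definition~\ref{def:structured_container} survives, and that $\Gamma$ contributes at most the allowed number of new fences and cutting lines.

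\textbf{Where the difficulty concentrates.} The hard part will be Case 2, specifically proving that the constructed curve $\Gamma$ does not cross $\partial C$ nor any protecting fence, and that both output containers remain \emph{structured} with at most five cutting lines. In the rectangle setting these facts are almost visual; for $d$-DOPs the fences are polygonal chains with up to $2d+1$ segments that may touch (weakly simple, not simple), so the argument must carefully invoke the crossing-detection framework of \cite{chang2014detecting} and exploit that the $d$ directions $v_2,\dots,v_d$ are ordered by decreasing slope (so tops are ``$x$-monotone-ish'') to prevent fences from winding. I would handle this by first establishing a helper lemma — fences are $x$-monotone and two fences built from $\OPT$ never cross — and then the partitioning lemma becomes a matter of assembling $\Gamma$ from such fences and one vertical stub, with all four conclusions reducing to monotonicity plus the accountability charging already set up in Section~\ref{sec:good}.
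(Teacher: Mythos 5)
Your plan has the right raw ingredients ($\Gamma$ = fences plus one vertical segment, then check structuredness, non-crossing, and preservation of protection), but the core of the argument is missing, and the dichotomy you organize it around is not the one that makes the lemma provable. The paper does not split into ``a good separating vertical line exists'' versus ``it does not''. Instead, the construction is always anchored at a concrete polygon: among the polygons protected by fences from the left via the relevant cutting line, take $P_0$ with the right-most right edge, and let the vertical part of $\Gamma$ be the extension of $e_1(P_0)$ upward and downward to the \emph{first} obstruction points $p_h,p_t$ that lie either on $\partial C$ or on the top/bottom boundary of another fence-protected polygon $P_h$ resp.\ $P_t$. This choice is what simultaneously buys \ref{strongD} (nothing protected is crossed, because the vertical segment stops at the first protected polygon) and, crucially, the ability to \emph{complete} $\Gamma$ to $\partial C$ without crossing any polygon of $\OPT(C)$: by Definition~\ref{def:protected_polygon}, the protecting fences of $P_0$, $P_h$, $P_t$ emerge from cutting lines of $C$, so $\Gamma$ can be routed along them. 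Your Case~2 has no substitute for this. The claim that failure of a separating vertical line ``forces an exposed polygon blocking all vertical cuts'' is unjustified (what blocks a vertical cut is a protected polygon, not a single top-to-bottom polygon), and for your curve $\gamma_h\cdot\ell'\cdot\gamma_t^{-1}$ there is no argument that the fence pieces hugging an arbitrary ``bottleneck'' polygon reach $\partial C$ at all, let alone without crossing $\OPT(C)$ — that is exactly what being protected by fences guarantees and what an arbitrary blocking polygon does not. Note also that \ref{strongC} permits $\ell$ to cross many (unprotected) polygons; insisting that only one polygon is crossed is both unnecessary and in general unachievable.

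Beyond this, the real case analysis — which your plan compresses into ``(i)–(iv) need to be argued'' — is governed by where $p_h$ and $p_t$ land (on $f_1\cup f_2$, on a fence $f_j$ with $j\ge 3$, on a polygon protected from the right versus from the left, on a right cutting line, the situation where $e_1(P_0)$ already touches $\partial C$, and the degenerate situations where no $P_0$ exists), and each case produces a different concatenation of fences, pieces of $\partial C$, and the vertical segment, with separate verifications that the two pieces have non-empty interior, have no crossings, partition $C$, and keep at most five cutting lines with left/right ones consecutive. Your proposed helper lemma (``fences are $x$-monotone and two fences never cross'') is asserted rather than proved, and the second half is not the statement actually needed: the paper proves that a fence cannot cross $\partial C$ \emph{because it protects a polygon inside $C$}, and handles the remaining crossing questions case by case (e.g., that $\delta_h$ lies above $\delta_t$ in the two-sided-protection case); fences may well overlap, and a crossing-freeness claim for arbitrary pairs of fences is neither established nor sufficient to discharge \ref{strongB}. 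As it stands, the proposal is a program whose hardest steps (existence of the completing fences in Case~2, and all of \ref{strongB}) are deferred to arguments that are not given and, in the form stated, would not go through.
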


\anote{we might have needed to say that (P5) The boundaries of $C_1, C_2$ overlap on $\Gamma$, (i.e., $\partial C_1 \cap \partial C_2 \subseteq \Gamma$), and $\ell$ forms a cutting line on $\partial C_1$ and $\partial C_2$.}

%For the containers $C_1, C_2$ in Lemma~\ref{lem:partitioningStronger}, $\Gamma$ is part of $\partial C_1$ and $\partial C_2$. In particular, $\ell$ is a left cutting line of $C_1$ and a right cutting line of $C_2$ or vice-versa.

% Charging scheme and analysis
\subsection{Construction and analysis of the recursive partition}
\label{sec:construction}

In this section we prove~\cref{prop:existenceRecursivePartition}, i.e., we \edi{provide a recursive partition for $\calr \subseteq \OPT$ with }$|\calr|\ge \frac{1}{2} |\cP|$.
(Recall that we already argued that we can assume $Z = \emptyset$.)
We give an iterative construction of a recursive partition with the help of the partitioning lemma.%~\ref{lem:partitioningStronger}.
\enote{The rest of this change quite a bit:}

We initialize a tree $T$ with root node $r$, $C_r = C^*$,  and $\pr(C_r) = \emptyset$. 
Then, iteratively, for every childless node $u \in V(T)$ with $|\OPT(C_u)| \geq 2$, add two children $u_1, u_2$ to $u$ and choose $C_{u_1}, C_{u_2} \in \calc$ as provided by~\ref{strongB} in the partitioning lemma applied to $C_u$ and $\pr(C_u)$.
%We apply the partitioning lemma for $\calp = \pr(C_u)$. 
%For correctness, we need to show that each polygon in $\pr(C_u)$ can be protected \emph{by fences} at every iteration, which we do in Lemma~\ref{lem:fencesFromLeft}. 
Define the set of protected polygons $\pr(C_{u_1})$ and $\pr(C_{u_1})$ as follows.
\begin{enumerate}[label=(A\arabic*), leftmargin=*] 
    \item\label{a1} Set $\pr(C_{u_1}) = \pr(C_u) \cap \OPT(C_{u_1})$ and $\pr(C_{u_1}) = \pr(C_u) \cap \OPT(C_{u_1})$.
    %(resp. $\pr(C_{u_2})$) be the subset of polygons of $\pr(C_u)$ contained in $C_{u_1}$ (resp. $C_{u_2}$). 
    %By the properties~\ref{strongD} and~\ref{stronE}, $\pr(C_{u}) \subseteq \pr(C_{u_1}) \cup \pr(C_{u_2})$.
    \anote{$\pr(C_{u}) \subseteq \pr(C_{u_1}) \cup \pr(C_{u_2})$ follows from Lemma~\ref{lem:fencesFromLeft}, not immediately from the PL}
    \item\label{a2} For each $P\in \cP$ that is intersected by $\ell$, i.e., each $P\in \cP$ that is lost, if $P$ sees a polygon $P' \in \OPT(C_u)$ (if $P$ sees more than one polygon in $\OPT(C_u)$, choose one of them arbitrarily), add $P'$
    to $\pr(C_{u_1})$ if $P'$ is in $C_{u_1}$ or to $\pr(C_{u_2})$ if $P'$ is in $C_{u_2}$. Moreover, charge the loss of $P$ to $P'$.
    \item\label{a3} For each $Q'\in \OPT(C_u)$ intersected by $\ell$ for which there is a polygon $Q \in \OPT(C_u)$ that sees $Q'$, add $Q$ to either $\pr(C_{u_1})$ or $\pr(C_{u_2})$ depending whether $Q$ is in $C_{u_1}$ or $C_{u_2}$.
\end{enumerate}

We first show to that by this construction, a polygon is protected only if it is protected by fences.

\begin{lemma}\label{lem:fencesFromLeft}
    Let $P'\in \pr(C_u)$ for a node $u$ of $T$.
    There exist fences that protect $P'$ in $C_u$.
\end{lemma}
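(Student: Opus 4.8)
The claim is that every polygon that ever becomes protected during the construction is in fact protected \emph{by fences} (in the sense of Definition~\ref{def:protected_polygon}) in the container where it appears. The plan is to proceed by induction on the depth of the node $u$ in $T$. For the root $r$ we have $\pr(C_r)=\emptyset$, so the statement is vacuous. For the inductive step, suppose the claim holds for $C_u$, and let $u_1,u_2$ be the children obtained by applying the partitioning lemma (Lemma~\ref{lem:partitioningStronger}) with the set $\pr(C_u)$, which by the inductive hypothesis is a set of polygons protected by fences in $C_u$. I would then examine the three ways a polygon $P'$ can enter $\pr(C_{u_i})$, namely via \ref{a1}, \ref{a2}, or \ref{a3}, and verify in each case that $P'$ is protected by fences in $C_{u_i}$.

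\textbf{Case \ref{a1}.} Here $P'\in\pr(C_u)$ and $P'\in\OPT(C_{u_i})$. By the inductive hypothesis $P'$ is protected by fences in $C_u$, and by \ref{stronE} of the partitioning lemma any polygon protected by fences in $C_u$ is protected by fences in $C_1$ or in $C_2$; since $P'$ lies in $C_{u_i}$ and (by \ref{strongD}) $\Gamma$ does not intersect $P'$, it must be that $P'$ is protected by fences precisely in $C_{u_i}$. This case is essentially a restatement of \ref{stronE}.

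\textbf{Cases \ref{a2} and \ref{a3}.} These are the substantive cases, and I expect them to be the main obstacle. In both, $P'$ (resp.\ $Q'$) is a polygon that is \emph{seen} by some polygon $P$ (resp.\ $Q$) that is cut by the vertical cutting line $\ell\subseteq\Gamma$, and $P'$ is added to whichever of $C_{u_1},C_{u_2}$ contains it. I need to produce a top-fence and a bottom-fence inside $C_{u_i}$ that protect $P'$ in the sense of Definition~\ref{def:protected_polygon}. The key geometric observation is that since $P$ sees $P'$, the edge $e_1(P)$ is non-degenerate and $h(e_{d+1}(P'))\in\interior(e_1(P))\cup\{t(e_1(P))\}$, so the concatenation $\ttop(P')$ together with the horizontal-to-vertical transition up to $e_1(P)$, and then a portion of $e_1(P)$ reaching $\ell$, forms (a segment of) a top-fence emerging from $\ell$ and ending at $h(e_1(P'))$ with $\ttop(P')$ contained in it; symmetrically for the bottom. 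Here $\ell$ becomes a cutting line on $\partial C_{u_i}$ (this is the content of the remark attached to the partitioning lemma, that $\ell$ forms a cutting line on $\partial C_1$ and $\partial C_2$), and one must check that these fences actually lie inside $C_{u_i}$ — which follows because $\Gamma$ does not cross $P'$ or the portion of the boundary of the polygons between $P'$ and $\ell$, using property \ref{strongC} that every polygon of $\OPT(C)$ cut by $\Gamma$ is cut by $\ell$ itself. One must also handle the degenerate subcase of \ref{a2} where $P$ sees no polygon inside $C_u$ but $P'$ is the polygon seen "externally" — there, depending on the exact bookkeeping, $P'$ might instead be protected by a segment of its own top and bottom emerging from $\ell$, matching the last clause of Definition~\ref{def:fence}. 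The orientation/side bookkeeping (whether $P'$ ends up protected from the left via a left cutting line or from the right via a right cutting line) has to be tracked carefully, since it depends on which side of $\ell$ the polygon $P'$ falls and on the direction convention that the accountable option is $(v_1,t)$ (vertical-up, tail).

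\textbf{Main obstacle.} The crux is the geometric verification in cases \ref{a2}/\ref{a3} that the curve obtained by stitching $\ttop(P')$ (or $\bbot(P')$) to $\ell$ through the "seeing" edges stays within $C_{u_i}$, does not cross other parts of $\partial C_{u_i}$ (so that it is a legitimate fence in the structured sense), and correctly reaches the prescribed endpoint $h(e_1(P'))$ resp.\ $t(e_1(P'))$. I would isolate this into the key claim that $\ell$ is a cutting line of $C_{u_i}$ with $P'$ on its appropriate side, together with the fact that $\Gamma\setminus\ell$ does not intersect $P'$ nor separates $P'$ from $\ell$, both of which are consequences of \ref{strongB}--\ref{strongD} of the partitioning lemma plus the definition of "sees". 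Once those are in place, exhibiting the two fences is routine concatenation.
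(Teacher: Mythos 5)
Your overall skeleton matches the paper's proof (handle \ref{a1} by induction together with \ref{strongD}/\ref{stronE}, and treat the moment a polygon is first protected via \ref{a2} or \ref{a3} by building fences from the ``seeing'' pair and the cutting line), but the central construction you sketch for \ref{a2}/\ref{a3} does not work as written. You describe the top-fence as ``$\ttop(P')$ together with the transition up to $e_1(P)$, and then a portion of $e_1(P)$ reaching $\ell$.'' A portion of $e_1(P)$ can never reach $\ell$: both are vertical, and since $\ell$ intersects the interior of $P$ it lies on a vertical line strictly between those of $e_{d+1}(P)$ and $e_1(P)$. Moreover, by Definition~\ref{def:fence} a fence may not begin or end with a vertical segment, so a curve terminating inside $e_1(P)$ is not a fence at all. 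The missing ingredient is the top (resp.\ bottom) of the \emph{cut} polygon $P$ itself: the correct fences are segments of the fence curve $\ttop(P)\,\overline{h(e_1(P))\,t(e_{d+1}(P'))}\,\ttop(P')$ (and its bottom analogue) taken from the points $p_x,p_y$ where $\ell$ crosses $\ttop(P)$ and $\bbot(P)$, running along the top/bottom of $P$, through the vertical connector, and along the top/bottom of $P'$ to $h(e_1(P'))$ and $t(e_1(P'))$. These emerge from $\ell$ by construction, which is exactly what makes $P'$ protected from the left in the child container; with the correct stitching, the ``containment in $C_{u_i}$'' issue you single out as the main obstacle essentially evaporates, since everything to the relevant side of $\ell$ lies in the child by \ref{strongB}--\ref{strongC}.

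A second, related slip: you describe both \ref{a2} and \ref{a3} as ``the protected polygon is \emph{seen by} a polygon cut by $\ell$.'' That is only \ref{a2}. In \ref{a3} the protected polygon $Q$ is the one that \emph{sees} the cut polygon $Q'$, so $\ell$ lies to the right of $Q$ and the fences (built from $\ttop(Q)\cup\ttop(Q')$ and $\bbot(Q)\cup\bbot(Q')$, from $e_{d+1}(Q)$'s endpoints out to $\ell$) protect $Q$ \emph{from the right}. With your reading, the right-protection case is never produced, yet it is needed later (e.g.\ in case \ref{case:protectedRight} of the partitioning lemma). Finally, the ``degenerate subcase of \ref{a2} where $P$ sees no polygon inside $C_u$'' that you worry about does not arise here: by the definition of \ref{a2} the charged polygon $P'$ is chosen in $\OPT(C_u)$; the situation you have in mind is the content of Lemma~\ref{lem:fencesFromRight}, not of this lemma.
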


\begin{proof}
    \begin{comment}
    By~\ref{stronE}, if $P'$ had been added  in~\ref{a1} if we give polygons protected by the fences to the partitioning lemma, then these remain protected by the fences. 
    Therefore, it suffices to prove the lemma for the node $u_1$ of $T$ where $P$ is first added to $\pr(C_{u_1})$ in either case~\ref{a2} or case~\ref{a3}.
    Let $C_{u}$ be the parent of $C_{u_1}$ in $T$.
    \end{comment}
    We first argue in the case that $P'$ is protected for the first time, i.e., added to $\pr(C_u)$ via \ref{a2} or \ref{a3}.
    Let $u'$ be the parent of $u$ in $T$.
    
    First assume that $P'$ is protected via \ref{a2}.  
    Let $P\in \cP \cap \OPT(C_{u'})$ be the polygon that sees $P'$.
    By definition, $P$ is intersected by the cutting line $\ell_{u'}$ from \ref{strongB} during the bipartitioning of $C_{u'}$
    %We argue that $P'$ can be protected by fences in  $C_{u_1}$.
    Let $p_x$ and $p_y$ be the two intersection points of $\ell_{u'}$ and $\partial P$, where $p_x$ is above $p_y$, see Figure~\ref{fig:lemma23}.
    Since $P$ sees $P'$, the curve $\gamma_x$ on $\ttop(P)$ and $\ttop(P')$ from $p_x$ to $h(e_{1}(P'))$ is a top-fence and the curve $\gamma_y$ on $\bbot(P)$ and $\bbot(P')$ from $p_y$ to $t(e_{1}(P'))$ is a bottom-fence. $\gamma_x$ and $\gamma_y$ both emerge from $\ell_{u'}$ and thus protect $P'$ from the left in $C_{u}$. Hence, $P'$ is protected by fences in $C_{u}$.

    The argument is symmetric if $P'$ is protected via \ref{a3}:
    there is a polygon $Q \in \OPT(C_{u'})$ seen by $P'$ that is intersected by the the cutting line $\ell_{u'}$.
    Therefore, the curves on $\ttop(P')$ and $\ttop(Q)$ from $e_{d+1}(P')$ to $\ell_{u'}$ and of $\bbot(P')$ and $\bbot(Q)$ from $e_{d+1}(P')$ to $\ell_{u'}$ form a pair of fences that protect $P'$ from the right in $C_u$.

    If $P'$ is protected via \ref{a1}, then it has been protected for the first time in an ancestor of $u$, so the claim follows inductively from by \ref{strongD} and \ref{stronE}.
\end{proof}

With \ref{strongD} and \ref{stronE}, Lemma~\ref{lem:fencesFromLeft} implies that protected polygons are not lost and stay protected, i.e.,  $\pr(C_{u}) \subseteq \pr(C_{u_1}) \cup \pr(C_{u_2})$ for every interior node $u$ in $T$. This in particular holds for every charged polygon.
By the construction above, every charged polygon is protected and charged only once by Lemma~\ref{lem:chargingIsInjective}.
To make our charging scheme work, we need to make sure that \emph{every} lost accountable polygon provides one charge, which follows by \ref{strongC} and the following lemma.

\begin{lemma}\label{lem:fencesFromRight}
    Let $P \in \cP$ be a polygon that is intersected by the vertical line segment $\ell_u$ for an internal node $u\in T$.
    Then there exists a polygon $P'\in \OPT(C_u)$ that is seen by $P$.
\end{lemma}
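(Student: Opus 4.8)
The goal is to show that if an accountable polygon $P \in \cP$ gets cut by the vertical cutting line $\ell_u$ during the bipartition of $C_u$, then $P$ actually sees some $P' \in \OPT(C_u)$ — so that step \ref{a2} has something to charge to. The hard part is not the ``seeing'' combinatorics itself (that is essentially built into the definition of accountability together with \ref{lem:extensionCASE4}), but rather showing that the polygon $P'$ that $P$ sees \emph{survives inside $C_u$}, i.e.\ lies in $\interior(C_u)$ and was not already lost in an earlier step. So the plan has two parts: (i) recall why $P$ sees some polygon of $\OPT$ at all, and (ii) argue that such a seen polygon must still be a member of $\OPT(C_u)$.

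\textbf{Step 1: $P$ sees some polygon of $\OPT$.} Since $P\in\cP$ and (after the normalization of Section~\ref{sec:recursivePartitioning}) the good charging option is $(v_1,t)$, by Definition~\ref{def:chargable} there exists some $P'\in\OPT$ with $e_1(P)$ non-degenerate and $h(e_{d+1}(P'))\in \interior(e_1(P))\cup\{t(e_1(P))\}$. This is a property of the fixed maximal extension $\OPT$ and holds regardless of the current container; it was established once and for all in Lemma~\ref{lem:exists_good_option}. Fix this $P'$ for the rest of the argument. (If $P$ sees more than one polygon of $\OPT$, any one of them will do.)

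\textbf{Step 2: the seen polygon lies in $\OPT(C_u)$.} This is the crux. We know $P\in\OPT(C_u)$, so $P\subseteq\interior(C_u)$; we must show $P'\subseteq\interior(C_u)$ as well. First, $h(e_{d+1}(P'))$ lies on $e_1(P)$, hence in $\bar C_u$. Since $C_u$ is a structured container and all cutting lines are vertical, and $e_{d+1}(P')$ is the vertical-down edge of $P'$ directly below $h(e_{d+1}(P'))=t(e_{d+1}(P'))\dots$ — more carefully: the point $h(e_{d+1}(P'))$ is a vertex of $P'$ sitting on the non-degenerate vertical edge $e_1(P)$ of $P$, which is in the interior of $C_u$ except possibly at the endpoints of $e_1(P)$. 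I would argue that $P'$ is ``locally on the same side'' as $P$ near this contact vertex: since $P,P'$ are interior-disjoint and $P'$ lies (locally) to the left of the vertical segment $e_1(P)$ while $P$ lies to its right, and $P\subseteq\interior(C_u)$, a small neighbourhood of the contact point intersected with the left half-plane is in $\interior(C_u)$, so $P'$ has a point in $\interior(C_u)$. Then, because no polygon of $\OPT$ is cut by $\partial C_u$ in the recursive construction — every polygon is either entirely inside a leaf container or was lost by being cut by some cutting line $\ell$, and by \ref{strongC} cutting lines are vertical segments meeting only polygons that get lost, and by the construction $P'$ has been protected/kept whenever relevant — $P'$ cannot straddle $\partial C_u$. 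Hence $P'\subseteq\interior(C_u)$, i.e.\ $P'\in\OPT(C_u)$.

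\textbf{Main obstacle.} The delicate point, and where I would spend the most care, is formalizing ``$P'$ is not cut by $\partial C_u$.'' Along the root-to-$u$ path of $T$, each bipartition step either (a) leaves a polygon of $\OPT$ entirely on one side, or (b) cuts it with the vertical cutting line $\ell$, in which case that polygon is \emph{lost} and never reappears in any $\OPT(C)$ for a descendant $C$. So if $P'$ were not contained in $C_u$, it would have been lost at some ancestor $u'$ of $u$ by being cut by $\ell_{u'}$; but then, by the way $\ell_{u'}$ is a vertical segment touching $e_1(P)$ at $h(e_{d+1}(P'))$ and $P$ survives to $C_u$ (so $P$ was not cut at $u'$), one derives a geometric contradiction: $\ell_{u'}$ would have to pass through the interior of $P'$ at a point arbitrarily close to $e_1(P)$, forcing it to also enter $\interior(P)$ or to separate $h(e_{d+1}(P'))$ from $P'$, contradicting either that $P$ survives or that $h(e_{d+1}(P'))$ is a vertex of $P'$. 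I would therefore structure the proof as: assume $P'\notin\OPT(C_u)$, locate the ancestor where $P'$ is first removed, and contradict the survival of $P$ via the shared contact point $h(e_{d+1}(P'))\in e_1(P)$.
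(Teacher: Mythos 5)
Your Step 2 attempts to prove something strictly stronger than the lemma, namely that a \emph{fixed} polygon $P'$ seen by $P$ lies in $\OPT(C_u)$, and this is where the argument breaks. The ``locally on the same side'' claim fails exactly when $e_1(P)$ lies on a (vertical) cutting line of $\partial C_u$: then $P \subseteq \interior(C_u)$, but every point immediately to the left of $e_1(P)$ is outside $C_u$, and a polygon seen by $P$ can lie entirely outside $C_u$ without ever having been cut (the partition curve may pass through the contact vertex $h(e_{d+1}(P'))$ along fences and separate $P$ from $P'$). Moreover, since $P$ may see several polygons, the particular $P'$ you fixed can end up outside $C_u$ while a different seen polygon remains inside; so ``the seen polygon lies in $\OPT(C_u)$'' is simply false for an arbitrary choice of $P'$. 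The fallback in your ``main obstacle'' paragraph also does not go through: the assertion that no polygon of $\OPT$ straddles $\partial C_u$ is not an available invariant (a polygon cut at an ancestor does straddle later containers), and the intended geometric contradiction fails because an ancestor cutting line $\ell_{u'}$ that cuts $P'$ may pass through $\interior(P')$ far away from the contact point $h(e_{d+1}(P'))$ — nothing forces it near $e_1(P)$, so the survival of $P$ is not contradicted geometrically.

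The missing idea is that the hypothesis ``$\ell_u$ intersects $P$'' has to be played against the protection machinery, not against containment. The paper argues by contradiction: suppose no polygon seen by $P$ lies in $\OPT(C_u)$. If some seen polygon meets $\interior(C_u)$ without being contained in it, it was intersected by $\ell_{u'}$ at an ancestor $u'$; then rule \ref{a3} added $P$ (the seer) to the protected set at that step, and by Lemma~\ref{lem:fencesFromLeft} together with \ref{strongD} and \ref{stronE} a protected polygon is never intersected by any later cutting line — contradicting that $\ell_u$ cuts $P$. If instead all seen polygons lie outside $C_u$, then $e_1(P)$ lies on a cutting line of $\partial C_u$, so $\ttop(P)$ and $\bbot(P)$ are themselves a top- and bottom-fence protecting $P$ in $C_u$ (Definitions~\ref{def:fence} and~\ref{def:protected_polygon}), and \ref{strongD} again forbids $\ell_u$ from intersecting $P$. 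Your Step 1 is fine, but without this contradiction via \ref{a3} and fence-protection the proof has a genuine gap.
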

\begin{proof}
    Let $\calp$ be the set of polygons seen by $P$.
    For the sake of contradiction, suppose that $\calp \cap \OPT(C_u) = \emptyset$.
    If some $P' \in \calp$ partially lies in $C_u$, i.e., $P' \cap \interior(C_u) \neq \emptyset$, then $P'$ was intersected by the vertical line $\ell_{u'}$ in an ancestor $u'$ of $u$, so $P$ is protected via \ref{a3}.
    Otherwise, if all polygons in $\calp$ lie outside of $C_u$, then $e_1(P)$ lies on a cutting line in $\partial C_u$.
    Therefore, $\ttop(P)$ and $\bbot(P)$ form a top-fence and a bottom-fence, respectively, that protect $P$ by fences in $C_u$.
\end{proof}
\begin{figure}%
\centering
\includegraphics[width=0.4\textwidth]{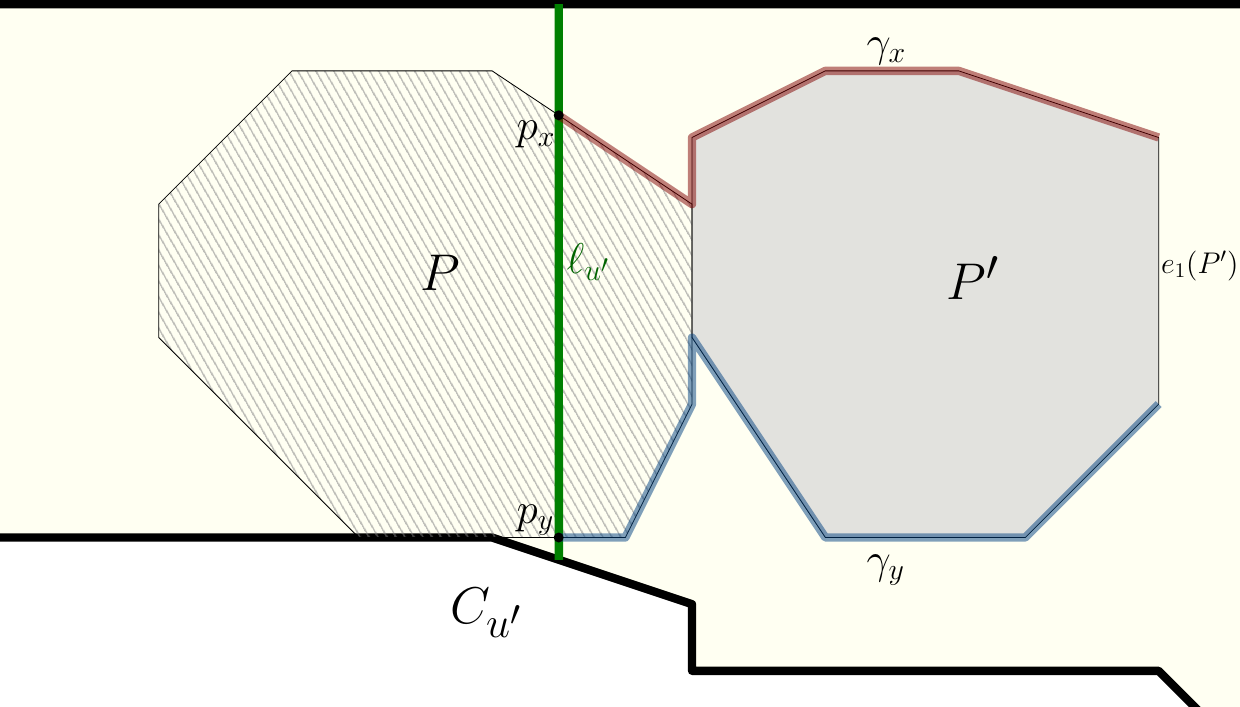}%
\caption{Illustration for the proof of Lemma~\ref{lem:fencesFromLeft}: $P'$ is protected by fences via \ref{a2}.}%
\label{fig:lemma23}%
\end{figure}

%\mainProp*

\begin{proof}[Proof of Proposition~\ref{prop:existenceRecursivePartition}]
By Lemma~\ref{lem:exists_good_option}, 
we have $|\cP|-|Z| \ge \frac{3}{4d} |\OPT| - |Z|$.
Recall that we have already assigned each polygon of $Z$ to a unique leaf of $T$.
By the charging scheme described above and since a protected (and thus charged) polygon is never lost, we have a unique polygon contained in a leaf of $T$ for each lost accountable polygon during the partition. 
The proposition follows since at least half of the polygons in $\cP$ are either lost, 
or at least half of the polygons in $\cP$ are not lost. 
\end{proof}

%Let $u$ be an internal node of $T$, i.e., and $u_1$ and $u_2$ its child nodes.
%By~\ref{strongC},  $\OPT(C_u) \setminus (\OPT(C_{u_1})\cup \OPT(C_{u_2}))$ is exactly the set of polygons in $\OPT(C_u)$ that are intersected by the vertical line segment $\ell_u$.
%For each $P\in X\cap \cP$ we have charged a polygon $P' \in \OPT$ that $P$ sees.

% We use and extend the bibliography file of the rectangle paper (journal version)
\bibliographystyle{plain} %% alpha does not work well with LIPIcs
\bibliography{misr_bib}

\appendix

% Small survey of weakly simple polygons

\section{Maximal extension}
\label{appendix:extension}
For convenience, we restate the definition of a maximal extension. 
\extension*

Recall that $\call_1$ is the set of all lines in directions $v_1, \dots, v_d$ passing through the corners of input polygons, and that $\calv_k$, for $k \in [2d]$ and $\call_k$, for $k\in \{2, \dots, 2d\}$ are recursively defined as follows:
$\calv_k$ is the set of intersection points of any two (non-parallel) lines in $\call_k$,
and $\call_k$ is the set of all lines in directions $\cald$ spanned from all points in {$\calv_{k-1}$}.
In particular, $C^* \in \calg_1$ and $P \in \calg_1$ for every $P \in \OPT$.

\begin{lemma}
There exists a maximal extension of $\OPT$.
\end{lemma}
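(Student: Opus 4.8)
The plan is to construct a maximal extension of $\OPT$ by an iterative ``pushing-out'' procedure, done direction by direction, and to argue that after processing all $2d$ directions the resulting polygons lie on the grid $\calg_{2d}$ and still form an independent set enclosed in $C^*$. First I would fix an ordering of the directions, say $v_1, v_2, \dots, v_{2d}$, and process them one at a time. When processing direction $v_i$, I consider all polygons $P \in \OPT$ whose edge $e_i(P)$ is non-degenerate and whose interior $\interior(e_i(P))$ currently touches neither another polygon of the (current) solution nor $\partial C^*$, and I push the supporting line of $e_i(P)$ outward (in direction $v_i^\perp$, away from $P$) simultaneously for all such $P$, each as far as possible before $\interior(e_i(P))$ first touches another polygon or $\partial C^*$. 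The key point of doing all the parallel edges of one direction together is that the new supporting lines each stop at a position determined by a vertex of some other polygon (or of $C^*$), which keeps the construction on a controlled grid rather than producing ever-new coordinates.

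The main steps, in order, are: (i) show that a single round of pushing in direction $v_i$ preserves independence — since we only enlarge polygons in the $v_i^\perp$ half-space and each edge stops exactly when it would first overlap, no two interiors ever meet; here one must be a little careful that two polygons being pushed ``toward each other'' in the same direction $v_i$ cannot both move into each other's region, which holds because pushing the $v_i$-edge of $P$ outward and pushing the $v_i$-edge of $Q$ outward move in the \emph{same} direction $v_i^\perp$, so at most one of them can be the one that gets blocked by the other; (ii) show convexity is preserved, which is immediate since translating one supporting halfplane of a convex $d$-DOP outward keeps it a convex $d$-DOP with the same (or a subset of the) edge directions; (iii) show that after the round, every edge $e_i(P)$ either is degenerate, or lies on $\partial C^*$, or has $\interior(e_i(P))$ touching another polygon — and crucially that later rounds (for directions $v_j$, $j \neq i$) do not destroy this property, since later pushes only move \emph{other} edges outward and can only make contacts ``more'' rather than less (a touching pair stays touching, as the touched edge is not the one being moved, or if it is, it is moved only until it first hits something, so contact is maintained); (iv) the grid bookkeeping: after round $i$ every new vertex is an intersection of a line through an old vertex (in some direction of $\cald$) with another such line, so inductively after round $k$ all vertices lie in $\calv_{k+1}$ or so, giving containment in $\calg_{2d}$ after all $2d$ rounds (one should track the exact index carefully against the recursive definition of $\calv_k$); (v) define $\ext(P)$ as the final enlarged copy of $P$ and note $P \subseteq \ext(P)$ and that $\ext$ is a bijection $\OPT \to \OPT'$, since we never merge or delete polygons.

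The step I expect to be the main obstacle is (iii)–(iv): proving that processing directions one at a time, with the ``all parallel edges at once'' rule, simultaneously (a) cannot undo a touching relation established in an earlier round and (b) keeps everything on the polynomially-deep grid $\calg_{2d}$ rather than a grid whose complexity blows up with the number of rounds. For (a) the subtle case is when the edge being pushed in round $j$ was itself the witness for some edge's touching condition from round $i < j$; one has to check that ``maximal push until first contact'' still leaves that earlier edge touching something (possibly a different polygon). For (b) the cleanest argument is an invariant: after round $k$, every polygon lies on $\calg_{k'}$ for an explicit $k'$ depending on $k$, proved by noting that each pushed edge comes to rest on a line of $\call$ through an already-existing grid vertex, so new vertices are intersections of such lines and hence in the next $\calv$. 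Getting the index arithmetic to close out at exactly $\calg_{2d}$ (matching the recursive definition in Section~\ref{sec:preliminaries}) is the bit that needs care but no cleverness.

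\begin{proof}[Proof sketch]
We build $\OPT'$ iteratively. Maintain a current independent set $\cals$ of convex $d$-DOPs enclosed in $C^*$, together with a bijection to $\OPT$ under which each current polygon contains its $\OPT$-preimage; initialize $\cals = \OPT$. We process the directions $v_1, \dots, v_{2d}$ in order. When processing $v_i$, let $B_i \subseteq \cals$ be the set of polygons $P$ for which $e_i(P)$ is non-degenerate and $\interior(e_i(P))$ touches neither another polygon of $\cals$ nor $\partial C^*$; for each $P \in B_i$ translate the supporting line $\{x : x^\intercal v_i^\perp = p_i(P)\}$ outward (increasing $p_i(P)$) until $\interior(e_i(P))$ first touches $\partial C^*$ or the boundary of another polygon of $\cals$, all such translations being performed simultaneously. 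Since every moved halfplane moves in the common direction $v_i^\perp$, of any two polygons whose $v_i$-edges would collide only the trailing one is actually blocked, so no two interiors ever overlap; translating a single supporting halfplane of a convex $d$-DOP outward yields again a convex $d$-DOP (possibly with $e_i$ now non-degenerate and some previously tight inequalities becoming slack, i.e.\ with fewer edges, which is allowed after re-tightening). Hence $\cals$ remains an independent set of convex $d$-DOPs in $C^*$, and each polygon only grows, so the bijection and the containment property are preserved. After round $i$, every $P\in\cals$ has $e_i(P)$ degenerate, or contained in $\partial C^*$, or with $\interior(e_i(P))$ touching another polygon; and in every later round $j\ne i$ this stays true, because a later push moves only $v_j$-edges, which can create but not remove a touching of a $v_i$-edge (and if a $v_i$-edge was the witness, it remains in contact with something since a push stops at first contact). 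Thus the final $\OPT'$ satisfies \ref{lem:extensionCASE4}, \ref{lem:extensionCASE3}, and \ref{lem:extensionCASE1}. Finally, each edge pushed in a round comes to rest on a line of $\call$ passing through an existing grid vertex, so each newly created vertex is the intersection of two such lines; by induction on the round index this places all polygons of $\cals$ on $\calg_{2d}$ after the $2d$ rounds, as in the recursive construction of the $\calv_k$ in Section~\ref{sec:preliminaries}.
\end{proof}
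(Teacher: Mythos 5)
Your overall plan (push edges outward direction by direction, one grid level per direction) is the same as the paper's, but there is a genuine gap at the heart of your steps (i) and (iii): you assume that a push of $e_i(P)$ can only be stopped by $\interior(e_i(P))$ coming into contact with another polygon or with $\partial C^*$. For $d\ge 3$ this is false. As $e_i(P)$ moves outward, its endpoints slide along the (slanted) supporting lines of the adjacent edges, so the first contact with another polygon $\tilde P$ can happen at an \emph{endpoint} of $e_i(P)$ only, with $\interior(e_i(P))$ touching nothing, while any further push creates a two-dimensional overlap with $\tilde P$. Taken literally, your stopping rule (``until the interior first touches'') destroys independence; with the natural repair (``as far as possible while staying disjoint'') the round ends with an edge whose interior touches nothing, so the conclusion of your step (iii), and hence \ref{lem:extensionCASE4}, fails — and later rounds cannot fix it, since they move only other edges of $P$ outward and never bring $\interior(e_i(P))$ into contact with anything. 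This corner-contact situation is exactly what the paper's event \ref{case:touchPolygon} is for: when the blocking contact is at a vertex, the adjacent, previously redundant constraint of $P$ is made tight along the blocker's edge (setting $p_j(P)=-p_{j+d}(\tilde P)$), which converts the corner contact into a collinear overlap in direction $j$ and lets the push in direction $i$ continue; the push then terminates only with \ref{case:reachPolygon} (interior overlap, giving \ref{lem:extensionCASE4} for that edge) or \ref{case:degenerate} (the edge degenerates, and the paper separately argues \ref{lem:extensionCASE4} if it later becomes non-degenerate).

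This missing mechanism also undermines your step (iv): the reason the construction stays on a grid of depth $2d$ is that every stopping position is collinear with an edge of direction $i+d$ of some other polygon (such edges are never moved during round $i$), and every edge newly activated at a corner event is collinear with an existing edge of the blocker — facts your sketch does not establish and which are precisely the content of the paper's first claim. Incidentally, your insistence on pushing all $v_i$-edges \emph{simultaneously} is not what buys the grid bound; the paper processes polygons one at a time within a direction and the grid argument still closes, because the blocking features lie on the previous grid level.
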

\begin{proof}
    We describe a process by which, sequentially, every edge of every polygon in $\OPT$ is extended at most once, such that the resulting instance is a maximal extension of the input instance (so we have \ref{lem:extensionCASE1} by construction).
    We use the following claim which is simple to prove.

    \begin{claim}\label{claim:criterion-intersection}
        Two polygons $P, P' \in \cali$ intersect if and only if $p_i(P) > -p_{i+d}(P')$ for every $i \in [2d]$.
    \end{claim}
    
    First, for every $P \in \OPT$, we initialize $D(P)$ as the set of all indices $i\in [2d]$ such that $e_i(P)$ is degenerate.
    %\ant{and does not touch any other polygon $P' \in \OPT\setminus\{P\}$}.
    We want to assure that for every $i \in D(P)$, $e_i(P)$ stays degenerate during the extension, unless it separates $P$ from another polygon, so we set $p_i(P) = + \infty$ for every $i \in D(P)$.
    %\ant{(but $e_i(P)$ is still meant to be th edge with $t(e_i(P)) = h(e_{i-1}(P))$ and $h(e_i(P)) = t(e_{i+1}(P))$)}.
    %It is possible that $D(P)$ changes during the extension process.

    We repeat the following sweep line algorithm for every $i \in [2d]$:
    for every polygon $P \in \OPT$ at a time, unless $i \in D(P)$, continuously increase $p_i(P)$ (we say that we \emph{extend $P$ in the $i$-th direction}) until one of the following \emph{events} occurs, see Figure~\ref{fig:maximalExtensionProcess}.
    \begin{enumerate}[label=\textbf{[S\arabic*]}, leftmargin=*]
        \item \label{case:touchPolygon} $P$ touches some $\tilde P \in \OPT$ (or $\partial C^*$) but $\interior(e_i(P))$ does not overlap with $\partial \tilde P$ (or $\partial C^*$) and $P$ intersects $\tilde P$ (or the exterior of $C^*$) if $p_i(P)$ is increased any further.
        \item \label{case:reachPolygon} $\interior(e_i(P))$ overlaps with $\partial \hat P$ for some $\hat P \in \OPT$ or $\partial C^*$.%
        %\footnote{In \ref{case:touchPolygon} or \ref{case:reachPolygon}, $P'$ or $\hat P$ can be a polygon in $\OPT$ or a line segment on $\partial C^*$, we assume the former without loss of generality.}
        %
        \item \label{case:degenerate} $e_i(P)$ becomes degenerate.
    \end{enumerate}
    To shorten the analysis, we assume without loss of generality that at \ref{case:touchPolygon} and \ref{case:reachPolygon}, we always touch some polygon $\tilde P$ or $\hat P$, respectively, instead of $\partial C^*$.
    Whenever we encounter \ref{case:touchPolygon} (assuming it does not happen at the same time as \ref{case:reachPolygon} or \ref{case:degenerate}), then it means
    that $P$ touches a (non-degenerate) edge $e_{j+d}(\tilde P)$ such that $j \notin \{i, i+d\}$ (and such that $\interior(e_{j+d}(\tilde P))$ intersects $P$ if we extend it further).
    Set $p_{j}(P) = -p_{j+d}(\tilde P)$, remove $j$ from $D(P)$
    %\footnote{Recall that we count modulo $2d$, i.e., $e_{j+d}= e_{j-d}$ for every $j \in [2d]$, $P' \in \OPT$.}
    and then keep on increasing $p_i(P)$.
    When \ref{case:reachPolygon} occurs, stop increasing $p_i(P)$.
    In case \ref{case:degenerate} occurs, stop increasing $p_i(P)$, add $i$ to $D(P)$ and set $p_i(P) = +\infty$.

    \begin{figure}%
    \centering
    \subfloat[\centering The extension finishes with \ref{case:degenerate} after \ref{case:touchPolygon}.]
    {\includegraphics[width=0.42\textwidth, page=1]{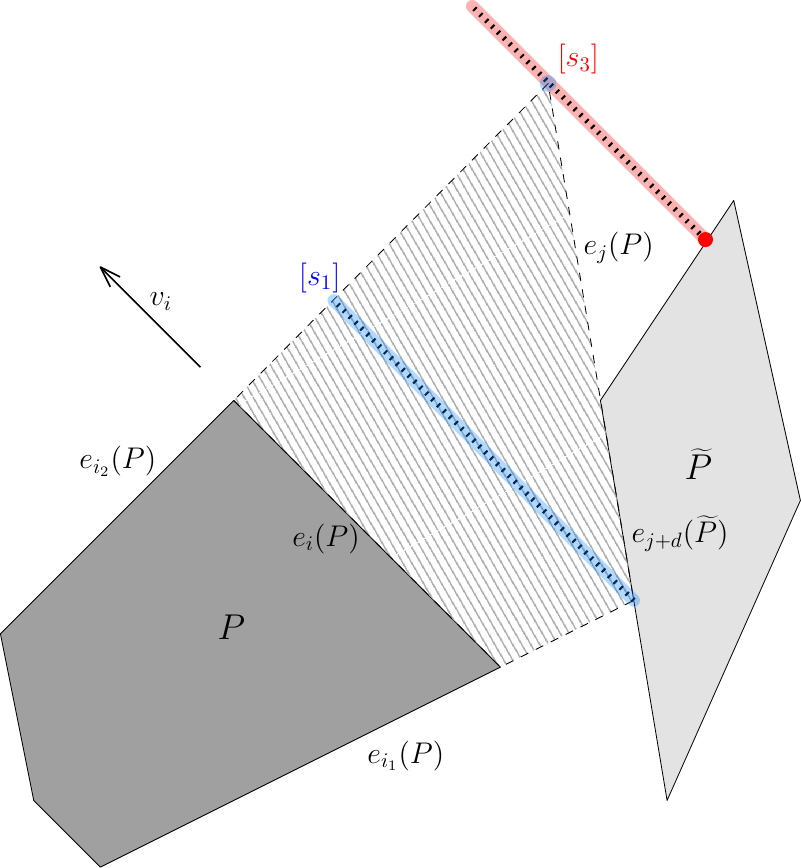} }%
    \qquad
    \subfloat[\centering The extension finishes with \ref{case:reachPolygon}.]{{\includegraphics[width=0.38\textwidth, page=2]{Figures_PDF/fig_maximalExtensionProcess.pdf} }}%
    \caption{
    The polygon $P$ is extended in the $i$-th direction. The position of $e_i(P)$ at each event is highlighted. The point and line highlighted in red lie on $\calg_i$, all other points and lines lie on $\calg_{i-1}$.
    }
    \label{fig:maximalExtensionProcess}
    \end{figure}
    
    It is clear that the extension of $e_i(P)$ must finish by either \ref{case:reachPolygon} or \ref{case:degenerate}, since otherwise $P$ grows indefinitely which contradicts $P \subseteq \interior(C^*)$.
    On the other hand, $P$ can be extended as long as \ref{case:reachPolygon} or \ref{case:degenerate} has not occurred.
    Whenever \ref{case:touchPolygon} occurs, then at the moment of contact, the edges $e_{j+d}(\tilde P)$ and $e_{j}(P)$ overlap, see Figure~\ref{fig:maximalExtensionProcess}(a).
    However since increasing $P$ further in the $i$-th direction makes $P$ and $\tilde P$ intersect, so we have $p_{j}(P) > -p_{j+d}(\tilde P)$ by Claim~\ref{claim:criterion-intersection}.
    This means that the $j$-th constraint of $P$ is redundant and in particular that $e_{j}(P)$ is degenerate until the moment of contact between $P$ and $\tilde P$.
    By setting $p_{j}(P) = -p_{j+d}(\tilde P)$, we can continue to extend $P$ in the $i$-th direction without causing an intersection between $P$ and $\tilde P$.
    Notice that starting from now, increasing $p_j(P)$ causes an intersection between $P$ and $\tilde P$. In particular, it is a non-redundant constraint.
    Therefore, $e_{j}(P)$ will never be degenerate again once $p_i(P)$ is increased further.

    We now argue that \ref{lem:extensionCASE3} and \ref{lem:extensionCASE4} hold by proving the two following claims.

    \begin{claim}
        After the extension process, $\OPT$ lies on $\calg_{2d}$.
    \end{claim}

    \begin{claimproof}
        We proceed by induction in the number of extensions, i.e., show that $\OPT$ lies on $\calg_{k}$ once it has been extended in $k$ directions.

        Assume now that $\OPT$ has been extended in $i-1$ directions, where $i \in \{1, \dots, 2d\}$.
        We focus on the extension in the $i$-th direction of a single $P\in \OPT$.
        Assume first that no polygon has been extended in the $i$-th direction yet, meaning that $P' \in \calg_{i-1}$ for every $P' \in \OPT$ by assumption.
        We show that at any event \ref{case:touchPolygon}, \ref{case:reachPolygon} or \ref{case:degenerate}, $P \in \calg_i$ is maintained.
        Assume $i \notin D(P)$ and let $i_1$ and $i_2$ be the indices of the two non-degenerate edges incident to $e_i(P)$ (w.l.o.g.\ $h(e_{i_1}(P)) = t(e_i(P))$, $t(e_{i_2}(P)) = h(e_i(P))$ and $i_1 < i < i_2$). Notice during the continuous extension of $P$ in the $i$-th direction, $e_{i_1}$ and $e_{i_2}$ get prolonged and all the edges $e_{i'}(P)$, for $i' \notin \{i, i_1, i_2\} \cup D(P)$, are not altered.
        
        At \ref{case:touchPolygon}, assume that $e_{j+d}(\tilde P)$ touches $t(e_i(P))$ (the argument is identical if it touches $h(e_i(P))$) and therefore $e_j(P)$ as well. 
        This means that $i_1 < j < i$.
        In particular, once we have set $p_{j}(P) = -p_{j+d}(\tilde P)$ and continue extending $P$, $e_{j}(P)$ is now a non-degenerate edge incident to $e_i(P)$ while $e_{i_1}$ does not change any longer.
        We have $e_{i_1} \in \call_{i-1}$ since it has only be prolonged since the beginning of the extension, and $e_{j}(P) \in \call_{i-1}$ since it is collinear with $e_{j+d}(\tilde P)$.
        Notice that $t(e_{j}(P)) = h(e_{i_1}(P)) \in \calv_{i-1}$.
        Now substitute $i_1=j$ and repeat the argument for the upcoming events.

        When we arrive at \ref{case:reachPolygon}, all edges of $P$ lie on $\call_{i-1}$, since $e_i(P)$ is collinear with $e_{i+d}(\hat P) \in \call_{i-1}$. Every vertex of $P$ lies on $\calv_{i-1}$.

        At \ref{case:degenerate}, we have $e_{i'}(P) \in \call_{i-1}$ for every $i' \neq i$, and every vertex of $P$ lies on $\calv_{i-1}$.
        This implies $e_i(P) \in \call_i$.

        Assume now $P$ that is not the first polygon to get extended in the $i$-th direction.
        If \ref{case:touchPolygon} occurs and $\tilde P$ has already been extended in the $i$-th direction, since $j \notin \{i, i+d\}$, we have $e_{j+d}(\tilde P) \in \call_{i-1}$ and thus $e_{j}(P) \in \call_{i-1}$ as well.
        Similarly, at \ref{case:reachPolygon}, we have $e_{i+d}(\hat P) \in \call_{i-1}$ (for any $P' \in \OPT$, $e_{i+d}(P')$ stays intact since it is parallel to $e_i(P')$, i.e., $i_1 \neq i+d$, $i_2 \neq i+d$), so $e_{i}(P) \in \call_{i-1}$ and every vertex of $P$ lies on $\calv_i$.
        For \ref{case:degenerate}, there is nothing more to show since is does not establish contact of $P$ with a new polygon.
    \end{claimproof}

    \begin{claim}
        After the extension process, $\OPT$ satisfies \ref{lem:extensionCASE4}.
    \end{claim}

    \begin{claimproof}
        Clearly, for any $P \in \OPT$, $i \in [2d]$, immediately after the extension of $P$ in the $i$-th direction (assuming $i \notin D(P)$ before the extension), $e_i(P)$ satisfies \ref{lem:extensionCASE4}.
        If the extension of $P$ in the $i$-th direction has ended with \ref{case:reachPolygon}, then increasing $p_i(P)$ makes $P$ and $\hat P$ intersect. This means that $e_i(P)$ is not degenerate, and $\interior(e_i(P))$ is non-empty.
        In particular, $\interior(e_i(P))$ touches $\hat P$ and satisfies \ref{lem:extensionCASE4} at the end of the process.
        If the extension of $P$ in the $i$-th direction has ended with \ref{case:degenerate}, then either $e_i(P)$ stays degenerate until the end (so \ref{lem:extensionCASE4} is trivially satisfied) or it becomes non-degenerate during the extension of $P$ in another direction than $i$.
        The latter case occurs at the event \ref{case:touchPolygon} during the extension of $P$ in some direction $i' \in [2d]\setminus \{i, i+d\}$, when $e_{i'}(P)$ touches some $\tilde P \in \OPT$ on the edge $e_{i+d}(\tilde P)$. Then, the algorithm sets $p_i(P) = -p_{i+d}(\tilde P)$ and continues the extension of $P$ in the $i'$-th direction, which prolongates $e_i(P)$. As argued above, at the end of the entire extension process, $e_i(P)$ is non-degenerate, and $\interior(e_i(P))$ touches $\tilde P$, thus satisfying \ref{lem:extensionCASE4}.
    \end{claimproof}
\end{proof}

% Proof of partitioning lemma
\section{Proof of the partitioning lemma}
\label{sec:paritioning2}

\partitioningTrue*

\begin{proof}
Let $C$ be a structured container with boundary $\partial C=s_1 f_1 s_2 f_2 \cdots s_\kappa f_\kappa$, where $s_1,\dots, s_{\kappa}$ are cutting lines and $f_1, \dots, f_{\kappa}$ are  fences (to facilitate the proof, we write  $s_{\kappa+1} = s_1$ and $f_{\kappa+1} = f_1$).
%\footnote{We index modulo $\kappa$, e.g., $s_{\kappa+1} = s_1$ and $f_{\kappa+1} = f_1$.}
Moreover, for some index $\kappa'$ with $\kappa' < \kappa$, the cutting lines $s_1,\dots,s_{\kappa'}$ are left cutting lines and $s_{\kappa'+1},\dots,s_5$ are right cutting lines.
We also assume that $\kappa$ is minimal\mnote{is this condition necessary for the proof? Because I think it is not the case in the figures, so I would probably remove this additionnal assumption \ant{it is used multiple times actually}}, i.e., $f_{j} \cup s_{j+1} \cup f_{j+1}$ never forms a fence (otherwise we merge these curves onto one fence $\tilde f$ and $s_1 f_1 \cdots, s_j \tilde f s_{j+2} \cdots f_\kappa$ is still the boundary of $C$ but the index $\kappa$ decreases).
Without loss of generality, we assume that $\kappa'\ge \lceil \frac{\kappa}{2} \rceil$, that is, we assume that there are not less left cutting lines than right cutting lines, otherwise we mirror the instance vertically.
We first treat the case when $\kappa = 5$. In particular, we have $\kappa'\ge 3$. 
We will address the proof of cases with $\kappa \leq 4$ at the end of the proof since it suffices to adapt the arguments made for $\kappa = 5$.

Without loss of generality, we assume that $C$ contains no spurs. A \emph{spur} is a vertex whose two incident edges overlap, this might happen if some line segment $s_j$ is degenerate and the two edges in $f_j$ and $f_{j+1}$ that are incident to $s_j$ overlap. If the edges $\overline{w'w}$ and $\overline{ww''}$ form a spur around $w$, then replace those two edges with the new edge $\overline{w'w''}$ and reiterate this procedure as long as there is a spur. Removing spurs this way does not decrease the interior of a container nor create any new crossing.
In particular, any polygon (with non-crossing, positively oriented boundary) has non-empty interior if and only if its boundary is non-degenerate once all spurs are removed.

Suppose there exists a polygon $P_0$ in $C$ protected from the left via $s_2$; 
take such $P_0$ with the right-most right edge (breaking ties arbitrarily). 
Since $P_0$ is protected (see Definition~\ref{def:protected_polygon}) from the left via $s_2$, there exist a top-fence $\gamma_h$ emerging from $s_2$ with endpoint $h = h(e_1(P_0))$ and a bottom-fence $\gamma_t$ emerging from $s_2$ with endpoint $t = t(e_1(P_0))$. 
%In case $P_0$ does not exist, we will define $h$ and $t$ explicitly by case analysis. 

Now, let $p_h$ and $p_t$ be the two points that satisfy the following conditions: 
\begin{enumerate}[label = (\roman*)]
    \item\label{first} $p_h$ and $p_t$ are in the same vertical line as $h$ and $t$ such that $p_h$ is above or coincides with $h$ and $p_t$ is below or coincides with $t$.
    \item\label{second} $p_h$ (resp. $p_t$) lies either on $\partial C$ or on $\partial P_h \setminus (e_1(P_h) \cup e_{d+1}(P_h))$ (resp.\ $\partial P_t \setminus (e_1(P_t) \cup e_{d+1}(P_t))$), where $P_h\in\OPT(C)$ (resp.\ $P_t\in\OPT(C)$) is protected in $C$.
    \item\label{third} The semi-open vertical line segments $\overline{hp_h} \setminus \{h\}$ and $\overline{tp_t} \setminus \{t\}$ do not contain any point that satisfies condition~\ref{second}. 
\end{enumerate}
It is clear that these two points exist and are unique. 
We will construct $\Gamma$ by case analysis, depending on the location of $p_h$ and $p_t$.

%Once $\Gamma$ has been determined by case analysis, let $a = t(\Gamma)$, $b = h(\Gamma)$, $\calb^+$ be the curve in $\partial C$ with $t(\calb^+) = b$, $h(\calb^+) = a$, and $\calb_1 = \Gamma \cup \calb^+$. Then, once all spurs in $\calb_1$ are removed, we get $\partial C_1 = \calb_1$.
%$C_2$ is defined complementarily: let $\Gamma^-$ be $\Gamma$ with inverse orientation, then $\calb^-$ is the curve in $\partial C$ with $t(\calb^-) = a$, $h(\calb^-) = b$, and $\calb_2 = \calb^- \cup \Gamma^-$. Then, once spurs are removed from $\calb_2$, we get $\partial C_2 = \calb_2$.

\begin{figure}%
\centering
\includegraphics[width=0.7\textwidth]{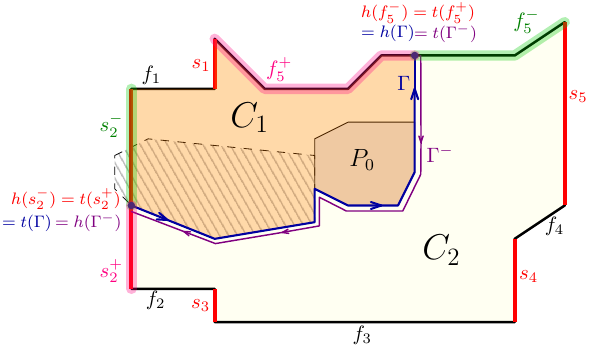}%
\caption{Illustration of the notation used to construct $C_1$ and $C_2$ according to the location of $p_h$ and $p_t$. In this example, $C_1$ has boundary $\partial C_1 = \Gamma f_5^+ s_1 f_1 s_2^-$ and $C_2$ has boundary $\partial C_2 = s_2^+ f_2 s_3 f_3 s_4 f_4 s_5 f_5^- \Gamma^-$.}%
\label{fig:partitioningStructure}%
\end{figure}

Before we proceed,
let us first introduce some notation that allows us to define the boundaries of $C_1$ and $C_2$ in a compact form, see Figure~\ref{fig:partitioningStructure}.
For $i \in [\kappa]$, $r_i \in \{s_i, f_i\}$ and $c \in \{h(\Gamma), t(\Gamma)\}$ (as it will be defined below) such that $c \in r_i$, we define $r_i^+$ to be the curve on $r_i$ from $c$ to $h(r_i)$ and $r_i^-$ to be the curve on $r_i$ from $t(r_i)$ to $c$ ($r_i^+$ or $r_i^-$ might only consist of a singleton point). In particular, $r_i = r_i^-r_i^+$ and $t(r_i^-) = t(r_i)$, $h(r_i^-) = c = t(r_i^+)$ and $h(r_i^+) = h(r_i)$.\mnote{this sentence is a bit difficult to sparse. I guess it would be better to define this later, the first time that we use it. Maybe also refer to a figure when we use these notaions + and - }
For the oriented curve $\Gamma$ that we will define below, let $\Gamma^-$ be $\Gamma$ but with inverse orientation (i.e., $h(\Gamma^-) = t(\Gamma)$ and $t(\Gamma^-) = h(\Gamma)$).

We first cover the simplest cases, namely when $P_0$ exists and $\interior(e_1(P_0)) \subseteq \interior(C)$, which are illustrated in Figure~\ref{fig:partitioning_others}.

\begin{enumerate}[label=\textbf{(A\arabic*)}, wide, labelwidth=0pt, labelindent=0pt]
\setlength{\itemsep}{5pt}
    \item \label{case:f1f2}
    \textbf{Both $p_h$ and $p_t$ lie on $f_1 \cup f_2$.}
    This implies $p_h \in f_1$ and $p_t \in f_2$ since $f_1$ lies above $f_2$.%
    \footnote{For two objects $A,B \subseteq \R^2$, we say that \emph{$A$ lies above (below) $B$} if for every $(a,b) \in A \times B$ such that $a$ and $b$ have the same $x$-coordinate, $a$ has larger (smaller) $y$-value than $b$. We define \emph{lying on the left} or \emph{right} analogously.}
    Choose $\Gamma = \overline{p_h p_t}$, thus $\partial C_1 = \Gamma f_2^+ s_3 \cdots s_1 f_1^-$ and $\partial C_2 = f_1^+ s_2 f_2^- \Gamma^-$.
    %If $f_1^-$ (or $f_2^+$) is a singleton, then $\Gamma$ and $s_1$ (resp.\ $\Gamma$ and $s_3$) merge into one cutting line.
    %Intuitively, we can say that $C_1$ corresponds to $C$ but with the cutting line $s_2$ having being shifted to the right and replaced by $[p_hp_t]$, and $C_2$ contains the gap in $C$ on the left of $[p_hp_t]$.
    %
    \item \label{case:f3f4f5}
    \textbf{$p_h$ or $p_t$ (or both) lies on $\bigcup_{i \geq 3} f_i$.}
    Assume that $p_t \in f_j$, $j \geq 3$ (if $p_t$ lies on multiple fences, choose $j$ to be the smallest index).
    Choose $\Gamma = \gamma_h \overline{h p_t}$, thus $\partial C_1 = \Gamma f_j^+ s_{j+1} \cdots, f_1 s_2^-$ and $\partial C_2 = s_2^+ f_2 \cdots f_j^- \Gamma^-$.
    \item \label{case:protectedRight}
    \textbf{$P_h$ or $P_t$ (or both) exists and is protected from the right in $C$.}
    Assume that $P_h$ is protected from the right.
    Let $\delta_h$ be a bottom-fence starting at $p_h$ and ending on some right cutting line $s_j$ that protects $P_h$.
    Choose $\Gamma = \gamma_h \overline{h p_h} \delta_h$, thus $\partial C_1 = \Gamma s_j^+ f_j \cdots f_1 s_2^-$ and $\partial C_2 = s_2^+ f_2 \cdots, f_{j-1} s_j^- \Gamma^-$.
    \item \label{case:protectedLeftLeft}
    \textbf{Both $P_h$ and $P_t$ exist and are protected from the left in $C$.}
    From condition~\ref{second}, there exists a bottom-fence $\delta_h$ from $s_1$ to $p_h$ which protects $P_h$ and terminating in $p_h$
    and a top-fence $\delta_t$ from $s_j$ ($3 \leq j \leq \kappa'$) to $p_t$ which protects $P_t$.
    Choose $\Gamma = \delta_h  \overline{p_h p_t} \delta_t$, thus $\partial C_1 = \Gamma s_j^+ f_j \cdots f_5 s_1^-$ and $\partial C_2 = s_1^+ f_1 s_2 \cdots s_j^- \Gamma^-$.
    \item \label{case:f1protectedLeft}
    \textbf{$P_t$ exists and is protected from the left in $C$ and $p_h \in f_1$; or $P_h$ exists and is protected from the left in $C$ and $p_t \in f_2$.}
    Without loss of generality, assume the first situation.
    Let $\delta_t$ be the top-fence from $s_j$ ($3 \leq j \leq \kappa'$) to $p_t$ which protects $P_t$.
    Choose $\Gamma = \overline{p_h p_t} \delta_t$, thus $\partial C_1 = \Gamma s_j^+ f_j \cdots s_1 f_1^-$ and $\partial C_2 = f_1^+ s_2 \cdots f_{j-1} s_j^- \Gamma^-$.
\end{enumerate}

\begin{figure}%
    \centering
    \subfloat[\centering Case \ref{case:f1f2}.]
    {\includegraphics[width=0.46\textwidth, page=6]{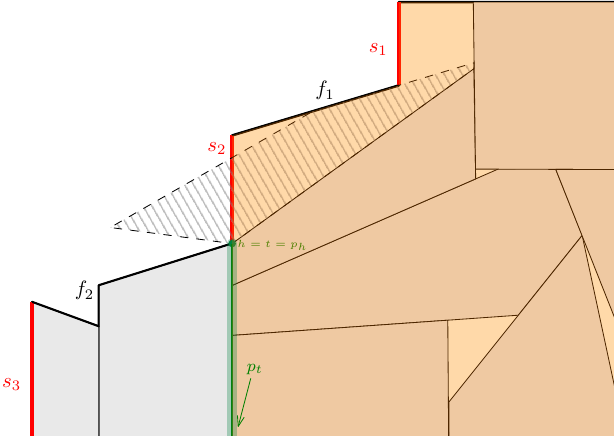} }%
    \qquad
    \subfloat[\centering Case \ref{case:f3f4f5}.]{{\includegraphics[width=0.46\textwidth, page=7]{Figures_PDF/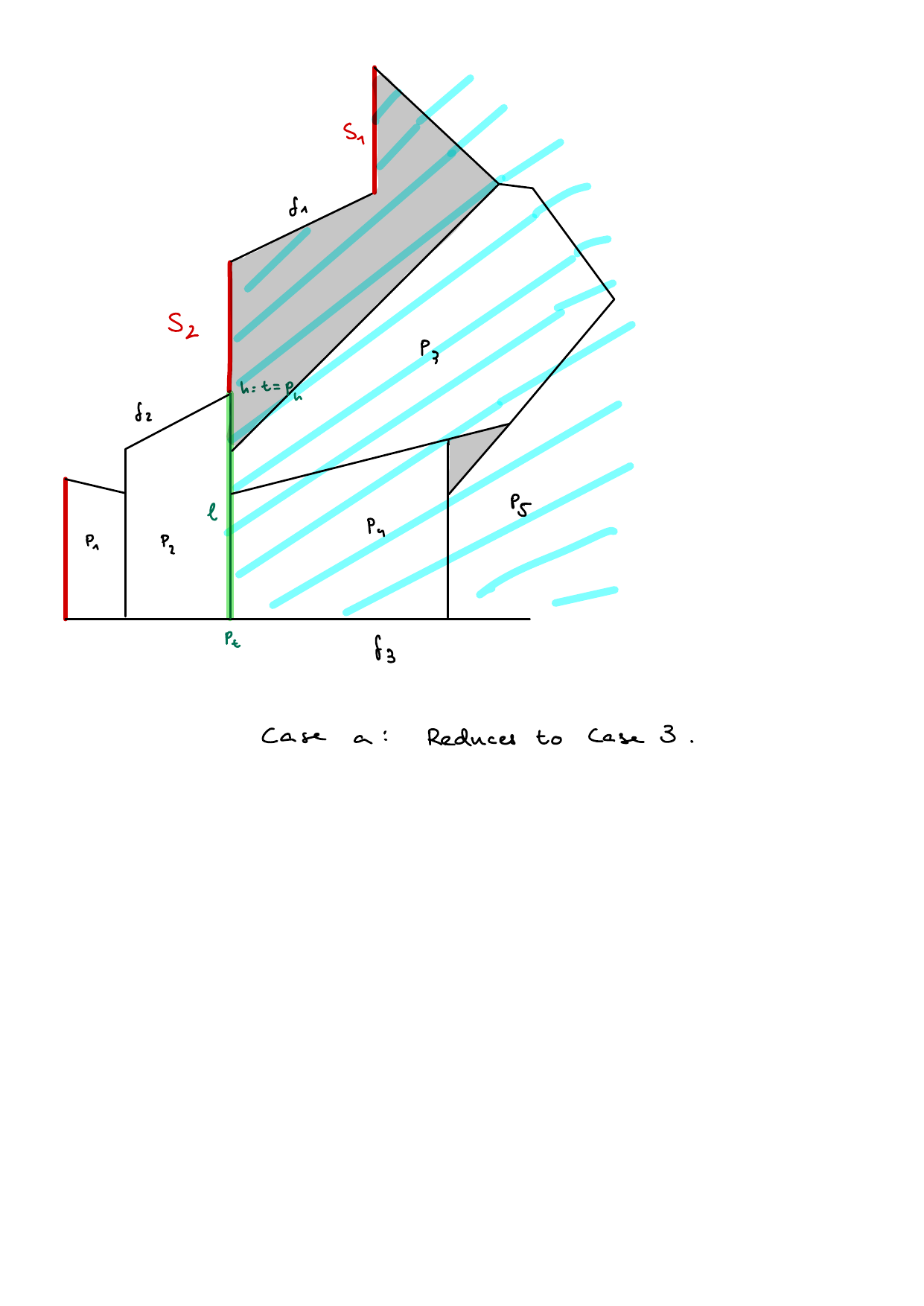} }}%
    \\
    \vspace{30pt}
    \subfloat[\centering Case \ref{case:protectedRight}.]
    {\includegraphics[width=0.46\textwidth, page=8]{Figures_PDF/fig_partitioning.pdf} }%
    \qquad
    \subfloat[\centering Case \ref{case:protectedLeftLeft}.]{{\includegraphics[width=0.46\textwidth, page=9]{Figures_PDF/fig_partitioning.pdf} }}%
    \\
    \vspace{30pt}
    \subfloat[\centering Case \ref{case:f1protectedLeft}.]{{\includegraphics[width=0.46\textwidth, page=10]{Figures_PDF/fig_partitioning.pdf} }}%
    \caption{
    Bipartitions of $C$ for $\interior(e_1(P_0)) \subseteq \interior(C)$.
    The areas $\interior(C_1)$ (orange) and $\interior(C_2)$ (yellow), are separated by the highlighted curve $\Gamma$
    ($\gamma_h, \gamma_t$ in blue, $\ell$ in green, $\delta_h, \delta_t$ in orange).
    }
    \label{fig:partitioning_others}
\end{figure}

Now, we discuss the instances where $P_0$ exists but $\interior(e_1(P_0)) \not\subseteq \interior(C)$, or equivalently, $\interior(e_1(P_0)) \cap \partial C \neq \emptyset$, see Figure~\ref{fig:partitioning_touchBoundary}. For \ref{case:sj}, $\Gamma$ can be constructed straight away, while \ref{case:intersectFence} reduces to one of \ref{case:f1f2} to \ref{case:f1protectedLeft} by redefining $p_h$ and $p_t$ as an intermediate step.

\begin{enumerate}[label=\textbf{(B\arabic*)}, wide, labelwidth=0pt, labelindent=0pt]
\setlength{\itemsep}{5pt}
    \item \label{case:sj}
    \textbf{$\overline{p_hp_t}$ overlaps a right cutting line $s_j$.}
    Assume $p_h \in s_j$.
    Choose $\Gamma = \gamma_h$, thus $\partial C_1 = \Gamma s_j^+ f_j \cdots s_1 f_1 s_2^-$ and $\partial C_2 = s_2^+ f_2 \cdots s_j^- \Gamma^-$.
    \item \label{case:intersectFence}
    \textbf{$\interior(e_1(P_0))$ intersects a fence $f_j$.} \mnote{maybe this could be the first case, even before defining $p_h$ and so ? instead of re-defining them temporarily}
    \anote{I don't know... the point is precisely that it reduces to the previous cases by making temporary re-definitions and is not really a case of its own, just an intermediate step}
    Assume that $f_j$ is oriented from left to right.
    Therefore $t \in f_j$, and $e_1(P_0)$ intersects the interior of the vertical line $\tilde e$ of a top-fence on $\partial C$, thus $\tilde e = e_{d+1}(\tilde P)$ for some $\tilde P \in \OPT \setminus \OPT(C)$; furthermore $j \notin \{1,5\}$.
    % First set $t = p_t = t(\tilde e)$.
    We redefine $p_h$ and $p_h$:
    if $j \neq 2$, then set $p_t$ and $p_h$ as $t(\tilde e)$ or $h$, whichever lies lower. and then construct $\Gamma$ as in \ref{case:f3f4f5}, i.e., $\Gamma = \gamma_h \overline{hp_t}$ and $\partial C_1$, $\partial C_2$ accordingly.
    For $j=2$,
    set $p_t = t(\tilde e)$, and then
    recompute $p_h$ according to \ref{first}, \ref{second} and \ref{third} and additionally such that any point in the semi-open vertical line segment $\overline{t(\tilde e)p_h} \setminus \{t(\tilde e)\}$ satisfies \ref{second} (so $p_h$ lies above $f_2$).
    Then, we apply the case analysis and construction of the cases \ref{case:f1f2} to \ref{case:f1protectedLeft}.
    Notice that in any case, we have $\interior(\overline{p_hp_t}) \subseteq \interior(C)$.
\end{enumerate}

\begin{figure}%
    \centering
    \subfloat[\centering Case \ref{case:sj}.]
    {\includegraphics[width=0.52\textwidth, page=5]{Figures_PDF/fig_partitioning.pdf} }%
    \qquad
    \subfloat[\centering Case \ref{case:intersectFence}.]{{\includegraphics[width=0.35\textwidth, page=4]{Figures_PDF/fig_partitioning.pdf} }}%
    \caption{
    Bipartitions of $C$ if the right side of $P_0$ touches $\partial C$. $\interior(C_1)$ is orange, $\interior(C_2)$ is yellow.
    }
    \label{fig:partitioning_touchBoundary}
\end{figure}

We now consider instances where $P_0$ does not exist, see Figure~\ref{fig:partitioning_noP0}.
If $s_1$ and $s_3$ both lie on the right of $s_2$, then it suffices to ``shift'' $s_2$ a little to the right, as defined in \ref{case:noP0rr}.
Otherwise, i.e., in \ref{case:noP0l}, we make $\ell$ going down (or up) starting from the bottom (resp.\ top) of $s_2$ and reduce it to simpler cases, depending on the location of $p_h$ and $p_t$.
There might also occur another very particular case, namely if there is one or two polygons that touches $s_2$ on its top and/or bottom (so $\ell$ is degenerate) and is protected from the left by $f_2$ resp.\ $f_1$, so there is no ``gap'' above resp.\ below the polygon(s).
We have to handle this case individually.

Let us describe the latter case more formally.
Assume that $s_3$ lies on the left of $s_2$, and let $P_3$ be the polygon in $\OPT(C)$ (if it exists) with
\anote{inline or regular enumerate?}
\begin{enumerate*}%[label=\textbf{[c\arabic*]}]
    \item $h(s_2)$ lies on the bottom (but not on the side) of $P_3$
    \item $P_3$ is protected from the left via $s_3$
    \item for the top-fence $\mu_3$ protecting $P_3$, we have $f_2 \subseteq \mu_3$.
\end{enumerate*}
If $s_1$ lies on the left of $s_2$, the polygon $P_1 \in \OPT(C)$, if it exists, touches $t(s_2)$ on its bottom (and the bottom-fence $\mu_1 \supseteq f_1$ protecting $P_1$), is defined symmetrically.

\begin{enumerate}[label=\textbf{(C\arabic*)}, wide, labelwidth=0pt, labelindent=0pt]
\setlength{\itemsep}{5pt}
    \item \label{case:noP0rr}
    \textbf{$P_0$ does not exist and $s_1$ and $s_2$ both lie on the right of $s_2$.}
    Choose $\Gamma$ to be any vertical line segment $\ell$ (on $\calg_{2d}$) with $t(\ell) \in f_1$ and $h(\ell) \in f_2$ which lies strictly on the right of $s_2$ and that does not intersect any polygon in $\OPT(C)$. If $\ell$ does not exist, then the leftmost $P \in \OPT$ between $f_1$ and $f_2$ must touch $s_2$ and is therefore protected, which contradicts the nonexistence of $P_0$. Then $C_1$ and $C_2$ are constructed as in \ref{case:f1f2}, i.e., $\partial C_1 = \Gamma f_2^+ s_3 \cdots s_1 f_1^-$ and $\partial C_2 = f_1^+ s_2 f_2^- \Gamma^-$.
    \item \label{case:noP0lP1P3}
    \textbf{$P_0$ does not exist and either: $s_1$ lies on the right of $s_2$ and $P_3$ exists, or $s_3$ lies on the right of $s_2$ and $P_1$ exists, or both $P_1$ and $P_3$ exist.}
    Let us first consider the former occurrence, i.e., when $s_1$ lies on the right of $s_2$ and $P_3$ exists (the construction for the second one is analogous).
    We assume that $s_2$ is non-degenerate or that the segment of $\mu_3$ on the right of $s_2$ does not form a spur with $f_1$. Indeed, otherwise $s_2$ can be dragged at the right extreme point of the spur.
    Let $\ell$ be any vertical line segment (on $\calg_{2d}$) strictly on the right of $s_2$ with $t(\ell) \in f_1$ and $h(\ell) \in \mu_3$ that does not intersect any polygon in $\OPT(C)$. $\ell$ exists by the same argument as in \ref{case:noP0rr}.
    Let $\mu_3'$ be the segment of $\mu_3$ from $h(\ell)$ to $h(s_2)$ and choose $\Gamma = \ell\mu_3'$, thus $\partial C_1 = \Gamma f_2 s_3 \cdots s_1 f_1^-$ and $\partial C_2 = f_1^+ s_2 \Gamma^-$.
    
    We construct $\Gamma$ similarly when both $P_1$ and $P_3$ exist.
    Again, we assume that $s_2$ is non-degenerate or that the segments of $\mu_1$ and $\mu_3$ on the right of $s_2$ do not form a spur, since otherwise $s_2$ can be dragged at the right extreme point of that spur.
    Let $\ell$ be any vertical line segment (on $\calg_{2d}$) strictly on the right of $s_2$ with $t(\ell) \in \mu_1$ and $h(\ell) \in \mu_3$ that does not intersect any polygon in $\OPT(C)$.
    Let $\mu_3'$ defined as above and $\mu_1'$ be the segment of $\mu_1$ from $h(t(s_2))$ to $t(\ell)$ and choose $\Gamma = \mu_1' \ell \mu_3'$, thus $\partial C_1 = \Gamma f_2 s_3 \cdots s_1 f_1$ and $\partial C_2 = s_2 \Gamma^-$.
    %Notice that for both constructions $f_1\mu_1' \subseteq \mu_1$ and $\mu_3' f_2 \subseteq \mu_3$ are both fences, so $C_1$ and $C_2$ constructed as above are proper containers.
    %
    \item \label{case:noP0l}
    \textbf{$P_0$ does not exist and $s_1$ or $s_3$ (or both) lie on the left of $s_2$, and without \ref{case:noP0lP1P3}}
    Assume $s_3$ lies on the left of $s_2$, i.e., $f_2$ is oriented from right to left. Define $h = t = p_h = h(s_2)$, $\gamma_h = \gamma_t = \{h(s_2)\}$ and $p_t$ according to \ref{first}, \ref{second}, and 
    such that any point on the semi-open vertical line segment $\overline{h(s_2)p_t} \setminus \{h(s_2)\}$ satisfies \ref{second} (so $p_t$ lies below $h(s_2)$).
    If now $p_t \in \bigcup_{i\geq3}f_i$ or $P_t$ is protected from the right, proceed as in \ref{case:f3f4f5} or \ref{case:protectedRight}, respectively.
    In the case that $P_t$ is protected from the left, let $\delta_t$ be the top-fence which protects $P_t$ (from the left cutting line $s_j$, $3 \leq j \leq \kappa'$ to $p_t$).
    Then choose $\Gamma = \overline{p_hp_t}\delta_t$ and therefore $\partial C_1 = \Gamma s_j^+ f_j \cdots f_1 s_1$ and $\partial C_2 = f_2 s_2 \dots s_j^- \Gamma^-$ (this case is similar to \ref{case:f1protectedLeft}).
\end{enumerate}

\begin{figure}%
    \centering
    \subfloat[\centering Case \ref{case:noP0rr}.]{{\includegraphics[width=0.40\textwidth, page=2]{Figures_PDF/fig_partitioning.pdf} }}%
    \qquad
    \subfloat[\centering Case \ref{case:noP0l}.]
    {\includegraphics[width=0.40\textwidth, page=1]{Figures_PDF/fig_partitioning.pdf} }%
    \\
    \vspace{30pt}
    \subfloat[\centering Case \ref{case:noP0lP1P3}, when $P_3$ exists and $s_1$ lies on the right of $s_2$.]{{\includegraphics[width=0.40\textwidth, page=1]{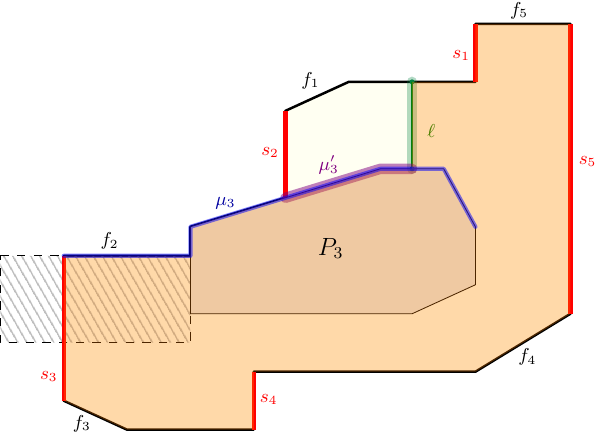} }}%
    \qquad
    \subfloat[\centering Case \ref{case:noP0lP1P3}, when both $P_1$ and $P_3$ exist.]
    {\includegraphics[width=0.40\textwidth, page=2]{Figures_PDF/fig_partitioningSpecial.pdf}}%
    \caption{
    Bipartitions of $C$ if $P_0$ does not exist. The cutting line $\ell$ (green) lies on the right of  $\interior(C_1)$ and on the left of $\interior(C_2)$.
    }
    \label{fig:partitioning_noP0}
\end{figure}

%If the first or last edge of $\Gamma$ (or $\Gamma'$) and the last or first edge of $\calb'$ (resp.\ $\calb \setminus \calb'$) are both vertical edges, then merge them.
For every case, it is clear that $\ell = \Gamma \cap \overline{p_hp_t}$ does not intersect any protected polygon by the criteria \ref{first}, \ref{second} and \ref{third}.
For every case, it is also easy to verify that $\Gamma \setminus \ell$ is a union of fences. Since fences do not intersect any polygon in $\OPT$ by definition,
\ref{strongC} and \ref{strongD} follow from these observations.
Notice also that \ref{stronE} follows from \ref{strongB} and \ref{strongD}: For any $P \in \Pr(C)$ that is protected in $C$ by fences $\sigma_h, \sigma_t$ via $s$, we have $P \in \OPT(C_1)$ or $P \in \OPT(C_2)$ by \ref{strongB} and \ref{strongD}. Assuming the former, then either $\ell$ crosses the fences protecting $P$, so then the segments of $\sigma_h, \sigma_t$ starting from $\ell$ form a new pair of fences in $C_1$ that protect $P$ in $C_1$ via $\ell$, or otherwise, $\sigma_h, \sigma_t$ lie on $C_1$ and protect $P$ in $C_1$ via $s$ as before.
We have $P \in \Pr(C_1)$ in both cases.
Therefore, to finish the proof (for $\kappa = 5$), we need to show \ref{strongB}.

It is simple to verify that the polygons $C_1, C_2$ constructed above are structured, i.e., that their boundaries consist each of at most five disjoint vertical cutting lines and five fences that alternate, for every case treated above.
It remains to show that $C_1$ and $C_2$ are indeed containers, i.e., that they have non-empty interior and no crossings, and that they partition the interior of $C$.

\begin{claim}\label{claim:noSpurs}
    Both $C_1$ and $C_2$ as constructed by Lemma~\ref{lem:partitioningStronger} have non-empty interior.
    %A \emph{spur} is a vertex whose two incident edges overlap \cite{chang2014detecting} (for example, the edges $e_{12}$ and $e_{13}$ in Figure~\ref{fig:weaklySimple}(a) form a spur).
\end{claim}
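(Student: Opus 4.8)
The plan is to show that, in every case of the construction, the boundary of each of $C_1$ and $C_2$ remains non-degenerate after removing spurs, which (as recalled just before the claim) is equivalent to having non-empty interior for a polygon with non-crossing, positively oriented boundary. First I would isolate the one structural feature common to all cases: the partitioning curve $\Gamma$ always contains (or, in the degenerate cases \ref{case:noP0rr}--\ref{case:noP0lP1P3}, is replaced by) a vertical cutting line $\ell$ lying \emph{strictly} between the vertical supporting lines of $C$'s relevant pieces, or at worst on $s_2$ itself but then with genuine curve-segments $\gamma_h,\gamma_t$ or $\delta_h,\delta_t$ attached on the far side. So the key quantity to track is whether $\ell$ is degenerate, and if so whether the fence-pieces glued to its endpoints collapse onto the neighbouring boundary of $C$.

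The main body of the argument is then a short case split mirroring the construction. In cases \ref{case:f1f2}, \ref{case:f3f4f5}, \ref{case:protectedRight}, \ref{case:protectedLeftLeft}, \ref{case:f1protectedLeft} (and the reductions \ref{case:intersectFence}): here $P_0$ exists, so $e_1(P_0)$ is non-degenerate (it is a seeing edge, hence by Definition~\ref{def:chargable} non-degenerate), and $\overline{p_hp_t} \supseteq e_1(P_0)$ has positive length; one checks $\interior(\overline{p_hp_t}) \subseteq \interior(C)$ (noted explicitly in \ref{case:intersectFence}, and immediate otherwise since $p_h$ is at or above $h$ and $p_t$ at or below $t$ with no intervening boundary by condition~\ref{third}), so $\ell = \Gamma \cap \overline{p_hp_t}$ has a nonempty relative interior inside $\interior(C)$. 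This line $\ell$ then witnesses that neither $\partial C_1$ nor $\partial C_2$ can be a spur-free degenerate curve: $\ell$ lies on one side of it, the original portion of $\partial C$ on the other, and these are not collinear (the fences $f_i$ adjacent to cutting lines are non-vertical by Definition~\ref{def:fence}). In case \ref{case:sj}, $\Gamma = \gamma_h$ is a fence containing at least the non-vertical first segment of $\ttop(P_0)$, again non-degenerate, so the same conclusion holds. For the degenerate cases \ref{case:noP0rr} and \ref{case:noP0lP1P3}: $\ell$ is chosen \emph{strictly} on the right of $s_2$ with $t(\ell) \in f_1$ (or $\mu_1$) and $h(\ell) \in f_2$ (or $\mu_3$), and such an $\ell$ exists (the argument given in \ref{case:noP0rr}, relying on nonexistence of $P_0$); strictness gives $\interior(C_2) \neq \emptyset$ since the parallelogram-like region between $s_2$ and $\ell$ is nonempty, and $\interior(C_1) \neq \emptyset$ since $C_1$ still contains all of $\OPT(C)$ which is nonempty ($|\OPT(C)| \geq 2$), in particular a point strictly inside. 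Case \ref{case:noP0l} is handled like \ref{case:f3f4f5}/\ref{case:protectedRight}/\ref{case:f1protectedLeft} after the redefinition, with the additional observation that $p_t$ lies strictly below $h(s_2)$ (by the choice making every point of $\overline{h(s_2)p_t}\setminus\{h(s_2)\}$ satisfy~\ref{second}), so $\overline{p_hp_t}$ is non-degenerate.

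The main obstacle I anticipate is the bookkeeping around degenerate cutting lines $s_j$ of the \emph{original} container $C$: when several of the $s_i$ are points and the adjacent fences partially overlap, one must be sure that after spur-removal the newly created boundary of $C_1$ or $C_2$ does not collapse entirely — this is precisely why the construction repeatedly invokes ``drag $s_2$ to the right extreme point of the spur'' in \ref{case:noP0lP1P3} and why spurs are removed from $C$ at the outset. So the careful step is to verify that these dragging operations, together with the strictness of $\ell$, leave at least one genuine two-dimensional ``pocket'' on each side; I would phrase this as: each of $C_1, C_2$ contains either a sub-arc of $\ell$ with nonempty relative interior together with a transverse (non-collinear) boundary arc, or else contains an entire polygon of $\OPT(C)$ in its interior, and in both situations the region is two-dimensional. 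Since $|\OPT(C)| \geq 2$ and $\Gamma$ separates $C$ into two parts each containing boundary of $C$, at least one polygon of $\OPT(C)$ lies strictly on each side or is cut by $\ell$ but then still contributes interior points on both sides — this dichotomy closes the argument.
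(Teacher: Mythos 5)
Your overall strategy (reduce to the spur-removal criterion and check case by case that each side keeps a non-degenerate piece of boundary) matches the paper's, but the load-bearing step is wrong. You argue in the main cases that $\ell=\Gamma\cap\overline{p_hp_t}$ has non-empty relative interior because $\overline{p_hp_t}\supseteq e_1(P_0)$ and ``$e_1(P_0)$ is a seeing edge, hence by Definition~\ref{def:chargable} non-degenerate.'' That is not justified: $P_0$ is merely a polygon \emph{protected from the left} via $s_2$ (Definition~\ref{def:protected_polygon}), and in the construction of the recursive partition it is typically the polygon being \emph{seen}, not the seer; Definition~\ref{def:chargable} imposes non-degeneracy only on the seer's edge $e_1(P)$, while the protecting fences end at $h(e_1(P_0))$ and $t(e_1(P_0))$ even when $e_1(P_0)$ is a single point (e.g.\ a polygon with a pointed right side). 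Hence $h=t$ is possible, and if moreover $p_h=h$ and $p_t=t$ then $\ell$ is degenerate, so your ``non-empty relative interior of $\ell$ transverse to $\partial C$'' argument collapses exactly in the delicate cases. The paper's proof is organized around this: it notes that the claim is immediate when $\ell$ is non-degenerate, and then supplies separate arguments for the degenerate situations — in \ref{case:protectedRight} that $\gamma_h$ lies strictly above $\gamma_t$ (hence above $f_2$) and $\delta_h$ strictly below $f_j$, in \ref{case:protectedLeftLeft}/\ref{case:f1protectedLeft} that $P_t$ and $P_0$ lie in the interiors of the two parts, and in \ref{case:noP0l} an explicit analysis of $p_h=p_t$.

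A second, related omission: in \ref{case:sj} (and in \ref{case:f3f4f5}) the danger is not that $\Gamma$ is degenerate but that $\Gamma$ might lie \emph{entirely on} $\partial C$ (respectively that $\partial C_1\setminus\Gamma=\emptyset$), in which case one of $C_1,C_2$ would consist only of spurs after removal. Your proposal never addresses this containment scenario; the paper rules it out using the minimality assumption on $\kappa$ made at the start of the proof of Lemma~\ref{lem:partitioningStronger} (no $f_{j}\cup s_{j+1}\cup f_{j+1}$, in particular no $f_5 s_1 f_1$, forms a fence), an ingredient your argument does not invoke. To repair the proposal you would need to drop the claim that $e_1(P_0)$ is non-degenerate and instead treat the degenerate-$\ell$ cases with case-specific arguments (strict above/below relations between fences, protected polygons witnessing interior points, and the minimality of $\kappa$), which is essentially what the paper does.
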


\begin{claimproof}
    We can argue either by proving $\interior(C_i) \neq \emptyset$ directly or by proving that $\partial C_1$ and $\partial C_2$ are non-degenerate curves once all spurs are deleted.
    Since $\partial C$ and $\Gamma$ by themselves contain no spurs (by assumption and construction, respectively), for every construction of $\Gamma$, it suffices to show either $\Gamma \setminus \partial C \neq \emptyset$, or $\partial C_1 \setminus \Gamma \neq \emptyset$ and $\partial C_2 \setminus \Gamma \neq \emptyset$.
    It is simple to verify that $\overline{p_h p_t}$ does not form any spur with $\partial C$, so if $\ell$ is non-degenerate, the claim holds as well.
    We argue case by case.
    \begin{description}
        \item[\ref{case:f1f2}.] \enote{Why the dot? \ant{what do you prefer?}}We have $\interior(\Gamma) = \interior(\ell) \subseteq \interior(C)$, which excludes any spur.
        \item[\ref{case:f3f4f5}.] Assume $p_t \in f_j$, $j\geq 3$. If $\partial C_1 \setminus \Gamma = \emptyset$, we have $\Gamma^- = f_j^+ s_{j+1} \cdots, f_1 s_2^-$, and thus the contradiction that $f_5 s_1 f_1$ forms a fence.
        We have $\interior(C_i) \neq \emptyset$ since $P_0 \in \interior(C_2)$.
        \item[\ref{case:protectedRight}.] Assume $P_h$ is protected from the right. We have $\partial C_2 \setminus \Gamma \neq \emptyset$ because $\gamma_h$ lies (strictly) above $\gamma_t$, which itself lies above $f_2$, so $\gamma_h$ cannot form any spur with $f_2$.
        Similarly, we have $\partial C_1 \setminus \Gamma \neq \emptyset$ because $\delta_h$ cannot form a spur with $f_j$, since $\delta_h$ (which is a bottom-fence) lies (strictly) below $f_j$.
        \item[\ref{case:protectedLeftLeft}.] We have $P_t \in \interior(C_1)$ and $P_0 \in \interior(P_0)$.
        \item[\ref{case:f1protectedLeft}.] Same argument as for \ref{case:protectedLeftLeft}.
        \item[\ref{case:sj}.] We have $\Gamma = \gamma_h \subseteq \partial C$ if and only if $s_j^+$ and $s_2^-$ are singletons and $\Gamma^- = f_j \cdots s_1 f_1$, which contradicts the assumption that $f_5s_1f_1$ cannot form a fence, thus $\Gamma \setminus \partial C \neq \emptyset$.
        \item[\ref{case:intersectFence}.] The argument reduces to the cases \ref{case:f1f2} to \ref{case:f1protectedLeft}.
        \item[\ref{case:noP0rr}.] Same argument as for \ref{case:f1f2}.
        \item[\ref{case:noP0lP1P3}.] The claim follows from the assumption that $f_1$ and $\mu_3'$ or $\mu_1'$ and $\mu_3'$ do not form any spur.
        \item[\ref{case:noP0l}.] We can assume $p_t = p_h$, otherwise $\ell$ is non-degenerate.
        If $p_t \in f_j$, $j \geq 3$, then $\Gamma = \{h(s_2)\}$, so $\partial C_1 \setminus \Gamma \neq \emptyset$ and $\partial C_2 \setminus \Gamma \neq \emptyset$ trivially hold.
        Else, if $P_t$ exists (i.e., $\Gamma = \delta_t$), then $\partial C_1\setminus \Gamma \neq \emptyset$ because $s_2 \in \partial C_1$, and $\partial C_2 \setminus \Gamma \neq \emptyset$ because $f_2 \setminus \Gamma \neq \emptyset$ (either because $P_t$ is protected from the right or because otherwise we are in \ref{case:noP0lP1P3}).
    \end{description}
\end{claimproof}

\begin{claim}\label{claim:weaklySimple}
    The respective boundaries of $C_1$ and $C_2$ contain no crossings. 
\end{claim}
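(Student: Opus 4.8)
The plan is to reduce the claim to two statements about the separating curve $\Gamma$ alone, and then to verify these for each of the constructions \ref{case:f1f2}--\ref{case:noP0l}. First I would record the reduction. In every case the boundary of $C_1$ (resp.\ $C_2$) is the concatenation of $\Gamma$, with one orientation, and an arc $\sigma\subseteq\partial C$ consisting of exactly those pieces of $\partial C$ not used by $\Gamma$, so that $\Gamma$ and $\sigma$ meet only at the two points $t(\Gamma),h(\Gamma)\in\partial C$. A self-crossing of $\partial C_1$ then falls into one of three cases: both strands lie in $\sigma\subseteq\partial C$, which is impossible since $C$ is a container; both strands lie in $\Gamma$, i.e.\ $\Gamma$ self-crosses; or one strand lies in $\Gamma$ and one in $\partial C$, i.e.\ $\Gamma$ crosses $\partial C$ (a would-be crossing located at $t(\Gamma)$ or $h(\Gamma)$ degenerates to one of these). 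Pieces where $\Gamma$ runs \emph{along} $\partial C$ only overlap, never cross, and are removed from $\sigma$, exactly as in the spur-removal discussion; hence they are harmless. So it suffices to prove: (i) $\Gamma$ has no self-crossing, and (ii) $\Gamma$ and $\partial C$ do not cross.

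For the easy half of (ii) I would first observe $\Gamma\setminus\partial C\subseteq\interior(C)$. In every case $\Gamma$ is a concatenation of at most two fences and at most one vertical segment $\ell\subseteq\overline{p_hp_t}$; every fence occurring in $\Gamma$ either lies on $\partial C$ or protects a polygon of $\OPT(C)$ and therefore stays inside $\bar C$; and $\overline{p_hp_t}=\overline{p_hh}\cup e_1(P_0)\cup\overline{tp_t}$, whose middle piece lies in $\bar P_0\subseteq\interior(C)$ because $P_0\in\OPT(C)$, and whose half-open end pieces avoid $\partial C$ entirely by conditions~\ref{first}--\ref{third} (the first point of $\partial C$ they could reach would satisfy~\ref{second}). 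Hence $\ell$ meets $\partial C$ at most at $p_h$ and $p_t$, so it cannot cross $\partial C$.

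The technical core is the rest of (i) and (ii), which I would base on two structural facts about fences, both consequences of convexity of the input polygons (tops and bottoms of convex $d$-DOPs are $x$-monotone staircases, joined inside a fence by \emph{vertical} connectors, since ``$P$ sees $P'$'' forces the top-left corner of $P'$ to lie directly below the top-right corner of $P$): \textbf{(a)} every fence is a simple, weakly $x$-monotone curve; and \textbf{(b)} two fences that appear together in $\Gamma$ do not cross, and the fences $f_1,\dots,f_\kappa$ of $\partial C$ do not cross any fence of $\Gamma$ --- concretely, a top-fence and a bottom-fence whose endpoints are the head and the tail of one common vertical polygon edge bound a vertically non-degenerate channel, so the top-fence stays weakly above the bottom-fence over their common $x$-range (here Definition~\ref{def:protected_polygon} and the choice of $P_0$ with the right-most right edge are used). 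Granting (a) and (b): (i) follows because in each construction the two fence-pieces of $\Gamma$ are such a matching top/bottom pair (one of them possibly degenerating to a point, or being an auxiliary fence such as $\delta_h,\delta_t,\mu_1',\mu_3'$, which sits above or below as required), joined through $\ell$ at the common $x$-coordinate of $h$ and $t$; a self-crossing would force the top-fence piece below the bottom-fence piece at some $x$, contradicting (b). And (ii) follows because $\Gamma\setminus\partial C$ consists of $\ell$ (handled above) together with fence-pieces, which by (b) do not cross $f_1,\dots,f_\kappa$ and cannot cross the vertical cutting lines $s_i$ either (a fence is non-vertical at its end segments and keeps a polygon strictly to one side). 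I would then go once through \ref{case:sj}, \ref{case:intersectFence} and \ref{case:noP0rr}--\ref{case:noP0lP1P3} to check that the spur-removal steps and the temporary redefinitions of $p_h,p_t$ create no crossing, and note that $\kappa\le 4$ needs only the obvious relabelling.

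The step I expect to be the main obstacle is fact~\textbf{(b)}: making precise and proving that the relevant top-fence stays weakly above the relevant bottom-fence throughout their common horizontal extent, especially in cases \ref{case:protectedLeftLeft}, \ref{case:f1protectedLeft} and \ref{case:noP0l}, where the two fences emerge from \emph{different} left cutting lines, so $\Gamma$ is not globally $x$-monotone, and in the degenerate situations where cutting lines vanish and a fence of $\Gamma$ overlaps $\partial C$ along a long stretch. This is precisely where the definition of ``protected from the left/right via $s$'' and the extremal choice of $P_0$ enter, and where Claim~\ref{claim:noSpurs} (non-degeneracy after removing spurs) is reused.
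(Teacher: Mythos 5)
Your high-level reduction (crossings of $\partial C_i$ split into: both strands on $\partial C$, both on $\Gamma$, or one on each) is the same skeleton the paper uses, and your treatment of the two easy parts ($\partial C$-only strands contradict $C$ being a container; $\ell$ meets $\partial C$ only at its endpoints) matches the paper's statements about $\partial C_1\setminus\Gamma$, $\partial C_2\setminus\Gamma$ and about $\ell$. The problem is that everything hard is delegated to your fact~\textbf{(b)}, which you explicitly leave unproven, and whose sketched justification does not cover the situations it is invoked for. The ``vertically non-degenerate channel'' argument only makes sense for a top-fence and a bottom-fence attached to the \emph{same} vertical edge $e_1(P)\cap e_{d+1}(P')$ of a single seeing pair; but the crossings you must exclude are between a fence of $\Gamma$ and an arbitrary fence $f_i$ of $\partial C$ (or a cutting line $s_i$), which in general involve unrelated polygons and share no common vertical edge, so no channel is available. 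Likewise, your one-line dismissal of overlaps (``pieces where $\Gamma$ runs along $\partial C$ only overlap, never cross'') is exactly where a crossing can occur under the paper's definition: two curves may coincide along a stretch and then diverge to opposite sides, and the dangerous stretch here is precisely the vertical connector $e_1(P)\cap e_{d+1}(P')$ inside a fence. So as it stands the proposal has a genuine gap at its core step.

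The paper closes this gap by a different, protection-based argument rather than $x$-monotonicity: a fence runs along tops/bottoms of polygons of $\OPT(C)$, which lie in $\interior(C)$, so a crossing of $\partial C$ through the fence would force $\partial C$ to intersect such a polygon (impossible); and if the crossing passes through the vertical connector joining a seeing pair $P,P'$, it would force one of $P,P'$ to lie inside $C$ and the other entirely outside, again impossible since either both lie in $C$ or one is intersected by a cutting line of $C$. Self-crossings of $\Gamma$ are then handled directly: $\Gamma$ is horizontally monotone in all cases except \ref{case:protectedLeftLeft} and the two-polygon subcase of \ref{case:noP0lP1P3}, where one uses that $\delta_h$ lies above $\delta_t$, respectively $P_1$ above $P_3$. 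If you want to salvage your plan, replace fact~\textbf{(b)} by this ``a fence has a polygon of $\OPT(C)$ strictly on one side, except on its vertical connector, where an inside/outside parity argument applies'' lemma; monotonicity of individual fences (your fact~\textbf{(a)}) is true but is not, by itself, enough.
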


\begin{claimproof}
    For every case, the claim follows from one or many of those statements:
    \begin{enumerate}[label={(\arabic*)}, leftmargin=*]
        \item \label{crossproof1} There is no crossing on $\Gamma$.
        \item \label{crossproof2} There are also no crossings on $\partial C_1 \setminus \Gamma$ and there are also no crossings on $\partial C_2 \setminus \Gamma$.
        \item \label{crossproof3} $\partial C$ and $\ell$ do not cross.
        \item \label{crossproof4} Any fence does in $C$ does not cross $\partial C$.
    \end{enumerate}
    We now prove those statements.
    
    For all cases but \ref{case:protectedLeftLeft} and \ref{case:noP0lP1P3} (when both $P_1$ and $P_3$ exist), $\Gamma$ is horizontally monotonic, so \ref{crossproof1} is trivially true.
    For the two latter cases, the statement is true because $\delta_h$ lies above $\delta_t$ in \ref{case:protectedLeftLeft} and $P_1$ lies above $P_3$ in \ref{case:noP0lP1P3}.

    \ref{crossproof2} holds because if there are two segments of $\partial C_1 \setminus \Gamma$ (or $\partial C_2 \setminus \Gamma$) that form a crossing, then those two segments also lie and form a crossing on $\partial C$, which contradicts the assumption that $C$ is a structured container.

    \ref{crossproof3} holds because $\interior(\ell) \subseteq \interior(C)$.

    To show \ref{crossproof4}, let $\gamma$ be a fence in $C$.
    If $\gamma$ lies only on the top or bottom of one polygon $P \in \OPT(C)$ then a crossing between $\gamma$ and $\partial C$ implies that $\partial C$ to intersect $P$. However, $P$ is protected in $C$ by $\gamma$, which is a contradiction.
    If $\gamma$ lies on top or bottom of two polygons $P, P'$ such that $P$ sees $P'$ and one of them is protected in $C$, then $\gamma$ and $\partial C$ must overlap on $e_1(P) \cap e_{d+1}(P')$. However, this implies that $P$ lies inside $C$ and $P'$ lies fully outside $C$ or vice-versa, which is not possible: either both $P$ and $P'$ lie inside $C$ or one of them is intersected by a cutting line of $C$.

\end{claimproof}

\begin{claim}\label{claim:partition}
    The polygons $C_1$ and $C_2$ partition $C$.
\end{claim}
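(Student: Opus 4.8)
The plan is to verify that $\{C_1,C_2\}$ is a bipartition of $C$ in the sense of Section~\ref{sec:preliminaries}: namely that $\interior(C)\setminus(\partial C_1\cup\partial C_2)=\interior(C_1)\cup\interior(C_2)$ and that $\interior(C_1)\cap\interior(C_2)=\emptyset$. By Claims~\ref{claim:noSpurs} and~\ref{claim:weaklySimple}, together with the already-noted fact that $C_1,C_2$ are structured, each $C_i$ is a weakly simple polygon with well-defined non-empty interior, so only the separation statement remains.

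First I would record the combinatorial shape of $\partial C_1,\partial C_2$ by reading off the explicit boundary expressions given in each of the cases \ref{case:f1f2}--\ref{case:noP0l}. In every case $\Gamma$ is a simple arc (no crossing, by item~\ref{crossproof1} in the proof of Claim~\ref{claim:weaklySimple}) with endpoints $t(\Gamma),h(\Gamma)$ on cutting lines of $\partial C$, and $\partial C$ is split at these two points into two sub-curves $A,B$ with $A\cup B=\partial C$ and $A\cap B$ contained in $\{t(\Gamma),h(\Gamma)\}$ together with the (possibly degenerate) shared pieces of cutting lines; moreover $\partial C_1=\Gamma A$ and $\partial C_2=\Gamma^- B$, up to cyclic rotation and the bookkeeping of the $r_i^\pm$ notation. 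Hence, as point sets, $\partial C_1\cup\partial C_2=\partial C\cup\Gamma$, so $\interior(C)\setminus(\partial C_1\cup\partial C_2)=\interior(C)\setminus\Gamma$, and the identity to prove becomes $\interior(C)\setminus\Gamma=\interior(C_1)\sqcup\interior(C_2)$.

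Then I would check that $\Gamma$ is a simple arc properly embedded in the closed region $\bar C$: its fence pieces $\gamma_h,\gamma_t,\delta_h,\delta_t$ are fences \emph{in} $C$ and hence lie in $\bar C$ by Definition~\ref{def:protected_polygon}, while the vertical piece $\overline{p_hp_t}$ (which equals $\ell$) has $\interior(\overline{p_hp_t})\subseteq\interior(C)$ by conditions \ref{first}--\ref{third}, as noted in the construction; and by items~\ref{crossproof3} and~\ref{crossproof4} of Claim~\ref{claim:weaklySimple}, neither $\ell$ nor the fences cross $\partial C$. After deleting spurs (which, as already observed, changes neither interiors nor the crossing structure), closing $\Gamma$ up through $A$ gives the weakly simple closed curve $\partial C_1$ and through $B$ gives $\partial C_2$; invoking the Jordan-curve property for weakly simple polygons (the same property that makes every container's interior well-defined) yields that each of $\partial C_1,\partial C_2$ bounds a well-defined open region, that these two regions are disjoint, and that their union together with $\Gamma\cap\interior(C)$ is precisely $\interior(C)$. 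Identifying the region bounded by $\partial C_i$ with $\interior(C_i)$ then gives exactly $\interior(C)\setminus\Gamma=\interior(C_1)\sqcup\interior(C_2)$, and the disjointness says $C_1,C_2$ touch (along $\Gamma$ and the shared cutting-line pieces) but do not intersect. Finally, since $\partial C_i\subseteq\bar C$, any path from an interior point of $C_i$ to infinity must meet $\partial C_i\subseteq\bar C$, so $\interior(C_i)\subseteq\interior(C)\subseteq\interior(C^*)$ and $C_1,C_2\in\calc$.

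The step I expect to be the main obstacle is this separation argument in the presence of overlaps: fences may overlap one another, and the vertical piece of $\Gamma$ may coincide with a cutting line (as in \ref{case:sj}, \ref{case:noP0rr}, \ref{case:noP0l}), so $\Gamma$ and $\partial C$ need not form a clean theta-configuration, and $\partial C$ itself is only weakly simple rather than simple. I would deal with this exactly as the rest of the proof does: reduce to the spurless case, and observe that wherever $\Gamma$ overlaps $\partial C$ no further separation of $\interior(C)$ is created, consistently with such overlaps being absorbed into $\partial C_1$ or $\partial C_2$ rather than contributing to $\Gamma\cap\interior(C)$. The remaining verifications --- that $A\cup B=\partial C$ with $A\cap B$ as claimed in each case, and that the stated boundary words for $C_1,C_2$ are mutually consistent --- are routine case-by-case bookkeeping against Figures~\ref{fig:partitioning_others}--\ref{fig:partitioning_noP0}.
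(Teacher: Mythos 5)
Your reduction is fine up to the point where the real work starts: as in the paper, $\partial C_1\cup\partial C_2=\partial C\cup\Gamma$, so the claim amounts to $\interior(C)\setminus\Gamma=\interior(C_1)\sqcup\interior(C_2)$. But you then delegate exactly this statement to a ``Jordan-curve property for weakly simple polygons'' applied to cutting $C$ along $\Gamma$, and no such cutting lemma is available. The reference used in the paper only guarantees that a weakly simple polygon has a well-defined interior; it does not say that cutting a weakly simple region along an arc that merely does not \emph{cross} the boundary produces exactly two regions, nor that these regions are the interiors bounded by the specific boundary words $\Gamma A$ and $\Gamma^- B$ written down in cases \ref{case:f1f2}--\ref{case:noP0l}. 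In the present setting $\Gamma$ is genuinely not a properly embedded arc meeting $\partial C$ only at its two endpoints: its endpoints can lie on fences rather than cutting lines (e.g.\ \ref{case:f1f2}), its fence portions may touch $\partial C$ and other fences at interior points, fences may overlap one another, and $\ell$ may be degenerate. You correctly identify this as the main obstacle, but your resolution --- ``observe that wherever $\Gamma$ overlaps $\partial C$ no further separation of $\interior(C)$ is created'' --- is an assertion of the conclusion, not an argument. Likewise, ``identifying the region bounded by $\partial C_i$ with $\interior(C_i)$'' is itself the content of the claim: one must check that each boundary word is positively oriented around the correct side of $\Gamma$ and bounds a single piece of $\interior(C)\setminus\Gamma$, which is where a degenerate or self-touching configuration could in principle break the construction.

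The paper closes precisely this gap by a different, elementary mechanism: for a point $w\in\interior(C)\setminus\Gamma$ it shoots the vertical line $g_w$ up and down, takes the first boundary points $w_{C'}^h,w_{C'}^t$ of $C'\in\{C_1,C_2\}$ above and below $w$, and uses the orientation of the fences containing them (top fence oriented right to left, bottom fence left to right) as a membership certificate; it then verifies case by case (\ref{case:f1f2} through \ref{case:noP0l}) that exactly one of $C_1,C_2$ admits such a certificate for each $w$. This simultaneously handles the touching/overlap degeneracies and the orientation issue you leave open, without invoking any cut-along-an-arc theorem. To salvage your route you would have to state and prove the weakly simple cutting lemma you appeal to (in the generality of arcs that touch, but do not cross, the boundary and themselves) --- essentially the face-by-face analysis the authors abandoned --- or else fall back on a pointwise membership argument of the paper's kind.
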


\begin{claimproof}
    For any $C' \in \{C, C_1, C_2\}$ and any point $w \notin \partial C$, let $g_w$ be the vertical line with $w \in g_w$, let $w_{C'}^h$ ($w_{C'}^t$) be the lowest (highest) point on $\partial C' \cap g_w$ strictly above (below) $w$.
    Let also $f_{C'}^h$ ($f_{C'}^t$) be a fence on $\partial C'$ with $w_{C'}^h \in f_{C'}^h$ ($w_{C'}^t \in f_{C'}^t$).
    Notice that for every $w \in \interior(C)$, $w_{C}^h$ and $w_{C}^t$ exist and $f_{C}^h$ ($f_{C}^t$) is oriented from right to left (left to right).
    We also have $\interior(\overline{w_{C'}^h w_{C'}^t}) \subseteq \interior(C')$.
    We now argue that for every point $w\in \interior(C)\setminus \Gamma$, we have that
    \begin{enumerate}[label=(\arabic*), leftmargin=*]
        \item\label{part:whwtexist} $w_{C'}^h$ and $w_{C'}^t$ exist.
        \item\label{part:orientation} $f_{C'}^h$ is oriented from right to left and $f_{C'}^t$ is oriented from left to right.
    \end{enumerate}
    for either $C' = C_1$ or $C'=C_2$ (but not both).
    Notice that since $\partial C_1 \cup \partial C_2 = \partial C \cup \Gamma$ by construction, for any point $w\in \interior(C) \setminus \Gamma$, $w_{C'}^h$ and $w_{C'}^t$ must exist for either $C'=C_1$ or $C'= C_2$.
    %Without loss of generality, assume that $C$ is a simple polygon.
    %(otherwise, the argument has to be slightly more explicit if $w_{C}^h$ or $w_{C}^t$ lies on multiple fences, but the argument is identical).
    We argue case by case.
    \begin{description}
        \item[\ref{case:f1f2}.] For any $w \in \interior(C) \setminus \Gamma$, we have $g_w \cap \Gamma = \emptyset$.
        Notice also that $f_1$ is oriented from right to left and $f_2$ is oriented from left to right.
        In particular, for any $w \in \interior(C) \setminus \Gamma$, we have \ref{part:whwtexist} and \ref{part:orientation} for $C' = C_2$ if and only if $w_C^h \in f_1$, $w_C^t \in f_2$ and $w$ lies on the left of $\Gamma$, and \ref{part:whwtexist} and \ref{part:orientation} for $C' = C_1$ otherwise.
        \item[\ref{case:f3f4f5}.] Assume $p_t \in f_j$, $j\geq 3$, so $f_j$ is oriented from left to right.
        Notice that $t(f_j)$ as well as $h(f_i), t(f_i)$ for $2 \leq i \leq j-1$, lie on the left of $\overline{hp_t}$
        %otherwise (for $3 \leq i \leq j-1$), either $\interior(\overline{p_hp_t}) \cap f_i \neq \emptyset$, which contradicts \ref{third} of the construction of $p_t$ (if $f_i$ is above $f_j$), or the left and right cutting lines are not consecutive (if $f_i$ is below $f_j$).
        and below $\gamma_h$.
        
        First assume that $w_{C_1}^t$ exists, so either $w_{C_1}^t = w_C^t \in f_i$, $i \geq j$ or $i=1$, or  $w_{C_1}^t \in \gamma_h$.
        Either way, $w$ lies above or on the right of $\gamma_h$, and in the latter case, so we have $w_{C_1}^h = w_C^h \in  f_i$ for $i \geq j+1$ or $i=1$. and thus \ref{part:whwtexist}.
        We have that $f_{C_1}^h = f_C^h$ is oriented from right to left (since \ref{part:orientation} must hold for $C'=C$), and $f_{C_1}^t = f_C^t$ is oriented from right to left; for the same reason if $w_{C_1}^t = w_C^t$, and because $f_{C_1}^t = \gamma_h$ is oriented from left to right otherwise.
        This confirms \ref{part:orientation} if  $w_{C_1}^t$ exists.
        
        Assume now that $w_{C_2}^t$ exists, so $w_{C_2}^t = w_C^t \in \bigcup_{2\leq i \leq j} f_i$.
        If $w_C^h \in \partial C_2$, we have $w_{C_2}^h = w_C^h$ and the claim is clear.
        Otherwise, $\interior(\overline{w_{C}^h w_{C}^t})$ crosses $\gamma_h$ so $w_{C_2}^h \in \Gamma^-$ and \ref{part:whwtexist} and \ref{part:orientation} hold for $C' = C_2$, since $\Gamma^-$ is oriented from right to left.
        \item[\ref{case:protectedRight}.] Assume that $P_h$ is protected from the right and without loss of generality $j=5$.
        Notice that $\Gamma$ is horizontally monotone.
        If $w$ lies above $\Gamma$, then $w_{C}^h \in f_1 \cup f_5$, therefore $w_{C_1}^h = w_{C}^h$ and $w_{C_1}^t \in \Gamma$ exist, which confirms \ref{part:whwtexist}.
        Since $\Gamma = \gamma_h$ is oriented from left to right and $f_1$ and $f_5$ are oriented from right to left, \ref{part:orientation} holds as well.
        The same argument can be used if $w$ lies below $\Gamma$.
        \item[\ref{case:protectedLeftLeft}.] Let $j \in [\kappa']$ such that $P_t$ is protected via $s_j$.
        We make a similar observation as in the argument of \ref{case:f3f4f5}:
        $f_j$ is oriented from left to right, $t(f_j)$, $h(f_5)$ as well as $h(f_i), t(f_i)$ for $i \in [j-1]$, lie on the left of $\overline{p_hp_t}$.
        Therefore, we can apply the same idea as for \ref{case:f3f4f5}:
        if $w_{C_1}^t$ exists, we have $w_{C_1}^h = w_C^h \in \bigcup_{i\geq j+1} f_i$ or $w_{C_1}^h \in \delta_t$, which fulfils \ref{part:whwtexist}. \ref{part:orientation} holds because $f_C^h$ and $\delta_t$ are oriented from right to left and $f_C^t$ and $\delta_h$ are oriented from left to right.

        If $w_{C_2}^t$ exists, then either $w_{C}^h$ lies on the left of $s_1$ and so $w_{C_2}^h = w_C^h \in \bigcup_{i \leq j-1}$ or $\interior(\overline{w_{C}^h w_{C}^t})$ crosses $\delta_h$ so $w_{C_2}^h \in \Gamma^-$. Either way, this implies \ref{part:whwtexist}.
        \ref{part:orientation} is then simple to verify.
        \item[\ref{case:f1protectedLeft}.] Same argument as for \ref{case:f3f4f5}.
        \item[\ref{case:sj}.] Same argument as for \ref{case:protectedRight}.
        \item[\ref{case:intersectFence}.] The argument reduces to the cases \ref{case:f1f2} to \ref{case:f1protectedLeft}.
        \item[\ref{case:noP0rr}.] Same argument as for \ref{case:f1f2}.
        \item[\ref{case:noP0lP1P3}.] Assume first that $P_3$ exists and that $s_1$ lies on the right of $s_2$.
        Clearly, for any $w\in \interior(C) \setminus \Gamma$, we have \ref{part:whwtexist} and \ref{part:orientation} for $C'=C_2$ if and only if $w$ lies on the right of $s_2$, on the left of $\ell$ and (strictly) between $f_1$ and $\mu_3'$ and  \ref{part:whwtexist} and \ref{part:orientation} hold for $C'=C_1$ otherwise.
        The argument is similar if both $P_1$ and $P_3$ exist (substitute $f_1$ by $\mu_1'$).
        \item[\ref{case:noP0l}.] The argument reduces to the cases \ref{case:f1f2} to \ref{case:f1protectedLeft}.
    \end{description}
\end{claimproof}

\noindent\textbf{Instances with $\kappa \leq 4$.}
We conclude the proof of the partitioning lemma by discussing instances with $\kappa \leq 4$.
For $\kappa = 4$ and $\kappa = 3$, we adhere to the construction of $\Gamma$ for $\kappa = 5$ above.
The case analysis is identical as for $\kappa = 5$, except that some cases cannot occur (for example, $\kappa' = 2$ rules out \ref{case:protectedLeftLeft}).

For $\kappa = 2$, we define $P_0, h, t, p_h, p_t$ as above, except that $P_0$ is now protected via $s_1$ (since $s_2$ must be a right cutting line).
If we have $\gamma_h = f_1$ and $\gamma_t = f_2$, since $|\OPT(C)| \geq 2$ by assumption, this means that there is a polygon $P_0' \in \OPT(C)$ that touches $s_1$ and sees $P_0$, so we can set $\Gamma = e_{1}(P_0') \cap e_{d+1}(P_0)$ and thus $\partial C_1 = \partial P_0'$ and $\partial C_2 = \partial P_0$.
Otherwise, we construct $\Gamma$ as before (again by substituting $s_2$ with $s_1$ and $s_j$ with $s_2$ for $j\neq 2$).
%Notice that we can only encounter the cases \ref{case:noP0rr}, \ref{case:intersectFence}, \ref{case:sj}, \ref{case:f1f2} and \ref{case:protectedRight}.

Lastly, the proofs of Claim~\ref{claim:noSpurs}, Claim~\ref{claim:weaklySimple} and Claim~\ref{claim:partition} are essentially independent of $\kappa$ and can be easily be adapted and verified for instances with $\kappa \leq 4$.
\end{proof}

\end{document}